\long\def\symbolfootnote[#1]#2{\begingroup%
\def\thefootnote{\fnsymbol{footnote}}\footnote[#1]{#2}\endgroup}
\DeclareMathOperator{\Expect}{\mathbb{E}}
\newcommand{\Trace }[1]{\mbox{}{\bf{Tr}}\left(#1\right)}
\newcommand{\FNorm }[1]{\mbox{}\|#1\|_{\mathrm{F}}  }
\newcommand{\FNormS}[1]{\mbox{}\|#1\|_{\mathrm{F}}^2}
\newcommand{\TNorm }[1]{\mbox{}\|#1\|_2  }
\newcommand{\TNormS}[1]{\mbox{}\|#1\|_2^2}
\newtheorem{theorem}{\bf Theorem}[]
\newtheorem{lemma}[theorem]{Lemma}
\newtheorem{definition}[theorem]{Definition}
\newcommand{\transp}{^{\textsc{T}}}
\newcommand{\mat}[1]{ {\ensuremath{\mathbf{#1} }}}
\newcommand{\abs }[1]{\left|#1\right|}
\newcommand{\var}[1]{\text{Var}\ensuremath{\left[#1\right] } }
\def\rank{\hbox{\rm rank}}
\def\p{{\mathbf p}}
\def\e{{\mathbf e}}
\def\matA{\mat{A}}
\def\matB{\mat{B}}
\def\matC{\mat{C}}
\def\matD{\mat{D}}
\def\matE{\mat{E}}
\def\matH{\mat{H}}
\def\matI{\mat{I}}
\def\matP{\mat{P}}
\def\matQ{\mat{Q}}
\def\matR{\mat{R}}
\def\matS{\mat{S}}
\def\matU{\mat{U}}
\def\matV{\mat{V}}
\def\matW{\mat{W}}
\def\matX{\mat{X}}
\def\matY{\mat{Y}}
\def\matZ{\mat{Z}}
\def\matSig{\mat{\Sigma}}
\def\matOmega{\mat{\Omega}}
\def\matTheta{\mat{\Theta}}
\DeclareMathSymbol{\Prob}{\mathbin}{AMSb}{"50}
\newcommand\remove[1]{}
\def\math#1{$#1$}
\def\frac#1#2{{#1\over #2}}
\def\eqan#1{\begin{eqnarray*}
#1
\end{eqnarray*}}
\DeclareMathSymbol{\R}{\mathbin}{AMSb}{"52}
\def\cl#1{{\cal #1}}
\def\argmin{\mathop{\hbox{argmin}}\limits}
\def\x{{\mathbf x}}
\def\a{{\mathbf a}}
\def\norm#1{{\|#1\|}}
\def\ceil#1{{\left\lceil\,#1\,\right\rceil}}
\def\dotfil{\leaders\hbox to 1.5mm{.}\hfill}
\newcounter{rmnum}
\def\RN#1{\setcounter{rmnum}{#1}\uppercase\expandafter{\romannumeral\value{rmnum}}}
\def\rn#1{\setcounter{rmnum}{#1}\expandafter{\romannumeral\value{rmnum}}}
\newcommand{\vct}[1]{\bm{#1}}
\newcommand{\ignore}[1]{}
\begin{document}

\title{\bf Randomized Dimensionality Reduction for $k$-means Clustering}

\author{
{\bf Christos Boutsidis} \\
Yahoo Labs \\
New York, NY \\
\texttt{boutsidis@yahoo-inc.com}
\and
{\bf Anastasios Zouzias}\\
Mathematical \& Computational Sciences\\
IBM Z\"{u}rich Research Lab\\
\texttt{azo@zurich.ibm.com}
\and
{\bf Michael W. Mahoney}\\
Department of Statistics\\
UC Berkeley\\
\texttt{mmahoney@stat.berkeley.edu}
\and
{\bf Petros Drineas}\\
Computer Science Department \\
Rensselaer Polytechnic Institute \\
\texttt{drinep@cs.rpi.edu}
}

\maketitle

\begin{abstract}
We study the topic of dimensionality reduction for $k$-means clustering. Dimensionality reduction encompasses the union of two approaches: \emph{feature selection} and \emph{feature extraction}. A feature selection based algorithm for $k$-means clustering selects a small subset of the input features and then applies $k$-means clustering on the selected features. A feature extraction based algorithm for $k$-means clustering constructs a small set of new artificial features and then applies $k$-means clustering on the constructed features. Despite the significance of $k$-means clustering as well as the wealth of heuristic methods addressing it, provably accurate feature selection methods for $k$-means clustering are not known. On the other hand, two provably accurate feature extraction methods for $k$-means clustering are known in the literature; one is based on random projections and the other is based on the singular value decomposition (SVD).
This paper makes further progress towards a better understanding of dimensionality reduction for $k$-means clustering. Namely, we present the first provably accurate feature selection method for $k$-means clustering and, in addition, we present two feature extraction methods. The first feature extraction method is based on random projections and it improves upon the existing results in terms of time complexity and number of features needed to be extracted. The second feature extraction method is based on fast approximate SVD factorizations and it also improves upon the existing results in terms of time complexity. The proposed algorithms are randomized and provide constant-factor approximation guarantees with respect to the optimal $k$-means objective value.
\end{abstract}

\section{Introduction}\label{sec:intro}
Clustering is ubiquitous in science and engineering with numerous application domains ranging from bio-informatics and
medicine to the social sciences and the web~\cite{Har75}. Perhaps
the most well-known clustering algorithm is the so-called
``$k$-means'' algorithm or Lloyd's method \cite{Llo82}. Lloyd's method is an
iterative expectation-maximization type approach that attempts
to address the following objective: given a set of Euclidean points and a positive integer $k$ corresponding to the number of
clusters, split the points into $k$ clusters so that the total
sum of the squared Euclidean distances of each point to its
nearest cluster center is minimized. Due to this intuitive objective as well as its \emph{effectiveness}~\cite{ORSS06},  the 
Lloyd's method for $k$-means clustering has become enormously popular in applications~\cite{Wu07}.  

In recent years, the high dimensionality of modern massive
datasets has provided a considerable challenge to the design of efficient algorithmic solutions for $k$-means clustering. First, ultra-high dimensional data force existing algorithms for $k$-means clustering to be computationally inefficient, and second, the
existence of many irrelevant features may not allow the
identification of the relevant underlying structure in the data
\cite{GGBD05}. Practitioners have addressed these obstacles by
introducing feature selection and feature extraction techniques.
Feature selection selects a (small) subset
of the actual features of the data, whereas feature
extraction constructs a (small) set of artificial features based on the original features. Here, we consider a rigorous approach to feature selection and feature extraction for $k$-means clustering. Next, we describe the mathematical framework under which we will study such dimensionality reduction methods.

Consider $m$ points $\mathcal{P} = \{ \p_1, \p_2, \dots, \p_m \} \subseteq \R^n$ and an integer
$k$ denoting the number of clusters. The objective of
$k$-means is to find a $k$-partition of
$\mathcal{P}$ such that points that are ``close'' to each other belong to the same cluster and points
that are ``far'' from each other belong to different clusters.
A $k$-partition of $\mathcal{P}$ is a collection
$\cl{S}=\{\mathcal{S}_1, \mathcal{S}_2, \dots, \mathcal{S}_k\}$
of \math{k} non-empty pairwise disjoint
sets which covers \math{\cl{P}}.
Let $s_j=|\mathcal{S}_j|$ be the size of $\mathcal{S}_j$ ($j=1,2,\dots,k$).
For each set \math{S_j}, let \math{\bm\mu_j\in\R^n} be its centroid:
$$\bm\mu_j= \frac{\sum_{\p_i\in S_j}\p_i}{s_j}.$$
The $k$-means objective function is
$$
\mathcal{F}(\mathcal{P}, \cl{S}) =
\sum_{i=1}^m\norm{\p_i-\bm\mu(\p_i)}_2^2,
$$
where \math{\bm\mu(\p_i) \in \R^n} is the centroid of the cluster to which \math{\p_i} belongs.
The objective of $k$-means clustering is to compute the optimal $k$-partition of the points in $\mathcal{P}$,
$$\cl{S}_{opt}= \argmin_{\cl{S}} \cl{F} (\mathcal{P}, \cl{S}).$$
Now, the goal of dimensionality reduction for $k$-means clustering is to construct points
$$\mathcal{\hat{P}} = \{ \hat{\p}_1, \hat{\p}_2, \dots, \hat{\p}_m \} \subseteq \R^r$$
(for some parameter $r \ll n$) so that $\mathcal{\hat{P}}$ approximates the clustering structure of $\mathcal{P}$. Dimensionality reduction via feature selection constructs the $\hat{\p}_i$'s by selecting actual features of
the corresponding $\p_i$'s, whereas dimensionality reduction via feature extraction constructs new artificial features based on the original features. More formally, assume that the optimum $k$-means partition of the points in $\mathcal{\hat{P}}$ has been computed
\[
\hat{\cl{S}}_{opt} = \argmin_{\cl{S}} \cl{F} (\mathcal{\hat{P}}, \cl{S}).
\]
A dimensionality reduction algorithm for $k$-means clustering constructs a new set $\mathcal{\hat{P}}$ such that
\[ \cl{F} (\mathcal{P}, \hat{\cl{S}}_{opt}) \leq \gamma \cdot \cl{F} (\mathcal{P}, \cl{S}_{opt})\]
where $\gamma>0$ is the approximation ratio of $\hat{\cl{S}}_{opt}$. In other words, we require that computing an optimal partition $\hat{\cl{S}}_{opt}$ on the projected low-dimensional data and plugging it back to cluster the high dimensional data, will imply a $\gamma$ factor approximation to the optimal clustering. Notice that we measure approximability by evaluating the $k$-means objective function, which is a well studied approach in the literature~\cite{OR99,KSS04,HM04,PK05,FS06,ORSS06,AV07}. Comparing the structure of the actual clusterings $\hat{\cl{S}}_{opt}$ to
$\cl{S}_{opt}$ would be much more interesting but our techniques do not seem to be helpful towards this direction.
However, from an empirical point of view~(see Section~\ref{sec:experiments}), we do compare $\hat{\cl{S}}_{opt}$ directly to $\cl{S}_{opt}$ observing favorable results. 

\begin{table}[htdp]
\begin{center}
\begin{tabular}{|c|c|c|c|c|}
\hline
\textbf{Reference}    & \textbf{Description}     & \textbf{Dimensions} & \textbf{Time = $O(x), x =$} & \textbf{Approximation Ratio}          \\
\hline
 Folklore        & RP  & $O(\log(m)/\varepsilon^2)$        & $mn \lceil \varepsilon^{-2} \log(m)/ \log(n) \rceil$   & $1+\varepsilon$ \\
\hline
\cite{DFKVV99}   & Exact SVD & $k$                                 & $mn\min\{m,n\}$ & $2$    \\
\hline
\cite{FSS13}   & Exact SVD & $O(k/\varepsilon^2)$                                 & $mn\min\{m,n\}$ & $1+\varepsilon$    \\
\hline

Theorem~\ref{fastkmeans}        & RS & $O(k \log(k)/\varepsilon^2)$ & $mnk/\varepsilon $ & $3+\varepsilon$ \\
\hline
Theorem~\ref{thm:second_result}     & RP  & $O(k / \varepsilon^2)$                & $mn \lceil \varepsilon^{-2} k/ \log(n) \rceil$   & $2+\varepsilon$ \\
\hline
Theorem~\ref{thm:first_result}    & Approx. SVD  & $k$                & $mnk/\varepsilon$   & $2+\varepsilon$\\
\hline
\end{tabular}
\end{center}
\caption{\footnotesize
Provably accurate dimensionality reduction methods for $k$-means clustering. RP stands for Random Projections, and RS stands for Random Sampling. The third column corresponds to the number of selected/extracted features; the fourth column corresponds to the time complexity of each dimensionality reduction method; the fifth column corresponds to the approximation ratio of each approach. %In the third row of the table, $t_0 = k \log(k) \varepsilon^{-2} \log(k \log(k) \varepsilon^{-1})$.
}
\label{tb:sum}
\end{table}

\subsection{Prior Work}\label{sec:prior}

Despite the significance of dimensionality reduction in the context of clustering, as well as the wealth of heuristic methods addressing it~\cite{GE03}, to the best of our knowledge there are no provably accurate feature selection methods for $k$-means clustering known. On the other hand, two provably accurate feature extraction methods are known in the literature that we describe next.

First, a result by~\cite{JL84} indicates that one can
construct $r = O( \log(m) / \varepsilon^{2})$ artificial features with Random Projections and,
with high probability, obtain a $(1+\varepsilon)$-approximate clustering. The algorithm implied by~\cite{JL84}, which is a random-projection-type algorithm,  is as follows: 
let $\matA \in \R^{m \times n}$ contain the points
$\mathcal{P} = \{ \p_1, \p_2, \dots, \p_m \} \subseteq \R^n$ as its rows; then, multiply $\matA$ from the right with
a random projection matrix $\matR \in \R^{n \times r}$ to construct $\matC = \matA \matR \in \R^{m \times r}$
containing the points $\mathcal{\hat{P}} = \{ \hat{\p}_1, \hat{\p}_2, \dots, \hat{\p}_m \} \subseteq \R^r$ as its rows (see Section~\ref{chap319} for a definition of a random projection matrix). The proof of this result is immediate mainly due to the Johnson-Lindenstrauss lemma~\cite{JL84}. \cite{JL84} proved
that all the pairwise Euclidean distances of the points of $\mathcal{P}$ are preserved within a multiplicative
factor $1 \pm \varepsilon$. So, any value of the $k$-means objective function, which depends only on pairwise
distances of the points  from the corresponding center point, is preserved within a factor $1 \pm \varepsilon$
in the reduced space. 
Second, \cite{DFKVV99} argues that one can construct $r = k$ artificial features using the SVD,
in $O( mn \min\{ m,n \} )$ time, to obtain a $2$-approximation on the clustering quality.
The algorithm of~\cite{DFKVV99} is as follows: given $\matA \in \R^{m \times n}$ containing the points of
$\mathcal{P}$ and $k$, construct $\matC = \matA \matV_{k} \in \R^{m \times k}$. Here,
$\matV_{k} \in \R^{n \times k}$ contains the top $k$ right singular vectors of $\matA$.
The proof of this result will be (briefly) discussed in Sections~\ref{sec:in} and~\ref{sec6}.

Finally, an extension of the latter SVD-type result (see Corollary 4.5 in~\cite{FSS13}) argues that $O(k/\varepsilon^2)$ dimensions (singular vectors) suffice for a relative-error approximation. 

\subsection{Summary of our Contributions}
We present the first provably accurate feature selection algorithm
for $k$-means (Algorithm~\ref{alg:featureSelect}). Namely, Theorem \ref{fastkmeans} presents
an $O( mnk\varepsilon^{-1} + k \log(k) \varepsilon^{-2} \log(k \log(k) \varepsilon^{-1}))$ time
randomized algorithm that, with constant probability, achieves a
$(3+\varepsilon)$-error with $r = O( k \log(k) /\varepsilon^{2} )$ features. Given $\matA$ and $k$,
the algorithm of this theorem first computes $\matZ \in \R^{n \times k}$, which approximates
$\matV_k \in \R^{n \times k}$ which contains the top $k$ right singular vectors of $\matA$
\footnote[2]{\cite{BMD09c} presented an unsupervised feature selection algorithm by working with the matrix
$\matV_k$; in this work, we show that the same approximation bound can be achieved by working with a matrix that
approximates $\matV_k$ in the sense of low rank matrix approximations (see Lemma~\ref{tropp2}).}.
Then, the selection of the features (columns of $\matA$)
is done with a standard randomized sampling approach with replacement with probabilities that are computed
from the matrix $\matZ$. The proof of Theorem \ref{fastkmeans} is a synthesis of ideas from~\cite{DFKVV99} and~\cite{RV07},
which study the paradigm of dimensionality reduction for $k$-means clustering and the paradigm of randomized sampling, respectively. 

Moreover, we describe a random-projection-type feature extraction algorithm:
Theorem~\ref{thm:second_result} presents an $O(m n \lceil \varepsilon^{-2} k / \log(n) \rceil )$ time
algorithm that, with constant probability,
achieves a $(2+\varepsilon)$-error with $r = O( k /\varepsilon^{2})$ artificial features.
We improve the folklore result of the first row in Table~\ref{tb:sum} by means of showing that a
smaller number of features are enough to obtain
an approximate clustering. The algorithm of Theorem \ref{thm:second_result} is the same
as with the one in the standard result for random projections that we outlined in the prior work section but uses only $r = O( k / \varepsilon^{2})$ dimensions for the random projection matrix. Our proof relies on ideas from~\cite{DFKVV99} and~\cite{Sar06}, which study the paradigm of dimension reduction for $k$-means clustering and the paradigm of speeding up linear algebra problems, such as the low-rank matrix approximation problem, via random projections, respectively. 

Finally, Theorem \ref{thm:first_result} describes a feature extraction algorithm
that employs approximate SVD decompositions and constructs $r = k$ artificial features
in $O(m n k /\varepsilon)$ time such that, with constant probability, the clustering
error is at most a $2+\varepsilon$ multiplicative factor from the optimal.
We improve the existing SVD dimensionality reduction method by showing that
fast approximate SVD gives features that can do almost as well
as the features from the exact SVD.
Our algorithm and proof are similar to those in~\cite{DFKVV99}, but
we show that one only needs to compute an approximate SVD of $\matA$.

We summarize previous results as well as our results in Table~\ref{tb:sum}.

\section{Linear Algebraic Formulation and our Approach}
\subsection{Linear Algebraic Formulation of $k$-means}\label{sec:problem}
From now on, we will switch to a 
linear algebraic formulation of the $k$-means clustering problem following the notation used in the introduction.
Define the data matrix
\math{\matA \in \R^{m \times n}} whose rows correspond to the data points,
$$\matA\transp=[\p_1,\ldots,\p_m] \in \R^{n \times m}.$$
We represent a $k$-clustering \math{\cl{S}} of $\matA$ by its \emph{cluster indicator matrix} $\matX \in \R^{m \times k}$.
Each column \math{j=1,\ldots,k} of \math{\matX} corresponds to a cluster.
Each row \math{i=1,\ldots,m}
indicates the
cluster membership of the point \math{\p_i \in \R^m}. So,
\math{\matX_{ij}=1/\sqrt{s_j}} if and
only if data point \math{\p_i} is in cluster \math{S_j}. Every row of $\matX$ has
exactly one non-zero element, corresponding to the cluster the data
point belongs to. There are \math{s_j} non-zero elements in column
\math{j} which indicates the data points belonging to cluster
\math{S_j}. By slightly abusing notation, we define 
$$\mathcal{F}(\matA,\matX) := \FNormS{\matA - \matX \matX\transp \matA}.$$ Hence, for any cluster indicator matrix $\matX$, the following identities hold
\[
\cl F(\matA,\matX)
= \sum_{i=1}^{m} \norm{ \p_i\transp - \matX_{i} \matX\transp\matA }_2^2
= \sum_{i=1}^m\norm{\p_i\transp-\bm\mu(\p_i)\transp}_2^2 
= \mathcal{F}(\mathcal{P}, \cl S),
\]
where we define \math{\matX_i} as the \math{i}th row of \math{\matX} and
we have used the identity $\matX_{i} \matX\transp\matA=\bm\mu(p_i)\transp,$ for $i=1,...,m$.
This identity is true because \math{\matX\transp\matA} is a matrix whose row \math{j}
is \math{\sqrt{s_j}\bm\mu_j}, proportional to the centroid of the
\math{j}th cluster; now, \math{\matX_{i}} picks the row corresponding to its
non-zero element, i.e., the cluster corresponding to point \math{i}, and scales
it by \math{1/\sqrt{s_j}}.
%After some elementary algebra, one can verify that
%for $i=1,\dots ,m,$ $\p_{i}\transp \matX\transp\matA=\bm\mu(\p_i)\transp$.
In the above, $\mu(\p_i) \in \R^m$ denotes the centroid of the cluster of which the point $\p_i$ belongs to. 
Using this formulation, the goal of $k$-means is to find
$\matX$ which minimizes \math{\FNormS{\matA - \matX \matX\transp \matA}}.

To evaluate the quality of different clusterings,
we will use the \math{k}-means objective function. Given some
clustering \math{\hat\matX}, we are interested
in the ratio $\cl{F}(\matA,\hat\matX)/\cl{F}(\matA,\matX_{\mathrm{opt}}),$
where \math{\matX_{\mathrm{opt}}} is an optimal clustering of $\matA$. The choice
of evaluating a clustering under this framework is not new. In fact,~\cite{OR99,KSS04,HM04,PK05,FS06,ORSS06}
provide results (other than dimensionality reduction methods) along the same lines.
Below, we give the definition of the $k$-means problem.
%Recall that our primarily goal in this paper is to develop
%techniques that select or extract features from the data; we do not design algorithms to cluster the data per se.
%A $k$-means approximation algorithm is useful in our discussion since it will be used
%to evaluate the quality of the clusterings that can be obtained after our dimensionality reduction techniques,
%so we include this definition as well.
\begin{definition}
\textsc{[The $k$-means clustering problem]}
Given $\matA \in \mathbb{R}^{m \times n}$ (representing $m$ data
points -- rows -- described with respect to $n$ features --
columns) and a positive integer $k$ denoting the number of
clusters, find the indicator matrix $\matX_{\mathrm{opt}} \in \R^{m \times k}$ which satisfies,
$$
\matX_{\mathrm{opt}} = \argmin_{\matX \in \cal{X}} \FNormS{\matA - \matX \matX\transp \matA}.
$$
The optimal value of the $k$-means clustering objective is
$$
\cl{F}(\matA,\matX_{\mathrm{opt}}) = \min_{\matX \in \cal{X}} \FNormS{\matA - \matX \matX\transp \matA} = \FNormS{\matA - \matX_{\mathrm{opt}} \matX_{\mathrm{opt}}\transp\matA} = \mathrm{F}_{\mathrm{opt}}.$$
In the above, $\cal{X}$ denotes the set of all $m \times k$ indicator matrices $\matX$.
\end{definition}

Next, we formalize the notation of a ``$k$-means approximation algorithm''.
\begin{definition} \label{def:approx}
\textsc{[$k$-means approximation algorithm]}
An algorithm is called a ``$\gamma$-approximation'' for the $k$-means clustering problem ($\gamma \geq 1$) if
it takes inputs the dataset $\matA \in \R^{m \times n}$ and the number of clusters $k$, and returns an indicator matrix
$\matX_{\gamma} \in \R^{m \times k}$ such that
w.p. at least $1 - \delta_{\gamma}$,
$$
\FNormS{\matA - \matX_{\gamma} \matX_{\gamma}\transp \matA} \leq \gamma \min_{\matX \in
\cal{X}} \FNormS{\matA - \matX \matX\transp \matA} = \gamma \cdot \cl{F}(\matA,\matX_{\mathrm{opt}}) = \gamma \cdot \mathrm{F}_{\mathrm{opt}}.
$$
%We will denote the running time of such an algorithm with $T_{\gamma}(m, n, k, \gamma, \delta_{\gamma})$.
\end{definition}
\vspace{-.1in}
\noindent
An example of such an approximation algorithm for $k$-means is in~\cite{KSS04}
with $\gamma = 1+\varepsilon$ ($0 < \varepsilon < 1$),
and $\delta_{\gamma}$ a constant in $(0,1)$.
The corresponding running time is $O(m n \cdot 2^{(k/\varepsilon)^{O(1)}})$.

Combining this algorithm (with $\gamma = 1+\varepsilon$) with, for example, our dimensionality reduction method in Section~\ref{sec5}, would result in an algorithm that preserves the clustering within a factor of 
$2 + \varepsilon$, for any $\varepsilon \in (0,1/3)$, and runs in total time 
$O( m n \lceil \varepsilon^{-2} k/ \log(n) \rceil + k n 2^{(k/\varepsilon)^{O(1)}} / \varepsilon^2 )$. Compare this with the complexity of running this algorithm on the high dimensional data and notice that reducing the dimension from $n$ to $O(k/\varepsilon^2)$ leads to a considerably faster algorithm.
In practice though, the Lloyd
algorithm~\cite{Llo82,ORSS06} is very popular and although it does not admit a 
worst case theoretical analysis, it empirically does well. 
We thus employ the Lloyd algorithm for our experimental evaluation of our algorithms in Section~\ref{sec:experiments}. Note that, after using, for example, the dimensionality reduction method in 
Section~\ref{sec5}, the cost of the Lloyd heuristic is only $O( m k^2 /\varepsilon^2 )$ per iteration. 
This should be compared to the cost of  $O( k m n )$ per iteration if applied on the original high dimensional data. Similar run time improvements arise if one uses the other dimension reduction algorithms proposed in this work. 

\subsection{Our Approach}\label{sec:in}
The key insight of our work is to view the $k$-means problem from the above linear algebraic perspective. In this setting, the data points are rows in a matrix $\matA$ and
feature selection corresponds to selection of a subset of columns from $\matA$. Also, feature extraction can be viewed as the construction of a matrix $\matC$ which contains the constructed features. Our feature extraction algorithms are linear, i.e., the matrix $\matC$ is of the form $\matC = \matA \matD$, for some matrix
$\matD$; so, the columns in $\matC$ are linear combinations of the columns of $\matA$,
i.e., the new features are linear combinations of the original features.

Our work is inspired by the SVD feature extraction algorithm of~\cite{DFKVV99},
which also viewed the $k$-means problem from a linear algebraic perspective. The main
message of the result of~\cite{DFKVV99} (see the algorithm and the analysis in Section 2 in~\cite{DFKVV99}) 
is that any matrix $\matC$ which can be used
to approximate the matrix $\matA$ in some low-rank matrix approximation sense
can also be used for dimensionality reduction in $k$-means clustering. 
We will now present a short proof of this result to better understand its implications in our
dimensionality reduction algorithms. 

Given $\matA$ and $k$, the main algorithm of~\cite{DFKVV99} constructs
$\matC = \matA \matV_k$, where $\matV_k$ contains the top $k$ right singular vectors of $\matA$. Let $\matX_{opt}$ and $\hat{\matX}_{opt}$ be the cluster indicator matrices that corresponds to the optimum partition corresponding to the rows of $\matA$
and  the rows of $\matC$, respectively. In our setting for dimensionality reduction,
we compare $\cl{F}(\matA,\hat{\matX}_{opt})$ to $\cl{F}(\matA,\matX_{opt})$. From the SVD of $\matA$, consider
$$
\matA =  \underbrace{\matA\matV_k\matV_k\transp}_{\matA_k} + \underbrace{\matA - \matA\matV_k\matV_k\transp}_{\matA_{\rho-k}}.
$$
Also, notice that for any cluster indicator matrix $\hat{\matX}_{opt}$
\[
\left( \left( \matI_{m } - \hat{\matX}_{opt}\hat{\matX}_{opt}\transp\right) \matA_k \right)
\left( \left( \matI_{m } - \hat{\matX}_{opt}\hat{\matX}_{opt}\transp\right) \matA_{\rho-k} \right)\transp
= {\bf 0}_{m \times m},
\]
because $\matA_k \matA_{\rho-k}\transp = {\bf 0}_{m \times m}$. Combining these two steps and by orthogonality, it follows that
\[
\FNorm{\matA - \hat{\matX}_{opt} \hat{\matX}_{opt}\transp \matA}^2\ =\ \underbrace{\FNorm{(\matI_m -
\hat{\matX}_{opt} \hat{\matX}_{opt}\transp ) \matA_k}^2}_{\theta_{\alpha}^2}\ +\ \underbrace{\FNorm{(\matI_m - \hat{\matX}_{opt} \hat{\matX}_{opt}\transp )\matA_{\rho-k}}^2}_{\theta_{\beta}^2}.
\]
We now bound the second term of the later equation. $\matI_m-\hat{\matX}_{opt} \hat{\matX}_{opt}\transp $ is a projection
matrix, so it can be dropped without increasing the Frobenius norm. Hence, by using this and the fact that $\matX_{\mathrm{opt}}\matX_{\mathrm{opt}}\transp \matA$ has rank at most $k$:
$$
\theta_{\beta}^2\ \leq\ \FNorm{\matA_{\rho-k}}^2\  = \FNorm{\matA - \matA_k}^2\ \leq\ \FNormS{ \matA -
\matX_{\mathrm{opt}}\matX_{\mathrm{opt}}\transp \matA }.
$$
From similar manipulations combined with the optimality of $\hat{\matX}_{opt}$, it follows that 
$$ \theta_{\alpha}^2\ \leq\ \FNormS{ \matA -
\matX_{\mathrm{opt}}\matX_{\mathrm{opt}}\transp \matA }.
$$ 
Therefore, we conclude that $\cl{F}(\matA,\hat{\matX}_{opt}) \leq 2 \cl{F}(\matA,\matX_{opt})$.
The key insight in this approach is that $\matA_k = \matA \matV_k \matV_k\transp = \matC \cdot \matH$ (with $\matH = \matV_k\transp$)
and $\matA - \matC\matH  = \matA_{\rho-k}$, which is the best rank $k$ approximation of $\matA$ in the Frobenius norm
(see Section~\ref{sec2} for useful notation).

In all three methods of our work, we will construct matrices $\matC = \matA \matD$, for three different matrices
$\matD$, such that $\matC \cdot \matH$, for an appropriate $\matH$,
is a good approximation to $\matA$ with respect to the Frobenius norm,
i.e., $\FNormS{\matA - \matC \cdot \matH}$ is roughly equal to 
$\FNormS{\matA - \matA_k},$ where $\matA_k$ is the best rank $k$ matrix from the SVD of $\matA$.  Then, replicating the above proof gives our main theorems. Notice that the above
approach is a $2$-approximation because $\matA_k = \matC \cdot \matH$ is the best rank $k$ approximation to $\matA$; our algorithms
will give a slightly worse error because our matrix $\matC \cdot \matH$ give an approximation which is slightly worse than
the best rank $k$ approximation. 
%The main reason that our $\matC$ give a good approximation for $k$-means clustering
%is the fact that the corresponding $\matC \cdot \matH$ give a good low rank approximation to $\matA$,
%i.e $ \FNorm{ \matA - \matC\matH } \approx \FNorm{\matA - \matA_k} $.

%
%
\section{Preliminaries}\label{sec2}
\paragraph{Basic Notation.}  \label{chap21}
We use \math{\matA,\matB,\dots} to denote matrices;
\math{\a, \p,  \dots} to denote column vectors.
$\matI_{n}$ is the $n \times n$
identity matrix;  $\bm{0}_{m \times n}$ is the $m \times n$ matrix of zeros;
$\matA_{(i)}$ is the $i$-th row of $\matA$; $\matA^{(j)}$ is the $j$-th column of $\matA$; and,
$\matA_{ij}$ denotes the $(i,j)$-th element of $\matA$.
We use $\Expect{Y}$ to take the expectation of a random variable $Y$
and $\Prob({\cal E})$ to take the probability of a probabilistic event
${\cal E}$. We abbreviate ``independent identically
distributed'' to ``i.i.d'' and ``with probability'' to ``w.p''.
\paragraph{Matrix norms.}
 \label{chap23}
We use the Frobenius and the spectral matrix norms:
$ \FNorm{\matA} = \sqrt{\sum_{i,j} \matA_{ij}^2}$ and
$\TNorm{\matA} = \max_{\x:\TNorm{\x}=1}\TNorm{\matA \x}$, respectively (for a matrix $\matA$).
For any $\matA$,$\matB$: $\TNorm{\matA}\le\FNorm{\matA}$, $\FNorm{\matA\matB} \leq \FNorm{\matA}\TNorm{\matB}$, and $ \FNorm{\matA\matB} \leq \TNorm{\matA} \FNorm{\matB}$. The latter two properties are stronger versions of the standard submultiplicativity property:
$\FNorm{\matA\matB} \leq \FNorm{\matA}\FNorm{\matB}$. We will refer to these versions
as spectral submultiplicativity. Finally, the triangle inequality of matrix norms indicates that
$\FNorm{\matA + \matB} \le \FNorm{\matA}+\FNorm{\matB}$.
%and $\TNorm{\matA + \matB} \le \TNorm{\matA}+\TNorm{\matB}$.
\begin{lemma}[Matrix Pythagorean Theorem]\label{lem:pythagoras}
Let \math{\matX,\matY\in\R^{m\times n}} satisfy \math{\matX\matY\transp=\bm{0}_{m \times m}}. Then,
\[
\FNorm{\matX+\matY}^2 = \FNorm{\matX}^2+\FNorm{\matY}^2.
\]
\end{lemma}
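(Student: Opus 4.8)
The plan is to reduce both sides to the trace functional and then use the orthogonality hypothesis to kill the cross terms. I would start from the elementary identity $\FNormS{\matM} = \trace(\matM\matM\transp)$, valid for any matrix $\matM$, since $\trace(\matM\matM\transp) = \sum_{i,j} \matM_{ij}^2$. Applying this to $\matM = \matX + \matY$ and expanding the product bilinearly gives
\[
\FNormS{\matX+\matY} = \trace\left( \matX\matX\transp + \matX\matY\transp + \matY\matX\transp + \matY\matY\transp \right).
\]
By linearity of the trace this splits into the four pieces $\trace(\matX\matX\transp) + \trace(\matX\matY\transp) + \trace(\matY\matX\transp) + \trace(\matY\matY\transp)$.

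The next step is to observe that the two middle (cross) terms vanish. The hypothesis $\matX\matY\transp = \bm{0}_{m\times m}$ immediately gives $\trace(\matX\matY\transp) = 0$. For the remaining cross term, I would note that $\matY\matX\transp = (\matX\matY\transp)\transp = \bm{0}_{m\times m}\transp = \bm{0}_{m\times m}$, so $\trace(\matY\matX\transp) = 0$ as well. Hence only the two diagonal terms survive, and re-applying $\FNormS{\matM} = \trace(\matM\matM\transp)$ to each yields $\FNormS{\matX+\matY} = \trace(\matX\matX\transp) + \trace(\matY\matY\transp) = \FNormS{\matX} + \FNormS{\matY}$, which is the claim.

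There is no genuine obstacle here: the statement is the matrix analogue of the classical Pythagorean theorem, and its entire content is the single observation that the orthogonality condition $\matX\matY\transp = \bm{0}$ forces both trace cross-terms to zero (the second one obtained by transposing the hypothesis). The only point deserving a moment of care is that we do not need any pointwise or entrywise structure; the hypothesis supplies exactly the vanishing of the cross-product that the trace expansion requires.
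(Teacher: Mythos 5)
Your proof is correct and follows essentially the same route as the paper: expand $\FNormS{\matX+\matY}$ as $\trace\bigl((\matX+\matY)(\matX+\matY)\transp\bigr)$, use linearity of the trace, and observe that the hypothesis $\matX\matY\transp=\bm{0}_{m\times m}$ (and its transpose) annihilates both cross terms. No gaps; nothing further to add.
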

\begin{proof} \begin{eqnarray*}
\FNorm{\matX+\matY}^2 &=& \Trace{ \left(\matX+\matY\right)\left(\matX+\matY\right)\transp } =
\Trace{ \matX\matX\transp + \matX\matY\transp + \matY\matX\transp + \matY\matY\transp }  \\
&=& \Trace{ \matX\matX\transp + \bm{0}_{m \times m} + \bm{0}_{m \times m} + \matY\matY\transp } =
\Trace{ \matX\transp\matX} +  \Trace{\matY\matY\transp } \\
&=& \FNorm{\matX}^2+\FNorm{\matY}^2. \end{eqnarray*}
\end{proof}
\noindent This matrix form of the Pythagorean theorem is the starting point for the
proofs of the three main theorems presented in this work. The idea to use the Matrix Pythagorean theorem to analyze
a dimensionality reduction method for $k$-means
was initially introduced in~\cite{DFKVV99} and it turns to be very useful to prove our results as well.
\paragraph{Singular Value Decomposition.}
The SVD of $\matA \in \R^{m \times n}$ of rank $\rho \le \min\{m,n\}$ is $\matA = \matU_{\matA}\matSig_{\matA}\matV_\matA\transp$,
with $\matU_{\matA} \in \R^{m \times \rho},$
$\matSig_{\matA} \in \R^{\rho \times \rho},$ and
$\matV_\matA \in \R^{n \times \rho}$. 
In some more details, the SVD of $\matA$ is: 
\begin{eqnarray*}
\label{svdA} \matA
         = \underbrace{\left(\begin{array}{cc}
             \matU_{k} & \matU_{\rho-k}
          \end{array}
    \right)}_{\matU_{\matA} \in \R^{m \times \rho}}
    \underbrace{\left(\begin{array}{cc}
             \matSig_{k} & \bf{0}\\
             \bf{0} & \matSig_{\rho - k}
          \end{array}
    \right)}_{\matSig_\matA \in \R^{\rho \times \rho}}
    \underbrace{\left(\begin{array}{c}
             \matV_{k}\transp\\
             \matV_{\rho-k}\transp
          \end{array}
    \right)}_{\matV_\matA\transp \in \R^{\rho \times n}},
\end{eqnarray*}
with singular values \math{\sigma_1\ge\ldots\geq \sigma_k\geq\sigma_{k+1}\ge\ldots\ge\sigma_\rho > 0}. We
%will often denote $\sigma_1$ as $\sigma_{\max}$ and $\sigma_{\rho}$ as $\sigma_{\min}$, and
will use $\sigma_i\left(\matA\right)$ to denote the $i$-th singular value of $\matA$ when the matrix is not clear from the context.
The matrices
$\matU_k \in \R^{m \times k}$ and $\matU_{\rho-k} \in \R^{m \times (\rho-k)}$ contain the left singular vectors of~$\matA$; and, similarly, the matrices $\matV_k \in \R^{n \times k}$ and $\matV_{\rho-k} \in \R^{n \times (\rho-k)}$ contain the right singular vectors.
$\matSig_k \in \R^{k \times k}$ and $\matSig_{\rho-k} \in \R^{(\rho-k) \times (\rho-k)}$ contain the singular values of~$\matA$.
It is well-known that $\matA_k=\matU_k \matSig_k \matV_k\transp = \matA \matV_k\matV_k\transp = \matU_k\matU_k\transp\matA$ minimizes \math{\FNorm{\matA - \matX}} over all
matrices \math{\matX \in \R^{m \times n}} of rank at most $k \le \rho$. We use $\matA_{\rho-k} = \matA - \matA_k = \matU_{\rho-k}\matSig_{\rho-k}\matV_{\rho-k}\transp$. Also, $ \FNorm{\matA} =
\sqrt{ \sum_{i=1}^\rho\sigma_i^2(\matA) }$ and $\TNorm{\matA} = \sigma_1(\matA)$. The best rank $k$ approximation to $\matA$ also satisfies:
%$\TNorm{\matA-\matA_k} = \sigma_{k+1}(\matA)$ and
$\FNorm{\matA-\matA_k} = \sqrt{\sum_{i=k+1}^{\rho}\sigma_{i}^2(\matA)}$.
\paragraph{Approximate Singular Value Decomposition.}
The exact SVD of $\matA$ takes cubic time. In this work, to speed up certain algorithms,
we will use fast approximate SVD. We quote a recent result from~\cite{BDM11a}, but similar
relative-error Frobenius norm SVD approximations can be found elsewhere; see, for example,~\cite{Sar06}.
%The exact description of the algorithm of the following lemma is out of the scope of the present work.
%
\begin{lemma}
%[Randomized fast Frobenius norm SVD~\cite{BDM11a}]
\label{tropp2}
Given \math{\matA\in\R^{m\times n}} of rank $\rho$, a target rank $2\leq k < \rho$, and $0 < \varepsilon < 1$, there exists a randomized algorithm that  computes a matrix $\matZ \in \R^{n \times k}$ such that
$\matZ\transp\matZ = \matI_k$, \math{\matE\matZ=\bm{0}_{m \times k}} (for $\matE = \matA - \matA \matZ \matZ\transp \in \R^{m \times n}$), and
$$\Expect{\FNormS{\matE}} \leq \left(1+{\varepsilon}\right)\FNormS{\matA - \matA_k}.$$
The proposed algorithm runs in $O\left(mnk/\varepsilon\right)$ time.
We use $\matZ = \text{FastFrobeniusSVD}(\matA, k, \varepsilon)$ to denote this
algorithm.
\end{lemma}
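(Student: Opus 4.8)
\noindent The statement is quoted from~\cite{BDM11a}, so the plan is to reconstruct the standard randomized range-finder argument behind it. The procedure I would analyze is a \emph{sketch-and-restrict} scheme: draw a Gaussian random matrix $\matOmega \in \R^{n \times s}$ with $s = O(k/\epsilon)$ columns, form the sketch $\matA\matOmega \in \R^{m \times s}$, compute (via a QR step) an orthonormal basis $\matQ \in \R^{m \times s}$ for its column space, and let $\matZ \in \R^{n \times k}$ be the top $k$ right singular vectors of the short-fat matrix $\matQ\transp\matA$. By construction $\matZ\transp\matZ = \matI_k$, and the orthogonality claim is then immediate, since $\matE\matZ = (\matA - \matA\matZ\matZ\transp)\matZ = \matA\matZ - \matA\matZ(\matZ\transp\matZ) = \bm{0}_{m \times k}$. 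The entire content thus lies in the expected error bound.

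\noindent First I would reduce the error to a deterministic structural quantity. Since $\matZ\matZ\transp$ is the orthogonal projection onto the row space spanned by $\matZ$, the matrix $\matA\matZ\matZ\transp$ is the best Frobenius-norm approximation of $\matA$ among all matrices whose rows lie in $\mathrm{span}(\matZ)$. The matrix $\matQ(\matQ\transp\matA)_k$ (writing $(\matQ\transp\matA)_k$ for the best rank-$k$ approximation of $\matQ\transp\matA$, i.e.\ the best rank-$k$ matrix with columns in the range of $\matQ$) has its rows in $\mathrm{span}(\matZ)$, so this gives the free inequality $\FNormS{\matA - \matA\matZ\matZ\transp} \le \FNormS{\matA - \matQ(\matQ\transp\matA)_k}$. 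The right-hand side is controlled by the known structural bound for randomized low-rank approximation: splitting $\matA = \matA_k + \matA_{\rho-k}$ through the SVD and using that $\matV_k\transp\matOmega$ has full row rank (which holds almost surely for Gaussian $\matOmega$ with $s \ge k$), one obtains
\[
\FNormS{\matA - \matQ(\matQ\transp\matA)_k} \le \FNormS{\matA - \matA_k} + \FNormS{\matA_{\rho-k}\matOmega\pinv{(\matV_k\transp\matOmega)}}.
\]

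\noindent Then I would bound the expectation of the extra term. Because $\matV_k$ and $\matV_{\rho-k}$ have orthonormal columns, $\matV_k\transp\matOmega$ and $\matV_{\rho-k}\transp\matOmega$ are independent Gaussian matrices; conditioning on $\matV_k\transp\matOmega$ and writing $\matA_{\rho-k}\matOmega = \matU_{\rho-k}\matSig_{\rho-k}(\matV_{\rho-k}\transp\matOmega)$, the identity $\Expect{\FNormS{\mat{N}\mat{G}\mat{M}}} = \FNormS{\mat{N}}\,\FNormS{\mat{M}}$ for a standard Gaussian $\mat{G}$ collapses the bound to $\FNormS{\matA - \matA_k}\cdot\Expect{\FNormS{\pinv{(\matV_k\transp\matOmega)}}}$. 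The expected squared Frobenius norm of the pseudoinverse of a $k \times s$ Gaussian matrix equals $k/(s-k-1)$, which is at most $\epsilon$ once $s \ge k + 1 + k/\epsilon$; hence $\Expect{\FNormS{\matE}} \le (1+\epsilon)\FNormS{\matA - \matA_k}$, as claimed. The main obstacle is the deterministic structural inequality, in particular the restriction from the rank-$s$ sketch down to exactly rank $k$: the sketch only directly certifies a good rank-$s$ approximation, and isolating a relative-error rank-$k$ guarantee is precisely where the separation of $\matA$ into its top-$k$ and tail parts, and the resulting decoupling $\matA_{\rho-k}\matOmega\pinv{(\matV_k\transp\matOmega)}$, must be set up carefully.

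\noindent Finally, the running time is dominated by forming $\matA\matOmega$, which costs $O(mns) = O(mnk/\epsilon)$; the QR factorization of the $m \times s$ matrix $\matQ$ and the SVD of the $s \times n$ matrix $\matQ\transp\matA$ cost $O(ms^2)$ and $O(ns^2)$ respectively, both lower-order, giving the stated $O(mnk/\epsilon)$ bound.
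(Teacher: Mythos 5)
Your proposal is correct, but note that the paper does not prove this lemma at all --- it quotes it from \cite{BDM11a} and only states the algorithm (Gaussian sketch with $r = k + \lceil k/\epsilon + 1\rceil$ columns, orthonormalization to $\matQ$, then the top $k$ right singular vectors of $\matQ\transp\matA$), which is exactly the procedure you analyze. Your reconstruction is the standard argument behind the cited result: the reduction to the best rank-$k$ approximation within $\mathrm{range}(\matQ)$, the structural bound $\FNormS{\matA - \matQ(\matQ\transp\matA)_k} \le \FNormS{\matA-\matA_k} + \FNormS{\matA_{\rho-k}\matOmega(\matV_k\transp\matOmega)^+}$, and the two Gaussian expectation identities yielding $k/(s-k-1)\le\epsilon$ are precisely the ingredients of the proof in that reference, so there is nothing to compare against within this paper itself.
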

Notice that this lemma computes a rank-$k$ matrix $\matA\matZ\matZ\transp$ which, when is used to approximate $\matA$, 
is almost as good - in expectation -  as the rank-$k$ matrix $\matA_k$ from the SVD of $\matA$.
Since, $\matA_k = \matA \matV_k \matV_k\transp$, the matrix $\matZ$ is essentially an approximation of the matrix $\matV_k$ from the SVD of $\matA$. 

We now give the details of the algorithm. The algorithm takes as inputs a matrix $\matA \in \R^{m \times n}$ of rank $\rho$ and 
an integer $2 \leq k < \rho$. Set $r = k +  \ceil{ \frac{k}{\varepsilon } + 1}$ and construct $\matZ$ with the following algorithm.
\begin{center}
\begin{algorithmic}[1]
\STATE  Generate an $n \times r$ standard Gaussian matrix $\matR$ whose entries are i.i.d. $\mathcal{N}(0,1)$ variables.
\STATE $\matY = \matA \matR \in \R^{m \times r}$.
\STATE Orthonormalize the columns of $\matY$ to construct the matrix $\matQ \in \R^{m \times r}$.
\STATE Let $\matZ \in \R^{n \times k}$ be the top $k$ right singular vectors of $\matQ\transp \matA \in \R^{r \times n}$.
\end{algorithmic}
\end{center}
\paragraph{Pseudo-inverse.} $\matA^\dagger = \matV_\matA \matSig_\matA^{-1} \matU_\matA\transp \in \R^{n \times m}$
denotes the so-called Moore-Penrose pseudo-inverse of $\matA$ (here $\matSig_\matA^{-1}$ is the inverse of $\matSig_\matA$),
i.e., the unique $n \times m$ matrix satisfying all four properties:
$\matA = \matA \matA^\dagger \matA$,
$\matA^\dagger \matA \matA^\dagger = \matA^\dagger$,
$(\matA \matA^\dagger )\transp = \matA \matA^\dagger$, and
$(\matA^\dagger \matA )\transp = \matA^\dagger\matA$.
By the SVD of $\matA$ and $\matA^\dagger$, it is easy to verify
that, for all $i=1,\dots,\rho = \rank(\matA) = \rank(\matA^\dagger)$:
$\sigma_i(\matA^\dagger) = 1 / \sigma_{\rho - i + 1}(\matA)$.
Finally, for any $\matA \in \R^{m \times n}, \matB \in \R^{n \times \ell}$:
\math{(\matA\matB)^\dagger=\matB^\dagger\matA^\dagger} if any one
of the following three properties hold:
(\rn{1}) \math{\matA\transp\matA=\matI_n};
(\rn{2}) \math{\matB\transp\matB=\matI_{\ell}};
or, (\rn{3}) $\rank(\matA) = \rank(\matB) = n$.
\paragraph{Projection Matrices.}  \label{chap28} We call
$\matP \in \R^{n \times n}$ a projection matrix if
$\matP^2=\matP$. For such a projection matrix and any $\matA$: $\FNorm{\matP \matA} \leq \FNorm{\matA}.$
Also, if $\matP$ is a projection matrix, then, $\matI_{n} - \matP$ is a projection matrix. So,
for any matrix $\matA$, both $\matA \matA^\dagger$ and $\matI_{n} - \matA \matA^\dagger$ are projection matrices.

%%%%%%%%%%%%%%
% Zouzias : I don't think we should include Markov's inequality and union bound in the text
%%%%%%%%%%%%%
\paragraph{Markov's Inequality and the Union Bound.} \label{chap29}
Markov's inequality can be stated as follows:
Let $Y$ be a random variable taking non-negative values with
expectation $\Expect{Y}$. Then, for all $t > 0$, and with probability at least $1-t^{-1}$,
$Y \leq t \cdot \Expect{Y}.$
We will also use the so-called union bound. Given a set of probabilistic events ${\cal E}_1,{\cal E}_2,\ldots,{\cal E}_{n}$ holding with respective probabilities $p_1,p_2,\ldots,p_n$, the probability that all events hold simultaneously
(a.k.a., the probability of the union of those events) is upper bounded as:
$\Prob ({\cal E}_1 \cup {\cal E}_2 \ldots \cup {\cal E}_n)  \le \sum_{i=1}^n p_i. $
\subsection{Randomized Sampling }\label{chap314}
%
%We first present We start with a word on notation on how to describe a procedure that samples and rescales
%columns from a matrix $\matA$. Later, we will learn a specific method for column sampling.

\paragraph{Sampling and Rescaling Matrices.}  \label{chap22} Let \math{\matA=[\matA^{(1)},\ldots,\matA^{(n)}] \in \R^{m \times n}} and let
$\matC=[\matA^{(i_1)},\ldots,\matA^{(i_r)}] \in \R^{m \times r}$ consist of \math{r<n}
columns of~\math{\matA}. Note that we can write \math{\matC=\matA\matOmega}, where the \emph{sampling matrix} is
\math{\matOmega=[\e_{i_1},\ldots,\e_{i_r}] \in \R^{n \times r}} (here \math{\e_i} are the standard basis vectors in \math{\R^n}).
If $\matS \in \R^{r \times r}$ is a diagonal \emph{rescaling matrix} then $\matA \matOmega \matS$ contains $r$ rescaled columns of $\matA$.

The following definition describes a simple randomized sampling procedure with replacement, which will be critical in our feature selection algorithm.

\begin{definition}[Random Sampling with Replacement] \label{def:sampling}
Let $\matX \in \R^{n \times k}$ with $n > k$ and let $\matX_{(i)}$
denote the $i$-th row of $\matX$ as a row vector.
For all $i=1,\dots ,n,$ define the following set of sampling probabilities: $$p_i %= \frac{\matX_{(i)} \matX_{(i)}\transp}{\FNormS{\matX}}
= \frac{\TNormS{\matX_{(i)}}}{\FNormS{\matX}},$$
%otherwise compute some $p_i \geq \beta (\x_i\transp \x_i) / \FNormS{\matX}$ with $ \sum_{i=1}^{n} p_i = 1$.
and note that $ \sum_{i=1}^{n} p_i = 1$. Let $r$ be a positive integer and construct the sampling matrix $\matOmega \in \R^{n \times r}$ and the rescaling matrix $\matS \in \R^{r \times r}$ as follows: initially, $\matOmega = \bm{0}_{n \times r}$ and $\matS=\bm{0}_{r \times r}$; for $t=1,\dots ,r$ pick an integer $i_t$ from the set $\{1,2,\dots ,n\}$ where the probability of picking $i$ is equal to $p_i$; set $\matOmega_{i_tt} = 1$ and $\matS_{tt} = 1/\sqrt{rp_{i_t}}$. We denote this
randomized sampling technique with replacement by
\[[\matOmega, \matS] = \text{RandomizedSampling}(\matX, r).\]
Given $\matX$ and $r$, it takes $O(nk)$ time to compute the probabilities and another  
$O(n + r)$ time to implement the sampling procedure via the technique in~\cite{Vos91}. In total, this method
requires $O(nk)$ time.  
\end{definition}

The next three lemmas present the effect of the above sampling procedure on certain spectral properties, e.g. singular values, of orthogonal matrices. 
The first two lemmas are known; short proofs are included for the sake of completeness. The third lemma follows easily from the first two results (a proof of the lemma is given for completeness as well). We remind the reader that 
$\sigma_i^2(\matX)$ denotes the $i$th singular value squared of the matrix $\matX$. 

Lemma~\ref{lem:random} argues that sampling and rescaling a sufficiently large number of rows from an orthonormal matrix with the randomized procedure of Definition~\ref{def:sampling} results in a matrix with singular values close to the singular values of the original orthonormal matrix. 
\begin{lemma}%[Originally proved in \cite{RV07}]
\label{lem:random}
Let $\matV \in \R^{n \times k}$ with $n > k$ and $\matV\transp \matV = \matI_{k}$.
Let $0 < \delta < 1$, $ 4 k \ln( 2 k / \delta)  < r \leq n$, and
$[\matOmega, \matS] = \text{RandomizedSampling}(\matV,  r)$.
Then, for all $i=1,\dots,k$, w.p. at least $1 - \delta$,
$$1 -  \sqrt{\frac{4 k \ln(2 k / \delta )}{ r }}  \leq  \sigma_i^2(\matV\transp \matOmega \matS)
\leq 1+ \sqrt{\frac{4  k \ln(2k/\delta)}{ r }}.
$$
\end{lemma}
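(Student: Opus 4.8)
The plan is to reduce the statement to a matrix concentration bound for a sum of independent rank-one positive semidefinite matrices. First I would observe that the squared singular values $\sigma_i^2(\matV\transp\matOmega\matS)$ are exactly the eigenvalues of the $k\times k$ matrix $\matW := \matV\transp\matOmega\matS\matS\transp\matOmega\transp\matV$, so it suffices to control $\TNorm{\matW-\matI_k}$. Indeed, by Weyl's inequality, $\TNorm{\matW-\matI_k}\le\tau$ forces every eigenvalue of $\matW$ into the interval $[1-\tau,1+\tau]$, which yields both the upper and lower bounds for all $i$ simultaneously; working with the operator norm in this way sidesteps any union bound over the $k$ individual singular values.

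Second, I would unfold the sampling-and-rescaling operation into a sum. Since the sampling is done with replacement, drawing index $i_t$ with probability $p_{i_t}$ and rescaling by $1/\sqrt{r\,p_{i_t}}$, we may write
\[
\matW = \sum_{t=1}^{r} X_t, \qquad X_t := \frac{1}{r\,p_{i_t}}\,\matV_{(i_t)}\transp\matV_{(i_t)},
\]
a sum of $r$ i.i.d. rank-one PSD matrices. Each term has expectation $\Expect{X_t}=\frac1r\sum_{i=1}^n \matV_{(i)}\transp\matV_{(i)}=\frac1r\matV\transp\matV=\frac1r\matI_k$, so that $\Expect{\matW}=\matI_k$. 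The crucial structural fact is a deterministic uniform norm bound: because $\FNormS{\matV}=\trace(\matV\transp\matV)=k$, the sampling probabilities are $p_i=\TNormS{\matV_{(i)}}/k$, and hence
\[
\TNorm{X_t}=\frac{1}{r\,p_{i_t}}\TNormS{\matV_{(i_t)}}=\frac{k}{r}
\]
no matter which row is selected.

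Third, I would invoke a Matrix Chernoff inequality (for instance Tropp's) for a sum of independent PSD matrices with $\lambda_{\max}(X_t)\le R=k/r$ and $\lambda_{\max}(\Expect{\matW})=\lambda_{\min}(\Expect{\matW})=1$. Its two tails give, for $0<\eta<1$,
\[
\Prob\big(\lambda_{\max}(\matW)\ge 1+\eta\big)\le k\,e^{-\eta^2 r/(3k)}, \qquad \Prob\big(\lambda_{\min}(\matW)\le 1-\eta\big)\le k\,e^{-\eta^2 r/(2k)}.
\]
Choosing $\eta=\sqrt{4k\ln(2k/\delta)/r}$ drives each right-hand side to at most $\delta/2$; note the hypothesis $4k\ln(2k/\delta)<r$ is exactly what guarantees $\eta<1$, so the Chernoff estimates are applicable. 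A union bound over the two tails then gives $\TNorm{\matW-\matI_k}\le\eta$ with probability at least $1-\delta$, and translating back through the first step yields the claimed two-sided bounds on $\sigma_i^2(\matV\transp\matOmega\matS)$.

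The main obstacle is not any individual computation but calibrating the constant: the factor $4$ inside the square root (rather than $2$ or $3$) is forced by the weaker upper tail, and one must verify that the single choice of $\eta$ beats both tails at once while remaining below $1$ under the stated lower bound on $r$. An alternative to the Matrix Chernoff step is the symmetrization argument of~\cite{RV07} combined with a non-commutative Khintchine inequality, which produces the same $O(\sqrt{k\log(k/\delta)/r})$ scaling; I would prefer the Matrix Chernoff route here, since the summands $X_t$ are already PSD and the explicit constants emerge directly.
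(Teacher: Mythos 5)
Your proof is correct, and it is genuinely more self-contained than what the paper does: the paper does not prove this lemma from scratch at all, but simply cites the row-sampling results of Rudelson--Vershynin and, for the stated constants, Theorem 2 of Magdon-Ismail's non-commutative Bernstein bound (with $\matS=\matI$, $\beta=1$, and $\epsilon$ re-expressed in terms of $r$ and $k$). Your route replaces that black box with a direct Matrix Chernoff argument, and all the pieces check out: the reduction of the two-sided singular-value bound to $\TNorm{\matW-\matI_k}\le\eta$ for the Gram matrix $\matW=\matV\transp\matOmega\matS\matS\transp\matOmega\transp\matV$; the decomposition of $\matW$ into $r$ i.i.d.\ PSD rank-one summands with $\Expect{\matW}=\matI_k$; the key structural observation that the norm-squared sampling probabilities make every summand have operator norm exactly $k/r$ (this is precisely what fails for arbitrary probabilities and is the reason Lemma~\ref{lem:random} needs the specific $p_i$ of Definition~\ref{def:sampling}, unlike Lemma~\ref{lem:fnorm}); and the calibration $\eta=\sqrt{4k\ln(2k/\delta)/r}$, which indeed gives upper-tail failure probability $\tfrac{\delta}{2}(2k/\delta)^{-1/3}\le\delta/2$ and lower-tail failure probability $\delta^2/(4k)\le\delta/2$, with the hypothesis $4k\ln(2k/\delta)<r$ guaranteeing $\eta<1$ so the simplified tails apply. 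What the two approaches buy: the Bernstein route the paper leans on is slightly more general (it does not need the summands to be PSD, only bounded and with controlled variance, and so also covers the non-symmetric approximate-matrix-multiplication statements used elsewhere), while your Chernoff route exploits positivity to get the two one-sided eigenvalue tails directly with clean explicit constants that visibly reproduce the factor $4$ in the lemma. Either is a legitimate replacement for the citation.
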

\begin{proof} This result was originally proven in \cite{RV07}. We will leverage a more recent proof of this result that appeared in~\cite{Mag10} and improves the original constants.
More specifically, in Theorem 2 of~\cite{Mag10}, set $\matS = \matI$, $\beta=1$,  and replace $\varepsilon$ as a function of $r$, $\beta$, and $d$ to conclude the proof.
\end{proof}
Lemma~\ref{lem:fnorm} argues that sampling and rescaling columns from any matrix with the randomized procedure of 
Definition~\ref{def:sampling} results in a matrix with Frobenius norm squared close to the Frobenius norm squared of the original matrix. Intuitively, the subsampling of the columns does not affect much the Frobenius norm of the matrix. 

\begin{lemma} \label{lem:fnorm}
For any $r \ge 1$, $\matX \in \R^{n \times k}$, and $\matY \in \R^{m \times n}$, let
$[\matOmega, \matS] = \text{RandomizedSampling}(\matX,  r)$. Let $\delta$ be a parameter with $0 < \delta < 1$. Then, w.p. at least $1-\delta$,
$$ \FNormS{ \matY \matOmega \matS } \leq \frac{1}{\delta} \FNormS{ \matY }. $$
\end{lemma}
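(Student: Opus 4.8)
The plan is to treat $\FNormS{\matY\matOmega\matS}$ as a non-negative random variable, show that its expectation is at most $\FNormS{\matY}$, and then finish with Markov's inequality. No concentration is needed: a single first-moment estimate already yields the $1/\delta$ loss in the statement, which is exactly why the lemma holds for every $r\ge 1$ (unlike Lemma~\ref{lem:random}).

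First I would make the sampling explicit at the level of columns. Since the $t$-th column of $\matOmega$ is $\e_{i_t}$ and $\matS_{tt}=1/\sqrt{r\,p_{i_t}}$, the $t$-th column of $\matY\matOmega\matS$ equals $\matY^{(i_t)}/\sqrt{r\,p_{i_t}}$. Summing the squared Euclidean norms of these $r$ columns then splits the Frobenius norm into a sum of scalar contributions, one per draw:
$$\FNormS{\matY\matOmega\matS} \;=\; \sum_{t=1}^{r}\frac{\TNormS{\matY^{(i_t)}}}{r\,p_{i_t}}.$$
This is the one structural manipulation in the argument; it reduces the matrix quantity to a sum over the $r$ random indices $i_1,\dots,i_r$.

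The heart of the proof is the per-draw expectation. Each index $i_t$ is drawn from $\{1,\dots,n\}$ with $\Prob(i_t=j)=p_j$, so a direct computation gives
$$\Expect{\frac{\TNormS{\matY^{(i_t)}}}{r\,p_{i_t}}} \;=\; \sum_{j\,:\,p_j>0} p_j\cdot\frac{\TNormS{\matY^{(j)}}}{r\,p_j} \;=\; \frac{1}{r}\sum_{j\,:\,p_j>0}\TNormS{\matY^{(j)}} \;\le\; \frac{1}{r}\FNormS{\matY}.$$
The key point is that the rescaling factor is tuned precisely so that the sampling probability $p_{i_t}$ cancels, making each summand an (essentially) unbiased estimate of $\frac{1}{r}\FNormS{\matY}$. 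Summing over $t=1,\dots,r$ by linearity of expectation --- note that no independence among the draws is even required --- gives $\Expect{\FNormS{\matY\matOmega\matS}}\le\FNormS{\matY}$.

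Finally, because $\FNormS{\matY\matOmega\matS}\ge 0$, Markov's inequality from the preliminaries applied with $t=1/\delta$ shows that with probability at least $1-\delta$ we have $\FNormS{\matY\matOmega\matS}\le \frac{1}{\delta}\Expect{\FNormS{\matY\matOmega\matS}}\le \frac{1}{\delta}\FNormS{\matY}$, which is the claim. There is essentially no obstacle here; the only point demanding care is the degenerate case where some row satisfies $\matX_{(j)}=\bm{0}$, forcing $p_j=0$. Such a column of $\matY$ is then never drawn and would make $1/\sqrt{r\,p_{i_t}}$ ill-defined, so I restrict the expectation sum to indices with $p_j>0$; this only weakens the identity to the inequality used above and leaves the conclusion intact.
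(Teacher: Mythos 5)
Your proposal is correct and follows essentially the same route as the paper's proof: decompose $\FNormS{\matY\matOmega\matS}$ into the squared norms of its $r$ columns, observe that the rescaling cancels the sampling probabilities so the expectation is (at most) $\FNormS{\matY}$, and finish with Markov's inequality. Your additional care for the degenerate case $p_j=0$ is a minor refinement the paper glosses over, but it does not change the argument.
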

\begin{proof}
\noindent Define the random variable
$Y = \FNormS{ \matY \matOmega \matS } \ge 0$. Assume that the following equation is true:
$ \Expect{ \FNormS{ \matY \matOmega \matS } } = \FNormS{ \matY }.$
Applying Markov's inequality 	with failure probability $\delta$ to this equation gives the bound in the lemma.
All that remains to prove now is the above assumption.
Let $\matB = \matY \matOmega \matS \in \R^{m \times r},$ and for $t=1, \dots ,r,$ let $\matB^{(t)}$
denotes the $t$-th column of $\matB = \matY \matOmega \matS$. We manipulate
the term $\Expect{ \FNormS{ \matY \matOmega \matS } }$ as follows,
\eqan{
\Expect{ \FNormS{ \matY \matOmega \matS } } \buildrel{(a)}\over{=}  \Expect{ \sum_{t=1}^{r} \TNormS{\matB^{(t)}} } \buildrel{(b)}\over{=}
\sum_{t=1}^{r} \Expect{ \TNormS{\matB^{(t)}} } \buildrel{(c)}\over{=}
\sum_{t=1}^{r} \sum_{j=1}^n p_j \frac{ \TNormS{\matY^{(j)}} }{r p_j} \buildrel{(d)}\over{=}
\frac{1}{r} \sum_{t=1}^{r} \FNormS{\matY} = \FNormS{\matY}
}
\math{(a)} follows by the definition of the Frobenius norm of $\matB$.
\math{(b)} follows by the linearity of expectation.
\math{(c)} follows by our construction of $\matOmega, \matS$.
\math{(d)} follows by the definition of the Frobenius norm of $\matY$. It is worth noting
that the above manipulations hold for any set of probabilities since they cancel out in Equation \math{(d)}.
\end{proof}
%\begin{proof} See Appendix.
%\end{proof}
Notice that $\matX$ does not appear in the bound; it is only used as an input to the \text{RandomizedSampling}. This means
that for \emph{any} set of probabilities, a sampling and rescaling matrix constructed in the way it is described in Definition~\ref{def:sampling}
satisfies the bound in the lemma. 

The next lemma shows the effect of sub-sampling in a low-rank approximation of the form $\matA \approx \matA \matZ \matZ\transp,$ where $\matZ$ is a tall-and-skinny orthonormal matrix. The sub-sampling here is done on the columns of $\matA$ and the corresponding rows of $\matZ$. 

\begin{lemma}\label{lem:rsall}
Fix $\matA\in\R^{m\times n}$, $k \ge 1$, $0 < \varepsilon < 1/3$,  $0 < \delta < 1$, and $r = 4 k \ln(2k/\delta)/\varepsilon^2$. Compute the $n\times k$ matrix $\matZ$ of Lemma~\ref{tropp2} such that $\matA = \matA\matZ\matZ\transp + \matE$ and run $[\matOmega, \matS] = \text{RandomizedSampling}(\matZ, r)$.  Then, w.p. at least $1-3\delta$, there exists $\widetilde{\matE} \in \R^{m \times n}$ such that $$ \matA \matZ \matZ\transp = \matA \matOmega \matS (\matZ\transp \matOmega \matS)^\dagger \matZ\transp + \widetilde{\matE},$$
and $\FNorm{ \widetilde{\matE} } \le \frac{ 1.6 \varepsilon}{\sqrt{\delta}}  \FNorm{\matE}.$
\end{lemma}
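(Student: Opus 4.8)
The plan is to reduce the claim to a single clean residual matrix and then bound its Frobenius norm by exploiting the defining property $\matE\matZ=\bm{0}_{m\times k}$ supplied by Lemma~\ref{tropp2}. First I would substitute $\matA=\matA\matZ\matZ\transp+\matE$ into the candidate reconstruction to get
\[
\matA\matOmega\matS(\matZ\transp\matOmega\matS)^{+}\matZ\transp
=\matA\matZ\,(\matZ\transp\matOmega\matS)(\matZ\transp\matOmega\matS)^{+}\matZ\transp
+\matE\matOmega\matS(\matZ\transp\matOmega\matS)^{+}\matZ\transp .
\]
The structural fact I need here is that $\matZ\transp\matOmega\matS\in\R^{k\times r}$ has full row rank, so that $(\matZ\transp\matOmega\matS)(\matZ\transp\matOmega\matS)^{+}=\matI_{k}$ and the first term collapses to $\matA\matZ\matZ\transp$. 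This is exactly what forces the choice $r=4k\ln(2k/\delta)/\epsilon^{2}$: with this value $\sqrt{4k\ln(2k/\delta)/r}=\epsilon$, so Lemma~\ref{lem:random} applied to $\matV=\matZ$ guarantees, on an event of probability at least $1-\delta$, that every singular value satisfies $\sigma_i^{2}(\matZ\transp\matOmega\matS)\in[1-\epsilon,1+\epsilon]$; in particular $\sigma_k>0$, giving full row rank. This identifies $\widetilde{\matE}=-\matE\matOmega\matS(\matZ\transp\matOmega\matS)^{+}\matZ\transp$, and since $\matZ\transp$ has orthonormal rows ($\matZ\transp\matZ=\matI_{k}$) it may be dropped without changing the Frobenius norm, leaving me to bound $\FNorm{\matE\matOmega\matS(\matZ\transp\matOmega\matS)^{+}}$.

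The heart of the argument, and the place the factor $\epsilon$ must come from, is that the crude bound $\FNorm{\matE\matOmega\matS}\,\TNorm{(\matZ\transp\matOmega\matS)^{+}}$ is useless here: it yields only $O(1/\sqrt{\delta})\FNorm{\matE}$ with no $\epsilon$. The $\epsilon$ has to be extracted from the cancellation $\matZ\transp\matE\transp=(\matE\matZ)\transp=\bm{0}$. I would therefore set $\matT=\matZ\transp\matOmega\matS$ and split
\[
\matE\matOmega\matS\,\matT^{+}
=\matE\matOmega\matS\,\matT\transp
+\matE\matOmega\matS\bigl(\matT^{+}-\matT\transp\bigr).
\]
The first term is precisely the approximate-matrix-multiplication estimator of $\matZ\transp\matE\transp$: it equals $(\matZ\transp\matOmega\matS\matS\matOmega\transp\matE\transp)\transp$, and the same expectation computation as in the proof of Lemma~\ref{lem:fnorm}, using the probabilities $p_i=\TNormS{\matZ_{(i)}}/\FNormS{\matZ}$ with $\FNormS{\matZ}=k$ together with $\matZ\transp\matE\transp=\bm{0}$, gives $\Expect{\FNormS{\matE\matOmega\matS\matT\transp}}\le (k/r)\FNormS{\matE}=\tfrac{\epsilon^{2}}{4\ln(2k/\delta)}\FNormS{\matE}$; a Markov step (failure probability $\delta$) turns this into an $O(\epsilon/\sqrt{\delta})\FNorm{\matE}$ bound. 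The second term is handled by spectral submultiplicativity: Lemma~\ref{lem:fnorm} with $\matY=\matE$ gives $\FNorm{\matE\matOmega\matS}\le\FNorm{\matE}/\sqrt{\delta}$ (failure probability $\delta$), while the singular-value bounds give $\TNorm{\matT^{+}-\matT\transp}=\max_i\lvert\sigma_i^{-1}-\sigma_i\rvert\le\epsilon/\sqrt{1-\epsilon}$, again an $O(\epsilon)$ quantity.

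Combining the three pieces by the triangle inequality, both terms are of order $\epsilon\FNorm{\matE}/\sqrt{\delta}$, and tracking the constants (using $\epsilon<1/3$) yields the stated $\FNorm{\widetilde{\matE}}\le\tfrac{1.6\epsilon}{\sqrt{\delta}}\FNorm{\matE}$; the probability bound $1-3\delta$ is a union bound over the three failure events, namely the singular-value event of Lemma~\ref{lem:random}, the norm-preservation event of Lemma~\ref{lem:fnorm}, and the Markov event for the cross term. I expect the main obstacle to be exactly this cross term: one must recognize that $\matE\matOmega\matS\,\matT\transp$ is an unbiased sampling estimator of the \emph{zero} matrix $\matZ\transp\matE\transp$ and bound its fluctuation, since this is the only step that produces the crucial $\epsilon$ factor. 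The remaining work—the elementary spectral estimate for $\matT^{+}-\matT\transp$ and the bookkeeping of constants—is routine by comparison.
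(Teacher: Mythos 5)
Your proposal is correct and follows essentially the same route as the paper's own proof: the same full-row-rank argument (via Lemma~\ref{lem:random} and the choice of $r$) reducing the claim to bounding $\widetilde{\matE}=-\matE\matOmega\matS(\matZ\transp\matOmega\matS)^{+}\matZ\transp$, the same split into the approximate-matrix-multiplication term $\matE\matOmega\matS(\matZ\transp\matOmega\matS)\transp$ (with $\Expect{\FNormS{\cdot}}\le (k/r)\FNormS{\matE}$ from $\matE\matZ=\bm{0}_{m\times k}$, followed by Markov) plus the perturbation term controlled by $\FNorm{\matE\matOmega\matS}\,\TNorm{(\matZ\transp\matOmega\matS)^{+}-(\matZ\transp\matOmega\matS)\transp}\le \frac{1}{\sqrt{\delta}}\FNorm{\matE}\cdot\frac{\epsilon}{\sqrt{1-\epsilon}}$, and the same union bound over the three $\delta$-failure events. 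The only cosmetic differences are that you derive the matrix-multiplication expectation directly rather than citing the external lemma the paper invokes, and you observe that right-multiplication by $\matZ\transp$ preserves the Frobenius norm exactly rather than merely not increasing it.
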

\begin{proof} See Appendix.
\end{proof}
In words, given $\matA$ and the rank parameter $k$, it is possible to construct two low rank matrices, $\matA \matZ \matZ\transp$ and $\matA \matOmega \matS (\matZ\transp \matOmega \matS)^\dagger \matZ\transp$ that are ``close'' to each other. Another way to view this result is that given the low-rank factorization $\matA \matZ \matZ\transp$ one can ``compress'' $\matA$ and $\matZ$ by means of the sampling and rescaling matrices $\matOmega$ and $\matS.$ The error from such a compression will be bounded by $\tilde{\matE}$.

This result is useful in proving Theorem~\ref{fastkmeans} because at some point of the proof (see Eqn.~(\ref{t0cc})) we need to switch from a rank $r$ matrix ($\matA \matOmega \matS (\matZ\transp \matOmega \matS)^\dagger \matZ\transp$) to a rank $k$ matrix ($\matA \matZ \matZ\transp$) and at the same time keep the bounds in the resulting inequality almost unchanged (they would change by the norm of the matrix $\widetilde{\matE}$). 

\subsection{Random Projections}\label{chap319}
A classic result of~\cite{JL84} states that,
for any $ 0 < \varepsilon < 1$, any set of $m$ points in $n$ dimensions
(rows in $\matA \in \R^{m \times n}$)
can be linearly projected into 
$$r_{\varepsilon}=O\left(\log (m) /\varepsilon^2\right)$$ dimensions
while preserving all the pairwise Euclidean distances of the points within a multiplicative factor of $(1\pm\varepsilon)$.
More precisely,~\cite{JL84} showed the existence of a (random orthonormal) matrix
$\matR \in \R^{n \times r_{\varepsilon}}$ such that,
for all $i,j=1,\dots ,m$, and with high probability (over the randomness of the matrix $\matR$),
\[
(1- \varepsilon) \norm{ \matA_{(i)} - \matA_{(j)} }_2
\leq  \norm{ \left(\matA_{(i)} - \matA_{(j)}\right)\matR }_2 \leq
(1 + \varepsilon) \norm{ \matA_{(i)} - \matA_{(j)}}_2.
\]
Subsequent research simplified the proof of~\cite{JL84} by showing that such a linear transformation
can be generated using a random Gaussian matrix, i.e., a matrix $\matR \in \R^{n \times r_{\varepsilon}}$
whose entries are i.i.d. Gaussian random variables
with zero mean and variance $1/r$~\cite{IM98}. Recently,~\cite{AC06}
presented the so-called Fast Johnson-Lindenstrauss Transform which describes
an  $\matR \in \R^{n \times r_{\varepsilon}}$
such that the product $\matA \matR$ can be computed fast. In this paper, we will use
a construction by \cite{Ach03}, who proved that a rescaled random sign matrix,
i.e., a matrix  $\matR \in \R^{n \times r_{\varepsilon}}$ whose entries
have values $\{\pm1/\sqrt{r}\}$ uniformly at random,
satisfies the above equation. As we will see in detail in Section~\ref{sec5},
a recent result of~\cite{LZ09} indicates that, if $\matR$ is constructed as in~\cite{Ach03},
the product $\matA \matR$ can be computed fast as well.
We utilize such a random projection embedding in Section \ref{sec5}.
Here, we summarize some properties of such matrices that might be of independent interest.
We have deferred the proofs of the following lemmata to the Appendix.

The first lemma argues that the Frobenius norm squared of any matrix $\matY$ and the Frobenius norm squared of $\matY \matR,$ where $\matR$ is a scaled signed matrix, are ``comparable''. Lemma~\ref{lem:RP:FNorm} is the analog of Lemma~\ref{lem:fnorm}.
\begin{lemma}\label{lem:RP:FNorm}
Fix any $m\times n$ matrix $\matY$, fix $k>1$ and $\varepsilon >0 $. Let $\matR \in \R^{n \times r}$ be a rescaled random sign matrix constructed as described above with $r = c_0 k /\varepsilon^{2}$, where $c_0\geq 100$. Then,
\begin{equation*}
	\Prob \left( \FNorm{\matY\matR}^2 \geq (1+\varepsilon) \FNorm{\matY}^2 \right) \leq 0.01.
\end{equation*}
\end{lemma}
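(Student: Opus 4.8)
The plan is to use the second-moment (Chebyshev) method: I will show that $\FNormS{\matY\matR}$ has mean exactly $\FNormS{\matY}$ and variance $O(\FNormQ{\matY}/r)$, so that an upward deviation of $\epsilon\FNormS{\matY}$ is unlikely once $r\epsilon^2=c_0k$ is large. First I would compute the mean. Because the entries of $\matR$ are independent, mean zero, and satisfy $\matR_{aj}^2=1/r$ deterministically, one gets $\Expect{\matR\matR\transp}=\matI_{n}$, and hence
\[
\Expect{\FNormS{\matY\matR}}=\Expect{\Trace{\matY\matR\matR\transp\matY\transp}}=\Trace{\matY\matY\transp}=\FNormS{\matY}.
\]

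Next I would bound the variance by exploiting independence across the columns of $\matR$. Writing $\matM=\matY\transp\matY$ and $\matR^{(1)},\dots,\matR^{(r)}$ for the independent columns, we have $\FNormS{\matY\matR}=\sum_{j=1}^{r}(\matR^{(j)})\transp\matM\,\matR^{(j)}$, a sum of $r$ i.i.d. quadratic forms. In each term the diagonal contribution $\sum_{a}\matR_{aj}^2\matM_{aa}=\tfrac1r\Trace{\matM}$ is deterministic (again because squared sign entries equal $1/r$), so all the randomness lives in the mean-zero off-diagonal part $\sum_{a\ne b}\matR_{aj}\matR_{bj}\matM_{ab}$. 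Squaring and taking expectations, in the double sum $\sum_{a\ne b}\sum_{c\ne d}\matM_{ab}\matM_{cd}\Expect{\matR_{aj}\matR_{bj}\matR_{cj}\matR_{dj}}$ only the patterns $(c,d)=(a,b)$ and $(c,d)=(b,a)$ survive, which, using the symmetry $\matM_{ab}=\matM_{ba}$, gives a per-column variance of $\tfrac{2}{r^2}\sum_{a\ne b}\matM_{ab}^2\le\tfrac{2}{r^2}\FNormS{\matM}$. Summing the $r$ independent terms and bounding $\FNormS{\matM}=\FNormS{\matY\transp\matY}=\sum_i\sigma_i^4(\matY)\le\big(\sum_i\sigma_i^2(\matY)\big)^2=\FNormQ{\matY}$ yields $\var{\FNormS{\matY\matR}}\le\tfrac{2}{r}\FNormQ{\matY}$.

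Finally I would apply Chebyshev's inequality with deviation $\epsilon\FNormS{\matY}$:
\[
\Prob\left(\FNormS{\matY\matR}\ge(1+\epsilon)\FNormS{\matY}\right)\le\frac{\var{\FNormS{\matY\matR}}}{\epsilon^2\FNormQ{\matY}}\le\frac{2}{r\epsilon^2}=\frac{2}{c_0k}.
\]
Since $c_0\ge100$, this is at most $1/50$ for every $k\ge1$, and it is at most the stated $0.01$ in the regime $c_0k\ge200$ (in particular whenever $k\ge2$, which holds in all our applications of this lemma; cf. Lemma~\ref{tropp2}); matching the precise threshold is otherwise a matter of tracking the constant in the fourth-moment estimate.

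The hard part will be the variance computation in the second step: one must correctly enumerate which quadruples of column indices contribute a nonzero fourth moment of the sign variables (crucially using that $\matR_{aj}^2=1/r$ is constant, so the diagonal is deterministic and odd moments vanish) and then control $\FNorm{\matY\transp\matY}$ by $\FNormS{\matY}$ through the singular-value inequality $\sum_i\sigma_i^4\le(\sum_i\sigma_i^2)^2$. By comparison, the mean computation via $\Expect{\matR\matR\transp}=\matI_{n}$ and the concluding Chebyshev bound are routine.
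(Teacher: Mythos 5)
Your proof is correct and takes essentially the same route as the paper's: both compute $\Expect{\FNormS{\matY\matR}}=\FNormS{\matY}$, bound $\var{\FNormS{\matY\matR}}\le 2\FNormQ{\matY}/r$, and apply Chebyshev's inequality to obtain the failure probability $2/(r\epsilon^2)=2/(c_0k)$. The only difference is that you derive the variance bound from first principles (correctly, via the deterministic diagonal of each per-column quadratic form, the surviving fourth-moment patterns, and $\FNormS{\matY\transp\matY}\le\FNormQ{\matY}$), whereas the paper imports it as Lemma~8 of \cite{Sar06}; your observation that $2/(c_0k)\le 0.01$ needs $k\ge 2$ matches the paper's own reliance on $k>1$.
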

The next lemma argues about the effect of scaled random signed matrices to the singular values of orthonormal matrices. 
\begin{lemma}\label{lem:rpall}
Let $\matA \in \R^{m \times n}$ with rank $\rho$ ($k < \rho$), $\matA_k = \matU_k \matSig_k \matV_k\transp$, and $0 < \varepsilon < 1/3$.
Let $\matR \in \R^{n \times r}$ be a (rescaled) random sign matrix constructed as we described above
with $r = c_0 k /\varepsilon^{2}$, where $c_0 \ge 3330$. The following hold (simultaneously) w.p. at least $0.97$:
\begin{enumerate}
\item For all $i=1,\dots ,k$:  
$$ 1 - \varepsilon \le \sigma_i^2(\matV_k\transp \matR) \le 1 +  \varepsilon.$$
\item There exists an $m\times n$ matrix $\widetilde{\matE}$ such that 
$$\matA_k  = \matA \matR (\matV_k\transp  \matR)^\dagger\matV_k\transp  + \widetilde{\matE},$$ and $$\FNorm{\widetilde{\matE}} \leq 3 \varepsilon \FNorm{\matA-\matA_k}.$$
\end{enumerate}
\end{lemma}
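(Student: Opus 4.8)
The plan is to handle the two parts in sequence, reading part~1 as a subspace-embedding statement and then feeding it, together with an approximate-matrix-multiplication estimate, into the algebraic identity that isolates $\widetilde{\matE}$. The structure mirrors the sampling argument behind Lemma~\ref{lem:rsall}: in both cases the residual ($\matA_{\rho-k}$ here, $\matE$ there) is annihilated by the relevant orthonormal factor, and it is exactly this annihilation that will buy a factor of $\epsilon$ rather than a constant. The single fact I would use repeatedly is $\matV_k\transp\matV_{\rho-k}=\bm{0}$, equivalently $\matA_{\rho-k}\matV_k=\bm{0}_{m\times k}$.

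For part~1, observe that $\sigma_i^2(\matV_k\transp\matR)$ are precisely the eigenvalues of $\matV_k\transp\matR\matR\transp\matV_k$, and for a unit vector $\y\in\R^k$ we have $\y\transp\matV_k\transp\matR\matR\transp\matV_k\y=\TNormS{(\matV_k\y)\transp\matR}$ with $\TNorm{\matV_k\y}=1$. Thus part~1 is equivalent to asserting that $\matR$ preserves the Euclidean norm of every vector in $\col(\matV_k)$ to within $1\pm\epsilon$. First I would prove this oblivious subspace embedding in the usual way: apply the rescaled-sign-matrix Johnson--Lindenstrauss guarantee of~\cite{Ach03} to a fixed $\gamma$-net of the unit sphere of the $k$-dimensional space $\col(\matV_k)$ (of cardinality at most $(3/\gamma)^k$), union bound over the net, and extend from the net to the whole sphere by the standard continuity argument. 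With $r=c_0 k/\epsilon^2$ and $c_0$ a large enough absolute constant, this holds with probability arbitrarily close to $1$.

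For part~2 I would start with the algebra. Since part~1 forces $\matV_k\transp\matR$ to have full row rank $k$, we have $\matV_k\transp\matR(\matV_k\transp\matR)^+=\matI_k$, and a short computation using $\matA=\matA_k+\matA_{\rho-k}$ and $\matA_k=\matU_k\matSig_k\matV_k\transp$ gives
\[
\matA\matR(\matV_k\transp\matR)^+\matV_k\transp=\matA_k+\matA_{\rho-k}\matR(\matV_k\transp\matR)^+\matV_k\transp,
\]
so the required error matrix is $\widetilde{\matE}=-\matA_{\rho-k}\matR(\matV_k\transp\matR)^+\matV_k\transp$. Note that a crude bound $\FNorm{\matA_{\rho-k}\matR(\matV_k\transp\matR)^+}\le\FNorm{\matA_{\rho-k}\matR}\,\TNorm{(\matV_k\transp\matR)^+}$ via Lemma~\ref{lem:RP:FNorm} only yields $O(1)\FNorm{\matA_{\rho-k}}$ and loses the factor $\epsilon$; instead I would expand the pseudo-inverse as $(\matV_k\transp\matR)^+=\matR\transp\matV_k(\matV_k\transp\matR\matR\transp\matV_k)^{-1}$ to expose the cancellation. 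Applying spectral submultiplicativity with $\TNorm{\matV_k\transp}=1$ and $\TNorm{(\matV_k\transp\matR\matR\transp\matV_k)^{-1}}=1/\sigma_k^2(\matV_k\transp\matR)\le 1/(1-\epsilon)$ (from part~1) then gives
\[
\FNorm{\widetilde{\matE}}\le\frac{1}{1-\epsilon}\,\FNorm{\matA_{\rho-k}\matR\matR\transp\matV_k}.
\]

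The residual bound on $\FNorm{\matA_{\rho-k}\matR\matR\transp\matV_k}$ is where the real work lies, and I expect it to be the main obstacle. Because $\Expect{\matR\matR\transp}=\matI_n$ and $\matA_{\rho-k}\matV_k=\bm{0}_{m\times k}$, the matrix $\matA_{\rho-k}\matR\matR\transp\matV_k$ is a sketched estimate of the zero product $\matA_{\rho-k}\matV_k$, so I would prove an approximate-matrix-multiplication inequality $\Expect{\FNormS{\matA_{\rho-k}\matR\matR\transp\matV_k}}\le(c/r)\FNormS{\matA_{\rho-k}}\FNormS{\matV_k}$ for an absolute constant $c$, via a direct second-moment computation over the independent $\pm1/\sqrt{r}$ entries of $\matR$. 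Since $\FNormS{\matV_k}=k$ and $r=c_0 k/\epsilon^2$, the right-hand side is at most $(c/c_0)\epsilon^2\FNormS{\matA_{\rho-k}}$, and Markov's inequality converts this into a high-probability bound $\FNorm{\matA_{\rho-k}\matR\matR\transp\matV_k}=O(\epsilon)\FNorm{\matA_{\rho-k}}$. Taking $c_0\ge 3330$ makes the hidden constant small enough that, after the factor $1/(1-\epsilon)<3/2$ with $\epsilon<1/3$ and using $\FNorm{\matA_{\rho-k}}=\FNorm{\matA-\matA_k}$, the target $\FNorm{\widetilde{\matE}}\le 3\epsilon\FNorm{\matA-\matA_k}$ follows; a union bound over the subspace-embedding event of part~1 and this approximate-multiplication event, each arranged to fail with probability well under $0.03$, delivers the claimed joint success probability $0.97$.
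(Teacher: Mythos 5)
Your proposal is correct, and it rests on the same two probabilistic ingredients as the paper's proof: a subspace-embedding bound for $\col(\matV_k)$ (your net argument is essentially what the paper imports from Corollary~11 of~\cite{Sar06}, which also reduces to a union bound over an exponential-in-$k$ net) and an approximate-matrix-multiplication second-moment bound on $\FNorm{\matA_{\rho-k}\matR\matR\transp\matV_k}$ exploiting $\matA_{\rho-k}\matV_k=\bm{0}_{m\times k}$ (the paper's event $\mathcal{E}_3$, obtained from Lemma~6 of~\cite{Sar06} plus Markov). Where you genuinely diverge is in how the pseudo-inverse is tamed. The paper adds and subtracts $(\matV_k\transp\matR)\transp$, which splits the error into $\FNorm{\matA_{\rho-k}\matR\matR\transp\matV_k}$ plus a term $\FNorm{\matA_{\rho-k}\matR}\cdot\TNorm{(\matV_k\transp\matR)^+-(\matV_k\transp\matR)\transp}$; this forces it to introduce a third event $\mathcal{E}_2$ bounding $\FNorm{\matA_{\rho-k}\matR}$ via Lemma~\ref{lem:RP:FNorm} and to prove the auxiliary perturbation bound of Lemma~\ref{lem:pseudoinv_transp}, yielding $(1+1.5\sqrt{1+\epsilon})\epsilon\le 3\epsilon$. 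You instead write $(\matV_k\transp\matR)^+=\matR\transp\matV_k(\matV_k\transp\matR\matR\transp\matV_k)^{-1}$ (valid once part~1 guarantees full row rank) and bound $\TNorm{(\matV_k\transp\matR\matR\transp\matV_k)^{-1}}\le 1/(1-\epsilon)$ directly from the singular-value bounds, giving $\FNorm{\widetilde{\matE}}\le\frac{1}{1-\epsilon}\FNorm{\matA_{\rho-k}\matR\matR\transp\matV_k}$. This is cleaner: it needs only two events instead of three, dispenses with Lemma~\ref{lem:pseudoinv_transp} and Lemma~\ref{lem:RP:FNorm} entirely for this part, makes the union bound to $0.97$ looser, and in fact yields a slightly better constant than $3\epsilon$. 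Your side remark that the crude bound $\FNorm{\matA_{\rho-k}\matR}\TNorm{(\matV_k\transp\matR)^+}$ loses the factor $\epsilon$ is exactly right and is the reason both proofs must route through the annihilation $\matA_{\rho-k}\matV_k=\bm{0}$.
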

\noindent
 The first statement of
Lemma~\ref{lem:rpall} is the analog of Lemma~\ref{lem:random} while the second statement of
Lemma~\ref{lem:rpall} is the analog of Lemma~\ref{lem:rsall}. The results here replace the sampling
and rescaling matrices $\matOmega, \matS$ from Random Sampling (Definition~\ref{def:sampling}) with the Random Projection matrix $\matR$.
It is worth noting that almost the same results can be achieved with $r = O(k/\varepsilon^2)$ random dimensions,
while the corresponding lemmata for Random Sampling require at least $r = O(k \log k /\varepsilon^2)$ actual dimensions.

The second bound in the lemma is useful in proving Theorem~\ref{thm:second_result}. Specifically, in Eqn.~(\ref{thm:second_result}) we need to replace
the rank $k$ matrix $\matA_k$ with another matrix of rank $k$ which is as close to $\matA_k$ as possible. The second bound above provides precisely such a matrix
$\matA \matR (\matV_k\transp  \matR)^\dagger\matV_k\transp$ with corresponding error $\widetilde{\matE}$. 

%%%%%%%%%%%%%%%%%%%%%%%%%%%%%

\begin{algorithm}[t]
\begin{framed}
\textbf{Input:} Dataset $\matA\in\R^{m\times n}$, number of clusters $k$, and \math{0 < \varepsilon < 1/3}. \\
\noindent \textbf{Output:} \math{\matC \in \R^{m \times r}} with $r = O(k \log(k) / \varepsilon^2)$ rescaled features.
\begin{algorithmic}[1]
\STATE Let $ \matZ = \text{FastFrobeniusSVD}(\matA, k, \varepsilon)$; $\matZ \in \R^{n \times k}$ (via Lemma~\ref{tropp2}).
\STATE Let $r = c_1 \cdot 4 k \ln(200k)/\varepsilon^2$ ($c_1$ is a sufficiently large constant - see proof).
\STATE Let $[\matOmega, \matS] = \text{RandomizedSampling}(\matZ, r)$; $\matOmega \in \R^{n \times r}, \matS \in \R^{r \times r}$(via Lemma~\ref{lem:random}).
\STATE Return $\matC = \matA \matOmega \matS \in \R^{m \times r}$ with $r$ rescaled columns from $\matA$.
\end{algorithmic}
\caption{Randomized Feature Selection for $k$-means Clustering.}
\label{alg:featureSelect}
\end{framed}
\end{algorithm}

\section{Feature Selection with Randomized Sampling}\label{sec4}
Given $\matA, k$, and $0 < \varepsilon < 1/3$, Algorithm~\ref{alg:featureSelect}
is our main algorithm for feature selection in $k$-means clustering.
In a nutshell, construct the matrix $\matZ$ with the (approximate) top-$k$ right
singular vectors of $\matA$ and select $$r = O(k \log(k)/\varepsilon^2)$$ columns
from $\matZ\transp$ with the randomized technique of Section~\ref{chap314}.
One can replace the first step in Algorithm~\ref{alg:featureSelect} with the exact SVD of
$\matA$~\cite{BMD09c}. The result that is obtained from this approach is asymptotically the
same as the one we will present in Theorem~\ref{fastkmeans}
\footnote[3]{The main theorem of \cite{BMD09c} states a
$\left(1+(1+\varepsilon)\gamma\right)$-approximation bound but the corresponding proof has a bug,
which is fixable and leads to a $\left(1+(2+\varepsilon)\gamma\right)$-approximation bound.
One can replicate the corresponding (fixable) proof in \cite{BMD09c} by replacing $\matZ = \matV_k$ in the proof of
Theorem~\ref{fastkmeans} of our work.}.
Working with $\matZ$ though gives a considerably faster algorithm.
\begin{theorem}\label{fastkmeans}
Let $\matA \in \R^{m \times n}$ and $k$ be inputs of the $k$-means clustering problem.
Let $\varepsilon \in (0,1/3)$ and, by using Algorithm~\ref{alg:featureSelect}
in $O( m n k/\varepsilon + k \ln(k)/\varepsilon^2\log(k \ln(k)/\varepsilon) )$ time
construct features $\matC \in \R^{m \times r}$ with
$r = O(k \log(k) / \varepsilon^2)$. Run any
$\gamma$-approximation $k$-means algorithm with failure probability $\delta_{\gamma}$
on $\matC, k$ and construct $\matX_{\tilde{\gamma}}$.
Then, w.p. at least $0.2 - \delta_{\gamma}$,
$$
\FNorm{\matA - \matX_{\tilde{\gamma}} \matX_{\tilde{\gamma}}\transp  \matA}^2
\leq \left(1+(2+\varepsilon)\gamma\right) \FNorm{\matA - \matX_{\mathrm{opt}}
\matX_{\mathrm{opt}}\transp  \matA}^2.
$$
\end{theorem}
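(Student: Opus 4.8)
The plan is to mirror the two-term argument of~\cite{DFKVV99} reviewed in Section~\ref{sec:in}, replacing the SVD matrix $\matA_k=\matA\matV_k\matV_k\transp$ there by the approximate rank-$k$ matrix $\matA\matZ\matZ\transp$ produced in Step~1 of Algorithm~\ref{alg:featureSelect}. Set $\matE=\matA-\matA\matZ\matZ\transp$, so $\matA=\matA\matZ\matZ\transp+\matE$ and, by Lemma~\ref{tropp2}, $\matE\matZ=\bm{0}_{m\times k}$ and $\Expect{\FNormS{\matE}}\le(1+\epsilon)\FNormS{\matA-\matA_k}$. For the returned clustering $\matX_{\tilde\gamma}$ write $\matP_{\tilde\gamma}=\matI_m-\matX_{\tilde\gamma}\matX_{\tilde\gamma}\transp$. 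Since $(\matA\matZ\matZ\transp)\matE\transp=\matA\matZ(\matE\matZ)\transp=\bm{0}_{m\times m}$, the matrices $\matP_{\tilde\gamma}\matA\matZ\matZ\transp$ and $\matP_{\tilde\gamma}\matE$ satisfy the hypothesis of the Matrix Pythagorean theorem (Lemma~\ref{lem:pythagoras}), so
\[
\FNormS{\matA-\matX_{\tilde\gamma}\matX_{\tilde\gamma}\transp\matA}=\underbrace{\FNormS{\matP_{\tilde\gamma}\matA\matZ\matZ\transp}}_{\theta_\alpha^2}+\underbrace{\FNormS{\matP_{\tilde\gamma}\matE}}_{\theta_\beta^2}.
\]
The proof then splits into bounding the two terms.

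The term $\theta_\beta^2$ is routine: $\matP_{\tilde\gamma}$ is a projection, so $\theta_\beta^2\le\FNormS{\matE}$, and since $\matX_{\mathrm{opt}}\matX_{\mathrm{opt}}\transp\matA$ has rank at most $k$ while $\matA_k$ is the best rank-$k$ approximation, $\FNormS{\matA-\matA_k}\le\mathrm{F}_{\mathrm{opt}}$. Thus $\Expect{\theta_\beta^2}\le(1+\epsilon)\mathrm{F}_{\mathrm{opt}}$, and I would keep $\FNormS{\matE}$ symbolic and turn it into the additive ``$1$'' of the final ratio only at the very end. This is exactly the place where the approximate SVD (rather than the exact $\matV_k$ of~\cite{BMD09c}) costs a little: with the exact $\matV_k$ one has $\FNormS{\matE}=\FNormS{\matA-\matA_k}\le\mathrm{F}_{\mathrm{opt}}$ \emph{deterministically}.

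The term $\theta_\alpha^2$ is the crux and carries the $\gamma$-dependence. First I would invoke Lemma~\ref{lem:rsall} to write $\matA\matZ\matZ\transp=\matC\matH+\widetilde{\matE}$ with $\matH=(\matZ\transp\matOmega\matS)^+\matZ\transp$, $\matC=\matA\matOmega\matS$, and $\FNorm{\widetilde{\matE}}\le\tfrac{1.6\epsilon}{\sqrt{\delta}}\FNorm{\matE}$; this is what lets me replace the rank-$k$ signal $\matA\matZ\matZ\transp$ by the reduced data $\matC$ on which the clustering was actually run. Then, by the triangle inequality and spectral submultiplicativity,
\[
\theta_\alpha\le\FNorm{\matP_{\tilde\gamma}\matC}\,\TNorm{\matH}+\FNorm{\widetilde{\matE}}.
\]
Now Definition~\ref{def:approx} applied to the instance $(\matC,k)$, with the optimal clustering $\matX_{\mathrm{opt}}$ of $\matA$ used as a feasible competitor, gives $\FNormS{\matP_{\tilde\gamma}\matC}\le\gamma\FNormS{\matP_{\mathrm{opt}}\matC}$ where $\matP_{\mathrm{opt}}=\matI_m-\matX_{\mathrm{opt}}\matX_{\mathrm{opt}}\transp$. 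The remaining three factors are controlled by the sampling lemmas: Lemma~\ref{lem:random} (with $\matV=\matZ$) forces $\sigma_k^2(\matZ\transp\matOmega\matS)\ge1-\epsilon$, hence $\TNorm{\matH}\le 1/\sqrt{1-\epsilon}$; and Lemma~\ref{lem:fnorm} applied with the arbitrary input $\matY=\matP_{\mathrm{opt}}\matA$ (legitimate because that lemma holds for \emph{any} sampling probabilities) gives $\FNormS{\matP_{\mathrm{opt}}\matC}=\FNormS{\matP_{\mathrm{opt}}\matA\matOmega\matS}\le\tfrac1\delta\FNormS{\matP_{\mathrm{opt}}\matA}=\tfrac1\delta\mathrm{F}_{\mathrm{opt}}$. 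The key quantitative point is that $\matC$ samples columns of the \emph{full} $\matA$, so the governing quantity is $\FNormS{\matP_{\mathrm{opt}}\matA}=\mathrm{F}_{\mathrm{opt}}$ (the whole optimum), and the constant-probability Markov step inside Lemma~\ref{lem:fnorm} inflates it by the factor that becomes the ``$2$'' in front of $\gamma$.

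Combining, $\theta_\alpha^2\le(2+O(\epsilon))\gamma\,\mathrm{F}_{\mathrm{opt}}+O(\epsilon)\FNormS{\matE}$ after absorbing the $\widetilde{\matE}$ cross term by an AM-GM split, and adding $\theta_\beta^2\le\FNormS{\matE}$ yields a bound of the form $(2+O(\epsilon))\gamma\,\mathrm{F}_{\mathrm{opt}}+(1+O(\epsilon))\FNormS{\matE}$. Rescaling $\epsilon$ by a fixed constant, bounding $\FNormS{\matE}$ through $\Expect{\FNormS{\matE}}\le(1+\epsilon)\mathrm{F}_{\mathrm{opt}}$, and taking a union bound over the (constant) failure probabilities of Lemmas~\ref{lem:random}, \ref{lem:rsall}, and~\ref{lem:fnorm}, the Markov step on $\FNormS{\matE}$, and the $1-\delta_\gamma$ event of the black-box $\gamma$-approximation, delivers $\FNormS{\matA-\matX_{\tilde\gamma}\matX_{\tilde\gamma}\transp\matA}\le(1+(2+\epsilon)\gamma)\mathrm{F}_{\mathrm{opt}}$ with probability $0.2-\delta_\gamma$; the choice $r=\Theta(k\log k/\epsilon^2)$ and the constant $c_1$ in Algorithm~\ref{alg:featureSelect} are dictated precisely by these probability requirements. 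I expect the main obstacle to be this $\theta_\alpha$ transfer: correctly carrying the $\gamma$-guarantee from $\matC$ back to $\matA\matZ\matZ\transp$ through $\matH$ and $\widetilde{\matE}$, pinning the leading constant at $2$ rather than something larger, and --- the subtlest point --- handling the correlation between the data-dependent sampling $(\matOmega,\matS)$ and the randomness of $\matZ$ so that the expectation and Markov arguments for $\FNormS{\matE}$ and $\FNormS{\matP_{\mathrm{opt}}\matC}$ remain legitimate.
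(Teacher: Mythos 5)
Your proposal is correct and follows essentially the same route as the paper's proof: the same Matrix Pythagorean split into $\FNormS{\matP_{\tilde\gamma}\matA\matZ\matZ\transp}$ and $\FNormS{\matP_{\tilde\gamma}\matE}$, the same use of Lemma~\ref{lem:rsall} to pass from $\matA\matZ\matZ\transp$ to $\matC(\matZ\transp\matOmega\matS)^+\matZ\transp$ plus $\widetilde{\matE}$, the same $\gamma$-transfer on the instance $(\matC,k)$, and the same control of $\TNorm{(\matZ\transp\matOmega\matS)^+}$ and $\FNormS{\matP_{\mathrm{opt}}\matC}$ via Lemmas~\ref{lem:random} and~\ref{lem:fnorm}. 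You have also correctly located the origin of the leading constant $2$ (the $1/\delta$ Markov inflation in Lemma~\ref{lem:fnorm}, taken with $\delta=3/4$ in the paper, combined with the $1/(1-\epsilon)$ singular-value factor) and the reason the success probability is only $0.2-\delta_\gamma$.
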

In words, given any set of points in some $n$-dimensional space and the number of clusters $k$, it suffices to select roughly $O(k \log k)$ actual features from the given points
and then run some $k$-means algorithm on this subset of the input. The theorem formally argues that the clustering it would be obtained in the low-dimensional space will be close to the clustering it would have been obtained after running the $k$-means method in the original high-dimensional data. 
We also state the result of the theorem in the notation we introduced in Section~\ref{sec:intro},
$$ \cl{F} (\mathcal{P}, \cl{S}_{\tilde{\gamma}}) \leq  \left(1+(2+\varepsilon)\gamma\right) \cl{F} (\mathcal{P}, \cl{S}_{opt}) .$$
Here, $ \cl{S}_{\tilde{\gamma}}$ is the partition obtained after running the $\gamma$-approximation $k$-means algorithm on the low-dimensional space. The approximation
factor is $\left(1+(2+\varepsilon)\gamma\right)$. The term $\gamma> 1$ is due to the fact that the $k$-means method that we run in the low-dimensional space does not recover the optimal $k$-means partition. The other factor $2+\varepsilon$ is due to the fact that we run $k$-means in the low-dimensional space.

\begin{proof} (of Theorem~\ref{fastkmeans})
We start by manipulating the term
$\FNormS{\matA -\matX_{\tilde{\gamma}} \matX_{\tilde{\gamma}}\transp \matA}$.
Notice that $\matA = \matA\matZ \matZ\transp + \matE$ (from Lemma~\ref{tropp2}). Also,
$$  \left( \left( \matI_{m } - \matX_{\tilde{\gamma}}\matX_{\tilde{\gamma}}\transp\right) \matA\matZ\matZ\transp \right)
\left( \left( \matI_{m } - \matX_{\tilde{\gamma}}\matX_{\tilde{\gamma}}\transp\right) \matE \right)\transp
= {\bf 0}_{m \times m},$$ because $\matZ\transp \matE\transp = {\bf 0}_{k \times m}$, by construction.
Now, using Matrix Pythagoras (see Lemma~\ref{lem:pythagoras}),
\begin{eqnarray}
\label{eqn:f1tt} \FNormS{\matA - \matX_{\tilde{\gamma}} \matX_{\tilde{\gamma}}\transp
\matA} = \underbrace{\FNormS{(\matI_{m } - \matX_{\tilde{\gamma}}
\matX_{\tilde{\gamma}}\transp) \matA\matZ \matZ\transp}}_{\theta_1^2} + \underbrace{\FNormS{(\matI_{m }
- \matX_{\tilde{\gamma}} \matX_{\tilde{\gamma}}\transp) \matE }}_{\theta_2^2}.
\end{eqnarray}
We first bound the second term of Eqn.~\eqref{eqn:f1tt}. Since
$\matI_{m }-\matX_{\tilde{\gamma}}\matX_{\tilde{\gamma}}\transp$ is a projection matrix,
it can be dropped without increasing the Frobenius norm (see Section~\ref{sec2}). Applying
Markov's inequality
on the equation of Lemma~\ref{tropp2}, we obtain that w.p. $0.99$, 
\begin{equation}\label{proof:short}
\FNormS{\matE} \le (1+100\varepsilon) \FNormS{\matA -\matA_k}.
\end{equation}
(See the Appendix for a short proof of this statement.)
Note also that $\matX_{\mathrm{opt}}\matX_{\mathrm{opt}}\transp\matA$ has rank at most $k$;
so, from the optimality of the SVD, overall,
$$
\theta_2^2 \le (1+100\varepsilon)\FNormS{\matA - \matA_k} \le (1+100\varepsilon)\FNormS{\matA - \matX_{\mathrm{opt}}\matX_{\mathrm{opt}}\transp\matA}=   (1+100\varepsilon) \mathrm{F}_{\mathrm{opt}}.
$$
We now bound the first term in Eqn.~\eqref{eqn:f1tt},
\begin{eqnarray}
\label{t0cc} \theta_1
&\leq&  \FNorm{(\matI_{m } -\matX_{\tilde{\gamma}}\matX_{\tilde{\gamma}}\transp)
\matA \matOmega \matS (\matZ\transp \matOmega \matS)^\dagger\matZ\transp} + \FNorm{\widetilde{\matE}} \\
\label{t1cc}
&\leq&  \FNorm{(\matI_{m } - \matX_{\tilde{\gamma}}\matX_{\tilde{\gamma}}\transp)\matA \matOmega \matS}
\TNorm{(\matZ\transp \matOmega \matS)^\dagger} + \FNorm{\widetilde{\matE}} \\
\label{t2cc}
&\leq& \sqrt{\gamma} \FNorm{(\matI_{m } - \matX_{\mathrm{opt}}\matX_{\mathrm{opt}}\transp)\matA \matOmega \matS}
\TNorm{(\matZ\transp \matOmega \matS)^\dagger}  + \FNorm{\widetilde{\matE}}
\end{eqnarray}
In Eqn.~\eqref{t0cc}, we used Lemma~\ref{lem:rsall} (for an unspecified failure probability $\delta$;
also, $\tilde{\matE} \in \R^{m \times n}$ is from that lemma), the triangle
inequality, and the fact that $\matI_m -
\matX_{\tilde{\gamma}}\matX_{\tilde{\gamma}}\transp$ is a projection matrix and
can be dropped without increasing the Frobenius norm. In
Eqn.~\eqref{t1cc}, we used spectral submultiplicativity and the fact that $\matZ\transp$
can be dropped without changing the spectral norm. In Eqn.~\eqref{t2cc}, we replaced
$\matX_{\tilde{\gamma}}$ by $\matX_{\mathrm{opt}}$ and the factor $\sqrt{\gamma}$ appeared
in the first term. To better understand this step, notice that
$\matX_{\tilde{\gamma}}$ gives a $\gamma$-approximation to the optimal
$k$-means clustering of $\matC = \matA \matOmega \matS$, so any other $m \times
k$ indicator matrix (e.g. $\matX_{\mathrm{opt}}$) satisfies,
$$\FNormS{\left(\matI_{m } - \matX_{\tilde{\gamma}} \matX_{\tilde{\gamma}}\transp\right) \matA \matOmega \matS}
\leq \gamma \min_{\matX \in \cal{X}} \FNormS{(\matI_{m } - \matX \matX\transp) \matA \matOmega \matS} \leq
\gamma \FNormS{\left(\matI_{m} - \matX_{\mathrm{opt}} \matX_{\mathrm{opt}}\transp\right) \matA \matOmega \matS}.$$
By using Lemma \ref{lem:fnorm} with $\delta = 3/4$ and Lemma~\ref{lem:random} (for an unspecified failure probability $\delta$),
%and the union bound on these two probabilistic events, w.p. $1 - \frac{3}{4} - \delta$:
$$ \FNorm{(\matI_{m} -\matX_{\mathrm{opt}}\matX_{\mathrm{opt}}\transp)\matA \matOmega \matS}
\TNorm{(\matZ\transp \matOmega \matS)^\dagger} \leq  \sqrt{ \frac{4}{3-3\varepsilon} \mathrm{F}_{\mathrm{opt}}}.$$
We are now in position to bound $\theta_1$.  In Lemmas~\ref{lem:rsall} and~\ref{lem:random}, let $\delta=0.01$.
Assuming $1 \leq \gamma$,
$$ \theta_1 \leq \left(  \sqrt{\frac{4}{3-3\varepsilon}} + \frac{1.6 \varepsilon \sqrt{1+100\varepsilon}}{\sqrt{0.01}}\right) \sqrt{\gamma } \sqrt{\mathrm{F}_{\mathrm{opt}}}
\leq \left(  \sqrt{2} + 94 \varepsilon  \right) \sqrt{\gamma } \sqrt{\mathrm{F}_{\mathrm{opt}} }.
$$
%This bound holds w.p. $1 - \frac{3}{4} - 3 \cdot 0.01 - 0.01$ due to Lemma~\ref{lem:rsall}
%and the fact that w.p. $0.99$, $$\FNormS{\matE} \le (1+100\varepsilon) \FNormS{\matA -\matA_k}.$$
The last inequality follows from our choice of $\varepsilon < 1/3$ and elementary algebra. Taking squares on both sides,
$$ \theta_{1}^2 \le \left(  \sqrt{2} + 94 \varepsilon  \right)^2  \gamma \mathrm{F}_{\mathrm{opt}}
 \le (2 + 3900 \varepsilon) \gamma \mathrm{F}_{\mathrm{opt}}.$$
%Let $\delta = 0.1$. Then: $ \theta_{3}^2 \le (1 + 199\varepsilon) \gamma \mathrm{F}_{\mathrm{opt}}.$
Overall (assuming $1 \le \gamma $),
$$ \FNormS{\matA -\matX_{\tilde{\gamma}} \matX_{\tilde{\gamma}}\transp \matA} \le \theta_{1}^2 + \theta_{2}^2 \le
(2 + 3900 \varepsilon) \gamma \mathrm{F}_{\mathrm{opt}} + (1+100\varepsilon) \mathrm{F}_{\mathrm{opt}} \le \mathrm{F}_{\mathrm{opt}} + ( 2 + 4\cdot10^3 \varepsilon  )\gamma \mathrm{F}_{\mathrm{opt}}.
$$
Rescaling $\varepsilon$ accordingly ($c_1= 16\cdot10^6$) gives the bound in the Theorem. The failure probability
follows by a union bound on Lemma \ref{lem:fnorm} (with $\delta = 3/4$), Lemma~\ref{lem:rsall} (with $\delta=0.01$),
Lemma~\ref{lem:random} (with $\delta=0.01$), Lemma~\ref{tropp2} (followed by Markov's inequality with $\delta=0.01$),
and Definition~\ref{def:approx} (with failure probability $\delta_{\gamma}$). Indeed, $0.75 + 3 \cdot 0.01 + 0.01 + 0.01 + \delta_{\gamma}= 0.8 + \delta_{\gamma}$
is the overall failure probability, hence the bound in the theorem holds w.p. $0.2 - \delta_{\gamma}$.
\end{proof}

%%%%%%%%%%%%%%%%%%%%%%%%%%%%%%%%%%%%%%%%%%%%%%%%%%%%%%%%%%%%%%%%%%%%%%%%%%%%%%%%%%%%
%%%%%%%%%%%%%%%%%%%%%%%%%%%%%%%%%%%%%%%%%%%%%%%%%%%%%%%%%%%%%%%%%%%%%%%%%%%%%%%%%%%%
%%%%%%%%%%%%%%%%%%%%%%%%%%%%%%%%%%%%%%%%%%%%%%%%%%%%%%%%%%%%%%%%%%%%%%%%%%%%%%%%%%%%
\section{Feature Extraction with Random Projections}\label{sec5}

\begin{algorithm}[t]
\begin{framed}
\textbf{Input:} Dataset $\matA\in\R^{m\times n}$, number of clusters $k$, and \math{0 < \varepsilon < \frac{1}{3}}. \\
\noindent \textbf{Output:} \math{\matC \in \R^{m \times r}} with $r = O(k / \varepsilon^2)$ artificial features.
\begin{algorithmic}[1]
\STATE Set  $r =  c_2 \cdot k /\varepsilon^2$, for a sufficiently large constant $c_2$ (see proof).
\STATE Compute a random $n \times r$ matrix $\matR$ as follows. For all $i=1,\dots ,n$, $j=1, \dots ,r$ (i.i.d.)
   \[ \matR_{ij} = \begin{cases}
       +1/\sqrt{r}, \text{w.p. 1/2},\\
      -1/\sqrt{r}, \text{w.p. 1/2}.
\end{cases} \]
\STATE Compute $\matC = \matA \matR $ with the Mailman Algorithm (see text).
\STATE Return $\matC\in \R^{m \times r}$.
\end{algorithmic}
\caption{Randomized Feature Extraction for $k$-means Clustering.}
\label{alg:chap62}
\end{framed}
\end{algorithm}
We prove that any set of $m$ points in $n$
dimensions (rows in a matrix $\matA \in \R^{m \times n}$) can be projected into
$r = O(k / \varepsilon^2)$ dimensions in
$O(m n \lceil \varepsilon^{-2} k/ \log(n) \rceil )$ time such that, with
constant probability, the objective value of the optimal $k$-partition of the points
is preserved within a factor of $2+\varepsilon$. The projection is
done by post-multiplying $\matA$ with an $n \times r$ random
matrix $\matR$ having entries $+1/\sqrt{r}$ or $-1/\sqrt{r}$ with equal probability.
%More specifically, Algorithm~\ref{alg:chap62} on input $\matA, k$, and $\varepsilon$ returns $\matC \in \R^{m \times r}$ with $r = O(k / \varepsilon^2)$.
%

%
The algorithm needs $O(m k /\varepsilon^2)$ time to generate $\matR$; then,
the product $\matA \matR$ can be naively computed in $O(mnk/\varepsilon^2)$. However, one can employ the so-called mailman algorithm for matrix
multiplication~\cite{LZ09} and compute the product $\matA \matR$ in $O(m n \lceil \varepsilon^{-2} k/ \log(n) \rceil )$.
Indeed, the mailman algorithm computes (after preprocessing) a matrix-vector product of any $n$-dimensional vector (row of $\matA$) with an
$n \times \log(n)$ sign matrix in $O(n)$ time.
Reading the input $n \times \log n$ sign matrix requires $O(n\log n)$ time. However, in our case we only consider
multiplication with a random sign matrix, therefore we can avoid the preprocessing step by directly computing
a random correspondence matrix as discussed in~\cite[Preprocessing Section]{LZ09}.
By partitioning the columns of our $n \times r$ matrix $\matR$ into
$\lceil r/\log(n)\rceil$ blocks, the desired running time follows.

Theorem \ref{thm:second_result} is our quality-of-approximation result
regarding the clustering that can be obtained with the features returned
from Algorithm~\ref{alg:chap62} .
Notice that if $\gamma = 1$, the distortion is at most $2+\varepsilon$, as advertised in Table~\ref{tb:sum}.
If the $\gamma$-approximation algorithm is~\cite{KSS04} the overall approximation factor
would be $(1 + (1+\varepsilon)^2) = 2 + O(\varepsilon)$ with running time of the order
$O( m n \lceil \varepsilon^{-2} k/ \log(n) \rceil + 2^{(k/\varepsilon)^{O(1)}} m k / \varepsilon^2 )$.
\begin{theorem}\label{thm:second_result}
Let $\matA \in \R^{m \times n}$ and $k$ be the inputs of the $k$-means clustering problem.
Let $\varepsilon \in (0,1/3)$ and construct features $\matC \in \R^{m \times r}$ with $r=O(k/\varepsilon^2)$ by
using Algorithm~\ref{alg:chap62} in $O(m n \lceil \varepsilon^{-2} k/ \log(n) \rceil )$ time.
Run any $\gamma$-approximation $k$-means algorithm with failure probability $\delta_{\gamma}$
on $\matC, k$ and construct $\matX_{\tilde{\gamma}}$.
Then, w.p. at least $0.96 - \delta_{\gamma}$,
$$
\FNorm{\matA - \matX_{\tilde{\gamma}} \matX_{\tilde{\gamma}}\transp  \matA}^2
\leq \left(1+(1+\varepsilon)\gamma\right) \FNorm{\matA - \matX_{\mathrm{opt}}
\matX_{\mathrm{opt}}\transp  \matA}^2.
$$
\end{theorem}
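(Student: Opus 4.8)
The plan is to follow the template of the SVD argument sketched in Section~\ref{sec:in} and the proof of Theorem~\ref{fastkmeans}, substituting the random-projection lemmas (Lemma~\ref{lem:RP:FNorm} and Lemma~\ref{lem:rpall}) for the random-sampling lemmas used there. First I would split $\matA = \matA_k + \matA_{\rho-k}$ and note that, since $\matA_k\matA_{\rho-k}\transp = \bm{0}_{m\times m}$, the matrices $(\matI_m - \matX_{\tilde{\gamma}}\matX_{\tilde{\gamma}}\transp)\matA_k$ and $(\matI_m - \matX_{\tilde{\gamma}}\matX_{\tilde{\gamma}}\transp)\matA_{\rho-k}$ satisfy the orthogonality hypothesis $\matX\matY\transp = \bm{0}_{m\times m}$ of Lemma~\ref{lem:pythagoras}. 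Matrix Pythagoras then splits the objective as
\[
\FNormS{\matA - \matX_{\tilde{\gamma}}\matX_{\tilde{\gamma}}\transp\matA} = \underbrace{\FNormS{(\matI_m - \matX_{\tilde{\gamma}}\matX_{\tilde{\gamma}}\transp)\matA_k}}_{\theta_1^2} + \underbrace{\FNormS{(\matI_m - \matX_{\tilde{\gamma}}\matX_{\tilde{\gamma}}\transp)\matA_{\rho-k}}}_{\theta_2^2},
\]
and the two terms are handled separately.

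The term $\theta_2^2$ is bounded deterministically and contributes the additive $1$ in the final factor $1 + (1+\epsilon)\gamma$. Since $\matI_m - \matX_{\tilde{\gamma}}\matX_{\tilde{\gamma}}\transp$ is a projection matrix it can be dropped without increasing the Frobenius norm, so $\theta_2^2 \le \FNormS{\matA_{\rho-k}} = \FNormS{\matA - \matA_k}$; and because $\matX_{\mathrm{opt}}\matX_{\mathrm{opt}}\transp\matA$ has rank at most $k$, optimality of the SVD gives $\FNormS{\matA - \matA_k} \le \mathrm{F}_{\mathrm{opt}}$.

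The real work is in bounding $\theta_1$, and the step I expect to be the main obstacle is to replace the rank-$k$ matrix $\matA_k$ by a rank-$k$ surrogate lying in the row space of the projected data $\matC = \matA\matR$, so that the $\gamma$-approximation optimality on $\matC$ can be invoked. Lemma~\ref{lem:rpall}(2) supplies exactly this, writing $\matA_k = \matA\matR(\matV_k\transp\matR)^+\matV_k\transp + \widetilde{\matE}$ with $\FNorm{\widetilde{\matE}} \le 3\epsilon\FNorm{\matA - \matA_k} \le 3\epsilon\sqrt{\mathrm{F}_{\mathrm{opt}}}$. After peeling off $\widetilde{\matE}$ via the triangle inequality, I would drop $\matV_k\transp$ (which preserves the Frobenius norm since $\matV_k\transp\matV_k = \matI_k$) and apply spectral submultiplicativity to factor out $\TNorm{(\matV_k\transp\matR)^+} \le 1/\sqrt{1-\epsilon}$, the latter coming from Lemma~\ref{lem:rpall}(1). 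The residual is then $\FNorm{(\matI_m - \matX_{\tilde{\gamma}}\matX_{\tilde{\gamma}}\transp)\matA\matR}$, and since $\matX_{\tilde{\gamma}}$ is a $\gamma$-approximate $k$-means solution on $\matC = \matA\matR$, this is at most $\sqrt{\gamma}\,\FNorm{(\matI_m - \matX_{\mathrm{opt}}\matX_{\mathrm{opt}}\transp)\matA\matR}$. Finally, applying Lemma~\ref{lem:RP:FNorm} to $\matY = (\matI_m - \matX_{\mathrm{opt}}\matX_{\mathrm{opt}}\transp)\matA$ bounds this last quantity by $\sqrt{(1+\epsilon)\mathrm{F}_{\mathrm{opt}}}$. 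Collecting the pieces, and using $\gamma \ge 1$ to absorb the $\widetilde{\matE}$ term into the $\sqrt{\gamma}$ factor, yields $\theta_1 \le \sqrt{\gamma}\,\sqrt{\mathrm{F}_{\mathrm{opt}}}\big(\sqrt{(1+\epsilon)/(1-\epsilon)} + 3\epsilon\big)$; squaring and expanding the $\epsilon$-dependence gives $\theta_1^2 \le (1 + O(\epsilon))\gamma\,\mathrm{F}_{\mathrm{opt}}$.

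Adding the two bounds gives $\FNormS{\matA - \matX_{\tilde{\gamma}}\matX_{\tilde{\gamma}}\transp\matA} \le \mathrm{F}_{\mathrm{opt}} + (1 + O(\epsilon))\gamma\,\mathrm{F}_{\mathrm{opt}}$, and rescaling $\epsilon$ (i.e.\ choosing the constant $c_2$ in Algorithm~\ref{alg:chap62} large enough) converts the $O(\epsilon)$ into $\epsilon$, producing the claimed factor $1 + (1+\epsilon)\gamma$. For the probability, I would take a union bound over the failure of Lemma~\ref{lem:RP:FNorm} (probability $\le 0.01$), the failure of Lemma~\ref{lem:rpall} (probability $\le 0.03$), and the failure of the $\gamma$-approximation algorithm (probability $\le \delta_{\gamma}$), so the whole argument succeeds with probability at least $0.96 - \delta_{\gamma}$. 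The genuine technical difficulty is hidden inside Lemma~\ref{lem:rpall}; given that lemma, the assembly above is essentially the same bookkeeping as in Theorem~\ref{fastkmeans}, and the improved $(2+\epsilon)$ rather than $(3+\epsilon)$ factor arises because the random projection controls $\FNorm{(\matI_m - \matX_{\mathrm{opt}}\matX_{\mathrm{opt}}\transp)\matA\matR}$ directly through Lemma~\ref{lem:RP:FNorm}, avoiding the lossy intermediate step present in the sampling analysis.
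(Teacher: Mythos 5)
Your proposal is correct and follows essentially the same route as the paper's own proof: the $\matA_k + \matA_{\rho-k}$ split via Matrix Pythagoras, the deterministic bound on the $\matA_{\rho-k}$ term, the substitution $\matA_k = \matA\matR(\matV_k\transp\matR)^+\matV_k\transp + \widetilde{\matE}$ from Lemma~\ref{lem:rpall}(2), the $\gamma$-approximation step on $\matC=\matA\matR$, and the final appeal to Lemma~\ref{lem:rpall}(1) and Lemma~\ref{lem:RP:FNorm}, with the same union-bound accounting for the $0.96-\delta_\gamma$ probability. The only (immaterial) difference is that you bound $\TNorm{(\matV_k\transp\matR)^+}$ by $1/\sqrt{1-\epsilon}$, which is slightly sharper than the $1/(1-\epsilon)$ the paper uses, and both suffice after the rescaling of $\epsilon$.
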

In words, given any set of points in some $n$-dimensional space and the number of clusters $k$, it suffices to create (via random projections) roughly $O(k)$ new features 
and then run some $k$-means algorithm on this new input. The theorem formally argues that the clustering it would be obtained in the low-dimensional space will be close to the clustering it would have been obtained after running the $k$-means method in the original high-dimensional data. 
We also state the result of the theorem in the notation we introduced in Section~\ref{sec:intro},
$$ \cl{F} (\mathcal{P}, \cl{S}_{\tilde{\gamma}}) \leq  \left(1+(1+\varepsilon)\gamma\right) \cl{F} (\mathcal{P}, \cl{S}_{opt}) .$$
Here, $ \cl{S}_{\tilde{\gamma}}$ is the partition obtained after running the $\gamma$-approximation $k$-means algorithm on the low-dimensional space. 
The approximation
factor is $\left(1+(1+\varepsilon)\gamma\right)$. The term $\gamma> 1$ is due to the fact that the $k$-means method that we run in the low-dimensional space does not recover the optimal $k$-means partition. The other factor $1+\varepsilon$ is due to the fact that we run $k$-means in the low-dimensional space. 

\begin{proof}(of Theorem~\ref{thm:second_result})
We start by manipulating the term
$\FNormS{\matA -\matX_{\tilde{\gamma}} \matX_{\tilde{\gamma}}\transp \matA}$.
Notice that $\matA = \matA_k + \matA_{\rho-k}$. Also, $
\left( \left( \matI_{m } - \matX_{\tilde{\gamma}}\matX_{\tilde{\gamma}}\transp\right) \matA_k \right)
\left( \left( \matI_{m } - \matX_{\tilde{\gamma}}\matX_{\tilde{\gamma}}\transp\right) \matA_{\rho-k} \right)\transp
= {\bf 0}_{m \times m}$, because $\matA_k \matA_{\rho-k}\transp = {\bf 0}_{m \times m}$, by the orthogonality of
the corresponding subspaces. Now, using Lemma~\ref{lem:pythagoras},
\begin{eqnarray}\label{eqn:f1}
\FNorm{\matA - \matX_{\tilde{\gamma}} \matX_{\tilde{\gamma}}\transp \matA}^2\ =\ \underbrace{\FNorm{(\matI_m -
\matX_{\tilde{\gamma}} \matX_{\tilde{\gamma}}\transp ) \matA_k}^2}_{\theta_3^2}\ +\ \underbrace{\FNorm{(\matI_m - \matX_{\tilde{\gamma}}
\matX_{\tilde{\gamma}}\transp )\matA_{\rho-k}}^2}_{\theta_4^2}.
\end{eqnarray}
We first bound the second term of Eqn.~\eqref{eqn:f1}. Since
$\matI_m-\matX_{\tilde{\gamma}}\matX_{\tilde{\gamma}}\transp $ is a projection
matrix, it can be dropped without increasing the Frobenius norm. So, by using
this and the fact that $\matX_{\mathrm{opt}}\matX_{\mathrm{opt}}\transp \matA$ has rank at most $k$,
\begin{eqnarray} \label{eqn:f2}
\theta_4^2\ \leq\ \FNorm{\matA_{\rho-k}}^2\  = \FNorm{\matA - \matA_k}^2\ \leq\ \FNorm{ \matA -
\matX_{\mathrm{opt}}\matX_{\mathrm{opt}}\transp \matA }^2.
\end{eqnarray}
We now bound the first term of Eqn.~\eqref{eqn:f1},
\begin{eqnarray}
\label{t0} \theta_3
&\leq&  \FNorm{(\matI_m -\matX_{\tilde{\gamma}}\matX_{\tilde{\gamma}}\transp )\matA \matR(\matV_k \matR)^\dagger\matV_k\transp } + \FNorm{\widetilde{\matE}} \\
\label{t1}
&\leq&  \FNorm{(\matI_m -\matX_{\tilde{\gamma}}\matX_{\tilde{\gamma}}\transp )\matA \matR} \TNorm{(\matV_k \matR)^\dagger} + \FNorm{\widetilde{\matE}}\\
\label{t2}
&\leq& \sqrt{\gamma} \FNorm{(\matI_m - \matX_{\mathrm{opt}}\matX_{\mathrm{opt}}\transp )\matA \matR} \TNorm{(\matV_k \matR)^\dagger}  + \FNorm{\widetilde{\matE}} \\
\label{t3}
&\leq& \sqrt{\gamma}  \sqrt{1+\varepsilon} \FNorm{(\matI_m - \matX_{\mathrm{opt}}\matX_{\mathrm{opt}}\transp )\matA}  \frac{1}{1-\varepsilon} + 3 \varepsilon \FNorm{\matA -\matA_{k}} \\
\label{t4}
&\leq& \sqrt{\gamma}  (1+2.5 \varepsilon) \FNorm{(\matI_m -\matX_{\mathrm{opt}}\matX_{\mathrm{opt}}\transp )\matA} +  3 \varepsilon\sqrt{\gamma} \FNorm{(\matI_m -\matX_{\mathrm{opt}}\matX_{\mathrm{opt}}\transp )\matA}\\
\label{t5}
&=& \sqrt{\gamma} ( 1 + 5.5 \varepsilon) \FNorm{(\matI_m
-\matX_{\mathrm{opt}}\matX_{\mathrm{opt}}\transp )\matA}
\end{eqnarray}
In Eqn.~\eqref{t0}, we used the second statement of Lemma \ref{lem:rpall}, the
triangle inequality for matrix norms, and the fact that $\matI_m -
\matX_{\tilde{\gamma}}\matX_{\tilde{\gamma}}\transp $ is a projection matrix
and can be dropped without increasing the Frobenius norm.
In Eqn.~\eqref{t1}, we used spectral submultiplicativity
and the fact that $\matV_k\transp $ can be dropped
without changing the spectral norm.
In Eqn.~\eqref{t2}, we replaced
$\matX_{\tilde{\gamma}}$ by $\matX_{\mathrm{opt}}$ and the factor $\sqrt{\gamma}$
appeared in the first term. To better understand this step, notice
that $\matX_{\tilde{\gamma}}$ gives a $\gamma$-approximation to the
optimal $k$-means clustering of the matrix $\matC$, and any other $m
\times k$ indicator matrix (for example, the matrix $\matX_{\mathrm{opt}}$)
satisfies,
\begin{equation*}
\FNorm{\left(\matI_m - \matX_{\tilde{\gamma}} \matX_{\tilde{\gamma}}\transp
\right) \matC}^2 \leq\ \gamma\ \min_{\matX \in \cal{X}} \FNorm{(\matI_m - \matX
\matX\transp ) \matC}^2\ \leq \gamma \FNorm{\left(\matI_m - \matX_{\mathrm{opt}}
\matX_{\mathrm{opt}}\transp \right) \matC}^2.
\end{equation*}
In Eqn.~\eqref{t3}, we used the first statement of
Lemma \ref{lem:rpall} and Lemma~\ref{lem:RP:FNorm} with
$ \matY= (\matI - \matX_{\mathrm{opt}} \matX_{\mathrm{opt}}\transp )\matA$.
In Eqn.~\eqref{t4}, we used the fact that
$\gamma \geq 1$, the optimality of SVD, and that for any $\varepsilon \in (0,1/3)$, $\sqrt{1+\varepsilon}/(1-\varepsilon) \leq 1+ 2.5\varepsilon$.
Taking squares in
Eqn.~\eqref{t5} we obtain,
\[ \theta_3^2\ \leq\ \gamma \left( 1 + 5.5 \varepsilon \right)^2 \FNorm{(\matI_m
-\matX_{\mathrm{opt}}\matX_{\mathrm{opt}}\transp )\matA}^2\
\leq\ \gamma ( 1 + 15 \varepsilon ) \FNorm{(\matI_m
-\matX_{\mathrm{opt}}\matX_{\mathrm{opt}}\transp )\matA}^2.\]
Rescaling $\varepsilon$ accordingly gives the approximation bound in the theorem ($c_2 = 3330 \cdot 15^2$).
The failure probability $0.04 + \delta_{\gamma}$ follows by a
union bound on the failure probability $\delta_{\gamma}$ of the $\gamma$-approximation $k$-means
algorithm (Definition~\ref{def:approx}), Lemma~\ref{lem:RP:FNorm}, and Lemma~\ref{lem:rpall}.
\end{proof}

\paragraph{Disscusion.} As we mentioned in Section~\ref{sec:prior}, one can project the data down to $O(\log (m) / \varepsilon^2)$ dimensions and guarantee a clustering error which is not more than $(1+\varepsilon)$ times the optimal clustering error. This result is straightforward using the Johnson-Lindenstrauss lemma, which asserts that after such a dimension reduction all pairwise (Euclidian) distances of the points would be preserved by a factor $(1+\varepsilon)$~\cite{JL84}. If distances are preserved, then all clusterings - hence the optimal one - are preserved by the same factor. 

Our result here extends the Johnson-Lindenstrauss result in a remarkable way. It argues that much less dimensions suffice to preserve the optimal clustering in the data. We do not prove that pairwise distances are preserved. Our proof uses the linear algebraic formulation of the $k$-means clustering problem and shows that if the spectral information of certain matrices is preserved then the $k$-means clustering is preserved as well. Our bound is worse than the relative error bound obtained with $O(\log (m) / \varepsilon^2)$ dimensions; we believe though that it is possible to obtain a relative error bound and the $(2+\varepsilon)$ bound might be an artifact of the analysis. 

\begin{algorithm}
\begin{framed}
\textbf{Input:} Dataset $\matA\in\R^{m\times n}$, number of clusters $k$, and \math{0 < \varepsilon < 1}. \\
\noindent \textbf{Output:} \math{\matC \in \R^{m \times k}} with $k$ artificial features.
\begin{algorithmic}[1]
\STATE Let $ \matZ = \text{FastFrobeniusSVD}(\matA, k, \varepsilon)$; $\matZ \in \R^{n \times k}$ (via Lemma~\ref{tropp2}).
\STATE Return $\matC = \matA \matZ \in \R^{m \times k}$.
\end{algorithmic}
\caption{Randomized Feature Extraction for $k$-means Clustering.}
\label{alg:chap63}
\end{framed}
\end{algorithm}
\section{Feature Extraction with Approximate SVD}\label{sec6}
Finally, we present a feature extraction algorithm that employs the SVD
to construct $r = k$ artificial features. Our method and proof techniques
are the same with those of~\cite{DFKVV99} with the only difference being the fact that we
use a fast approximate (randomized) SVD via Lemma~\ref{tropp2} as opposed
to the expensive exact deterministic SVD. In fact, replacing $\matZ = \matV_k$ reproduces
the proof in~\cite{DFKVV99}. Our choice gives a considerably
faster algorithm with approximation error comparable to the error in~\cite{DFKVV99}.
\begin{theorem}\label{thm:first_result}
Let $\matA \in \R^{m \times n}$ and $k$ be inputs of the $k$-means clustering problem.
Let $\varepsilon \in (0,1)$ and construct features $\matC \in \R^{m \times k}$
by using Algorithm~\ref{alg:chap63} in $O(m n k / \varepsilon )$ time.
Run any $\gamma$-approximation $k$-means algorithm with failure probability $\delta_{\gamma}$
on $\matC, k$ and construct $\matX_{\tilde{\gamma}}$.
Then, w.p. at least $0.99 - \delta_{\gamma}$,
$$
\FNorm{\matA - \matX_{\tilde{\gamma}} \matX_{\tilde{\gamma}}\transp  \matA}^2
\leq \left(1+(1+\varepsilon)\gamma\right) \FNorm{\matA - \matX_{\mathrm{opt}}
\matX_{\mathrm{opt}}\transp  \matA}^2.
$$
\end{theorem}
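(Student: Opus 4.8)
The plan is to reuse the template from the proofs of Theorems~\ref{fastkmeans} and~\ref{thm:second_result}, specialized to the simpler setting in which the feature matrix $\matC=\matA\matZ$ is already of rank $k$, so that no intermediate rank-$r$-to-rank-$k$ switch (and hence no analog of Lemma~\ref{lem:rsall} or Lemma~\ref{lem:rpall}) is needed. First I would invoke Lemma~\ref{tropp2} to write $\matA=\matA\matZ\matZ\transp+\matE$ with $\matZ\transp\matZ=\matI_k$ and $\matE\matZ=\bm{0}_{m\times k}$. The identity $\matE\matZ=\bm{0}_{m\times k}$ gives $\matA\matZ\matZ\transp\matE\transp=\matA\matZ(\matE\matZ)\transp=\bm{0}_{m\times m}$, so the two matrices $(\matI_m-\matX_{\tilde{\gamma}}\matX_{\tilde{\gamma}}\transp)\matA\matZ\matZ\transp$ and $(\matI_m-\matX_{\tilde{\gamma}}\matX_{\tilde{\gamma}}\transp)\matE$ satisfy the orthogonality hypothesis of Lemma~\ref{lem:pythagoras}, which splits the objective as
\[
\FNormS{\matA-\matX_{\tilde{\gamma}}\matX_{\tilde{\gamma}}\transp\matA}=\underbrace{\FNormS{(\matI_m-\matX_{\tilde{\gamma}}\matX_{\tilde{\gamma}}\transp)\matA\matZ\matZ\transp}}_{\theta_5^2}+\underbrace{\FNormS{(\matI_m-\matX_{\tilde{\gamma}}\matX_{\tilde{\gamma}}\transp)\matE}}_{\theta_6^2}.
\]

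Next I would bound the two terms separately. For $\theta_6^2$, I would drop the projection $\matI_m-\matX_{\tilde{\gamma}}\matX_{\tilde{\gamma}}\transp$ and control $\FNormS{\matE}$ via Lemma~\ref{tropp2}: applying Markov's inequality with failure probability $0.01$ to the \emph{nonnegative} random variable $\FNormS{\matE}-\FNormS{\matA-\matA_k}$ (nonnegative because $\matA\matZ\matZ\transp$ has rank at most $k$ while $\matA_k$ is the optimal rank-$k$ approximation) yields $\FNormS{\matE}\le(1+100\epsilon)\FNormS{\matA-\matA_k}$ with probability $0.99$; since $\matX_{\mathrm{opt}}\matX_{\mathrm{opt}}\transp\matA$ has rank at most $k$, the optimality of the SVD then gives $\theta_6^2\le(1+100\epsilon)\FNormS{\matA-\matA_k}\le(1+100\epsilon)\mathrm{F}_{\mathrm{opt}}$. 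For $\theta_5^2$, I would first drop the trailing $\matZ\transp$ (which leaves the Frobenius norm unchanged, since $\FNormS{\matB\matZ\transp}=\Trace{\matB\matZ\transp\matZ\matB\transp}=\FNormS{\matB}$ using $\matZ\transp\matZ=\matI_k$) to rewrite $\theta_5=\FNorm{(\matI_m-\matX_{\tilde{\gamma}}\matX_{\tilde{\gamma}}\transp)\matC}$; then I would use that $\matX_{\tilde{\gamma}}$ is a $\gamma$-approximate clustering of $\matC$ to replace it by $\matX_{\mathrm{opt}}$ at the cost of a $\sqrt{\gamma}$ factor, and finally apply spectral submultiplicativity together with $\TNorm{\matZ}=1$ to obtain $\theta_5\le\sqrt{\gamma}\,\FNorm{(\matI_m-\matX_{\mathrm{opt}}\matX_{\mathrm{opt}}\transp)\matA}$, i.e. $\theta_5^2\le\gamma\,\mathrm{F}_{\mathrm{opt}}$.

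Adding the two bounds gives $\FNormS{\matA-\matX_{\tilde{\gamma}}\matX_{\tilde{\gamma}}\transp\matA}\le\gamma\,\mathrm{F}_{\mathrm{opt}}+(1+100\epsilon)\mathrm{F}_{\mathrm{opt}}$. To reach the stated form I would run FastFrobeniusSVD with internal accuracy $\epsilon/100$, so that the additive error becomes $(1+\epsilon)\mathrm{F}_{\mathrm{opt}}$ (this only rescales the $O(mnk/\epsilon)$ running time by a constant), and then use $\gamma\ge1$ to absorb the free additive $\epsilon$ into the product via $1+\gamma+\epsilon\le 1+\gamma+\epsilon\gamma=1+(1+\epsilon)\gamma$. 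The failure probability follows from a union bound over the Markov step (failure $0.01$) and the $\gamma$-approximation algorithm (failure $\delta_\gamma$), giving success probability $0.99-\delta_\gamma$. I expect the only delicate point to be the Markov step: a naive application to $\FNormS{\matE}$ itself would produce the useless bound $100(1+\epsilon)\FNormS{\matA-\matA_k}$, so the argument must instead be applied to the excess $\FNormS{\matE}-\FNormS{\matA-\matA_k}$, exploiting its nonnegativity to convert the in-expectation relative-error guarantee of Lemma~\ref{tropp2} into a high-probability $(1+100\epsilon)$ bound. Everything else is the same orthogonality-plus-optimality bookkeeping as in the earlier two theorems, with $\matZ$ playing the role of $\matV_k$ in the argument of~\cite{DFKVV99}.
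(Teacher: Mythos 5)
Your proposal is correct and follows essentially the same route as the paper's proof: the Matrix Pythagorean split of $\FNormS{\matA-\matX_{\tilde{\gamma}}\matX_{\tilde{\gamma}}\transp\matA}$ using $\matE\matZ=\bm{0}$, the Markov-on-the-excess argument to convert Lemma~\ref{tropp2}'s in-expectation bound into the high-probability $(1+100\epsilon)$ bound on $\theta_6^2$, the $\sqrt{\gamma}$-substitution plus $\TNorm{\matZ}=1$ to get $\theta_5^2\le\gamma\,\mathrm{F}_{\mathrm{opt}}$, and the final rescaling of $\epsilon$ with the union bound giving $0.99-\delta_\gamma$. The only (immaterial) difference is that you note dropping the trailing $\matZ\transp$ preserves the Frobenius norm exactly, where the paper settles for the inequality via spectral submultiplicativity.
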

In words, given any set of points in some $n$-dimensional space and the number of clusters $k$, it suffices to create exactly $k$ new features 
(via an approximate Singular Value Decomposition)
and then run some $k$-means algorithm on this new dataset. The theorem formally argues that the clustering it would be obtained in the low-dimensional space will be close to the clustering it would have been obtained after running the $k$-means method in the original high-dimensional data. 
We also state the result of the theorem in the notation we introduced in Section~\ref{sec:intro}:
$$ \cl{F} (\mathcal{P}, \cl{S}_{\tilde{\gamma}}) \leq  \left(1+(1+\varepsilon)\gamma\right) \cl{F} (\mathcal{P}, \cl{S}_{opt}) .$$
Here, $ \cl{S}_{\tilde{\gamma}}$ is the partition obtained after running the $\gamma$-approximation $k$-means algorithm on the low-dimensional space. 
The approximation
factor is $\left(1+(1+\varepsilon)\gamma\right)$. The term $\gamma> 1$ is due to the fact that the $k$-means method that we run in the low-dimensional space does not recover the optimal $k$-means partition. The other factor $1+\varepsilon$ is due to the fact that we run $k$-means in the low-dimensional space. 
\begin{proof}(of Theorem~\ref{thm:first_result})
We start by manipulating the term
$\FNormS{\matA -\matX_{\tilde{\gamma}} \matX_{\tilde{\gamma}}\transp \matA}$.
Notice that $\matA = \matA\matZ \matZ\transp + \matE$. Also,
$
\left( \left( \matI_{m } - \matX_{\tilde{\gamma}}\matX_{\tilde{\gamma}}\transp\right) \matA\matZ\matZ\transp \right)
\left( \left( \matI_{m } - \matX_{\tilde{\gamma}}\matX_{\tilde{\gamma}}\transp\right) \matE \right)\transp
= {\bf 0}_{m \times m}$, because $\matZ\transp \matE\transp = {\bf 0}_{k \times m}$, by construction.
Now, using the Matrix Pythagorean theorem (see Lemma~\ref{lem:pythagoras} in Section~\ref{sec2}),
\begin{eqnarray}
\label{eqn:f1tt1} \FNormS{\matA - \matX_{\tilde{\gamma}} \matX_{\tilde{\gamma}}\transp
\matA} = \underbrace{\FNormS{(\matI_{m } - \matX_{\tilde{\gamma}}
\matX_{\tilde{\gamma}}\transp) \matA\matZ \matZ\transp}}_{\theta_1^2} + \underbrace{\FNormS{(\matI_{m }
- \matX_{\tilde{\gamma}} \matX_{\tilde{\gamma}}\transp) \matE }}_{\theta_2^2}.
\end{eqnarray}

We first bound the second term of Eqn.~\eqref{eqn:f1tt1}. Since
$\matI_{m }-\matX_{\tilde{\gamma}}\matX_{\tilde{\gamma}}\transp$ is a projection matrix,
it can be dropped without increasing the Frobenius norm (see Section~\ref{sec2}). Applying
Markov's inequality
on the equation of Lemma~\ref{tropp2}, we obtain that w.p. $0.99$, 
\begin{equation}\label{proof:short2}
\FNormS{\matE} \le (1+100\varepsilon) \FNormS{\matA -\matA_k}.
\end{equation}
(This is Eqn.~\ref{proof:short}, of which we provided a short proof in the Appendix.)
Note also that $\matX_{\mathrm{opt}}\matX_{\mathrm{opt}}\transp\matA$ has rank at most $k$;
so, from the optimality of the SVD, overall,
$$
\theta_2^2 \le (1+100\varepsilon)\FNormS{\matA - \matA_k} \le (1+100\varepsilon)\FNormS{\matA - \matX_{\mathrm{opt}}\matX_{\mathrm{opt}}\transp\matA}=   (1+100\varepsilon) \mathrm{F}_{\mathrm{opt}}.
$$
Hence, it follows that w.p. $0.99$,
$$
\theta_2^2 \le
%(1+100\varepsilon)\FNormS{\matA - \matA_k} \le (1+100\varepsilon)\FNormS{\matA - \matX_{\mathrm{opt}}\matX_{\mathrm{opt}}\transp\matA}=
(1+100\varepsilon) \mathrm{F}_{\mathrm{opt}}.
$$
We now bound the first term in Eqn.~\eqref{eqn:f1tt1},
\begin{eqnarray}
%
%\label{cb1} \theta_3
%&\leq&  \FNorm{(\matI_{m } -\matX_{\tilde{\gamma}}\matX_{\tilde{\gamma}}\transp)
%\matA \matZ \matZ\transp }  \\
\label{cb2}
\label{cb1} \theta_1
&\leq&  \FNorm{(\matI_{m } - \matX_{\tilde{\gamma}}\matX_{\tilde{\gamma}}\transp)\matA \matZ}  \\
\label{cb3}
&\leq& \sqrt{\gamma} \FNorm{(\matI_{m } - \matX_{\mathrm{opt}}\matX_{\mathrm{opt}}\transp)\matA \matZ}\\
\label{cb4}
&\leq& \sqrt{\gamma} \FNorm{(\matI_{m } - \matX_{\mathrm{opt}}\matX_{\mathrm{opt}}\transp)\matA}
\end{eqnarray}
In Eqn.~\eqref{cb2}, we used spectral submultiplicativity and the fact that $\TNorm{\matZ\transp}=1$.
In Eqn.~\eqref{cb3},  we replaced
$\matX_{\tilde{\gamma}}$ by $\matX_{\mathrm{opt}}$ and the factor $\sqrt{\gamma}$ appeared
in the first term (similar argument as in the proof of Theorem~\ref{fastkmeans}).
In Eqn.~\eqref{cb4}, we used spectral submultiplicativity and the fact that $\TNorm{\matZ}=1$.
Overall (assuming $\gamma \ge 1$),
$$ \FNormS{\matA -\matX_{\tilde{\gamma}} \matX_{\tilde{\gamma}}\transp \matA} \le \theta_{1}^2 + \theta_{2}^2 \le
 \gamma \mathrm{F}_{\mathrm{opt}} + (1+100\varepsilon) \mathrm{F}_{\mathrm{opt}} \le \mathrm{F}_{\mathrm{opt}} + (1 + 10^2 \varepsilon  )\gamma \mathrm{F}_{\mathrm{opt}}.
$$
The failure probability is $0.01+\delta_{\gamma}$, from a union bound on Lemma~\ref{tropp2} and Definition~\ref{def:approx}.
Finally, rescaling $\varepsilon$ accordingly gives the approximation bound in the theorem.
\end{proof}

\section{Experiments}\label{sec:experiments}
This section describes a preliminary experimental evaluation of the feature selection and feature extraction algorithms presented in this work. We implemented the proposed algorithms in MATLAB~\cite{MATLAB:2011} and compared them against a few other prominent dimensionality reduction techniques such as the Laplacian scores~\cite{HCN06}. Laplacian scores is a popular feature selection method for clustering and classification. We performed all the experiments on a Mac machine with a dual core 2.8 Ghz processor and 8 GB of RAM. 

Our empirical findings are far from exhaustive, however they indicate that the feature selection and feature extraction algorithms presented in this work achieve a satisfactory empirical performance with rather small values of $r$ (far smaller than the theoretical bounds presented here). We believe that the large constants that appear in our theorems (see e.g., Theorem~\ref{thm:second_result}) are artifacts of our theoretical analysis and can be certainly improved. 
\subsection{Dimensionality Reduction Methods}
Given $m$ points described with respect to $n$ features and the number of clusters $k$, our goal is to select or construct $r$ features on which we execute Lloyd's algorithm for $k$-means on this constructed set of features. In this section, we experiment with various methods for selecting or constructing the features. The number of features to be selected or extracted is part of the input as well. In particular, in Algorithm~\ref{alg:featureSelect} we do not consider $\varepsilon$ to be part of the input. We test the performance of the proposed algorithms for various values of $r$, and we compare our algorithms against other feature selection and feature extraction methods from the literature, that we summarize below: 
% We summarize the \emph{dimensionality reduction methods} that we will perform our experimental evaluation below.
%
%
\begin{description}
  \item[1. Randomized Sampling with Exact SVD (Sampl/SVD).] This corresponds to Algorithm~\ref{alg:featureSelect} with the following modification. In the first step of the algorithm, the matrix $\matZ$ is calculated to contain exactly the top $k$ right singular vectors of $\matA$. 
  \item[2. Randomized Sampling with Approximate SVD (Sampl/ApproxSVD).] This corresponds to Algorithm~\ref{alg:featureSelect} with $\varepsilon$ fixed to $1/3$. 
  \item[3. Random Projections (RP).] Here we use Algorithm~\ref{alg:chap62}. However, in our implementation we use the naive approach for the matrix-matrix multiplication in the third step (not the Mailman algorithm~\cite{LZ09}). 
  \item[4. SVD.] This is Algorithm~\ref{alg:chap63} with the following modification. In the first step of the algorithm, the matrix $\matZ$ is calculated to contain exactly the top $k$ right singular vectors of $\matA$.   
  \item[5. Approximate SVD (ApprSVD).] This corresponds to Algorithm~\ref{alg:chap63} with $\varepsilon$ fixed to $1/3$.
  \item[6. Laplacian Scores (LapScores).] This corresponds to the feature selection method described in~\cite{HCN06}. 
  We use the MATLAB code from \cite{LapScores} with the default parameters. In particular, in MATLAB notation we executed the following commands, 
\[\matW = \text{constructW}(\matA); \text{Scores} = \text{LaplacianScore}(\matA, \matW);\]
 % \item[8. Uniform Sampling.] We select independently and uniformly $r$ features out of the $n$ input features. 
\end{description}

Finally, we also compare all these methods against evaluating the $k$-means algorithm in the full dimensional dataset which we denote by \texttt{kMeans}. 
\subsection{$k$-means method}
Although our theorems allow the use of any $\gamma$-approximation algorithm for $k$-means, in practice the Lloyd's algorithm performs very well~\cite{Llo82}. Hence, we employ the Lloyd's algorithm in our experiments. Namely, every time we mention \emph{``we run $k$-means''}, we mean that we run 500 iterations of the Lloyd's algorithm with 5 different random initializations and return the best outcome over all repetitions, i.e., in MATLAB notation we run the following command, \texttt{kmeans(A, k, `Replicates', 5, `Maxiter', 500)}. 
\subsection{Datasets}\label{subsec:data}
\begin{figure}[ht!]
\centering
\subfigure[\texttt{Synth} - Running time]{\includegraphics[width=0.475\textwidth]{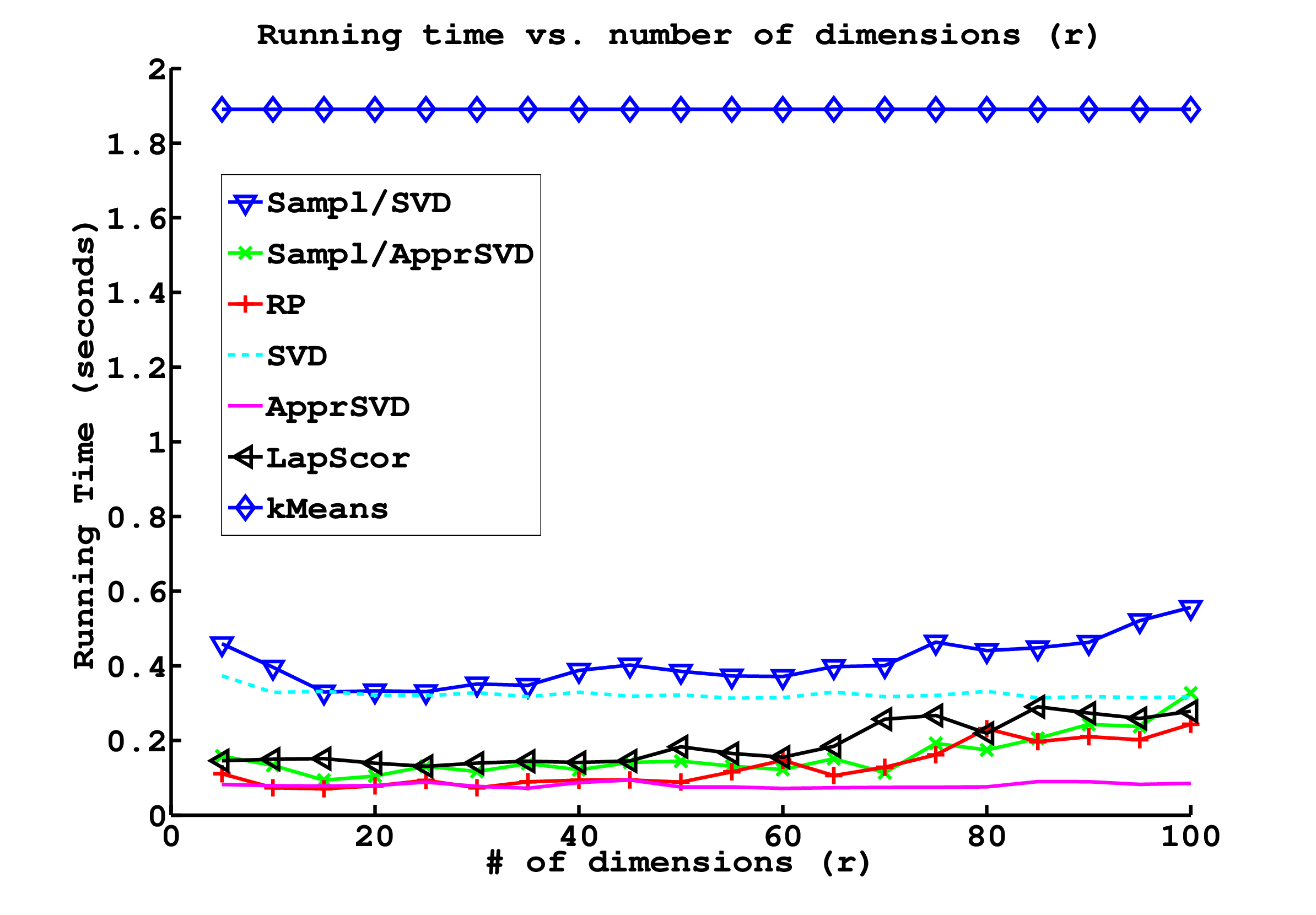}}
\subfigure[\texttt{USPS} - Running time]{\includegraphics[width=0.475\textwidth]{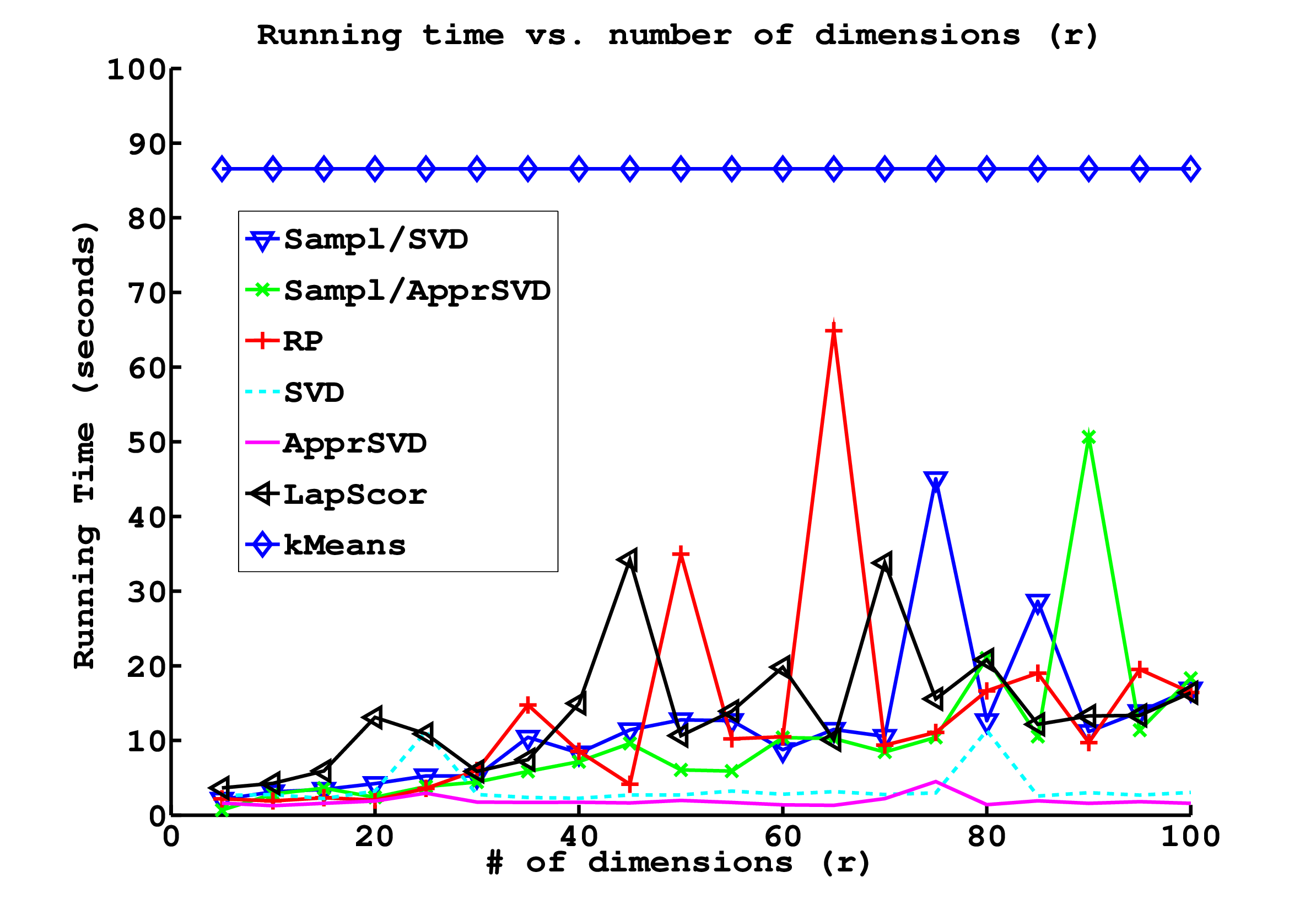}}\\
\subfigure[\texttt{Synth} - Objective]{\includegraphics[width=0.475\textwidth]{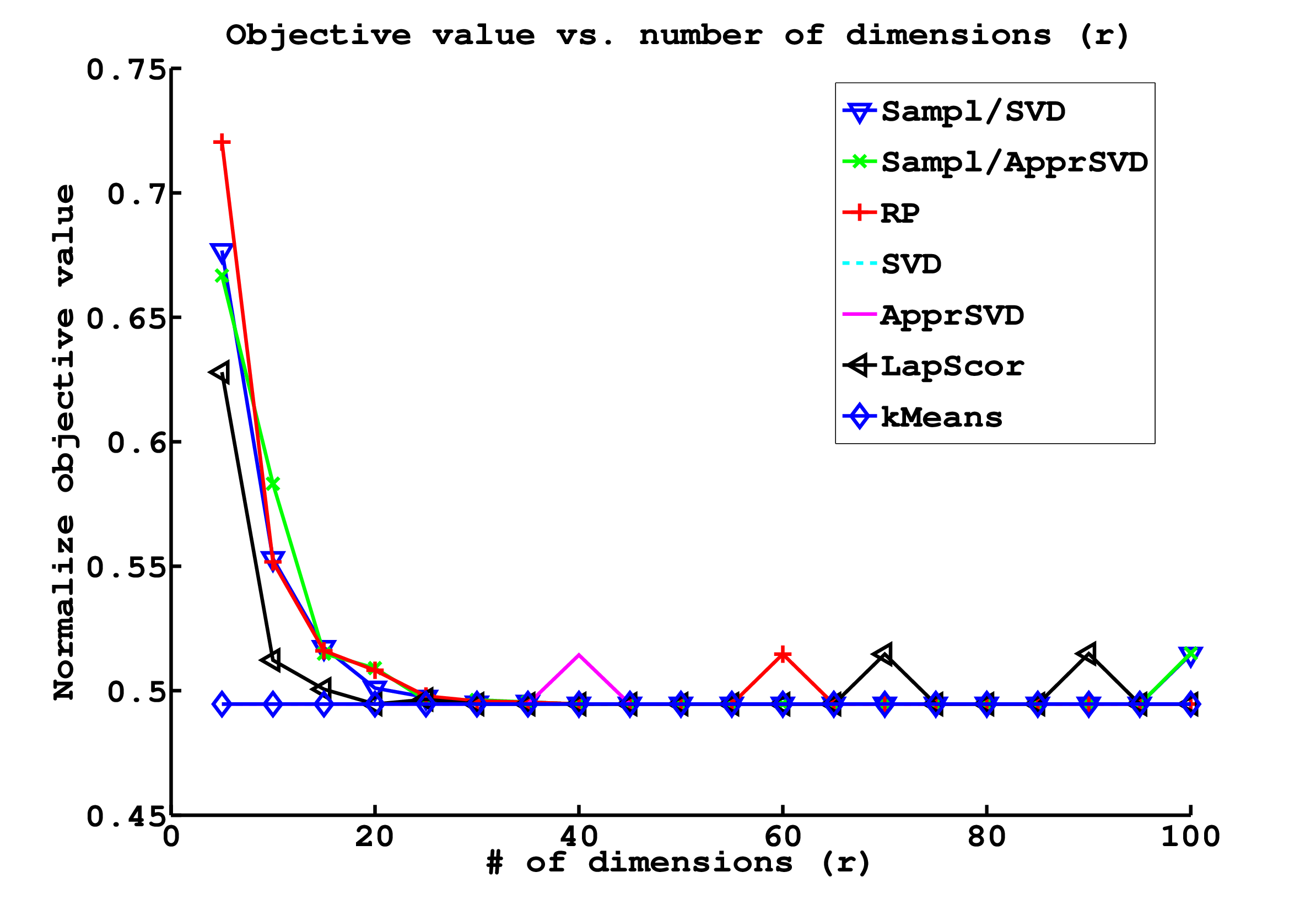}}
\subfigure[\texttt{USPS} - Objective]{\includegraphics[width=0.475\textwidth]{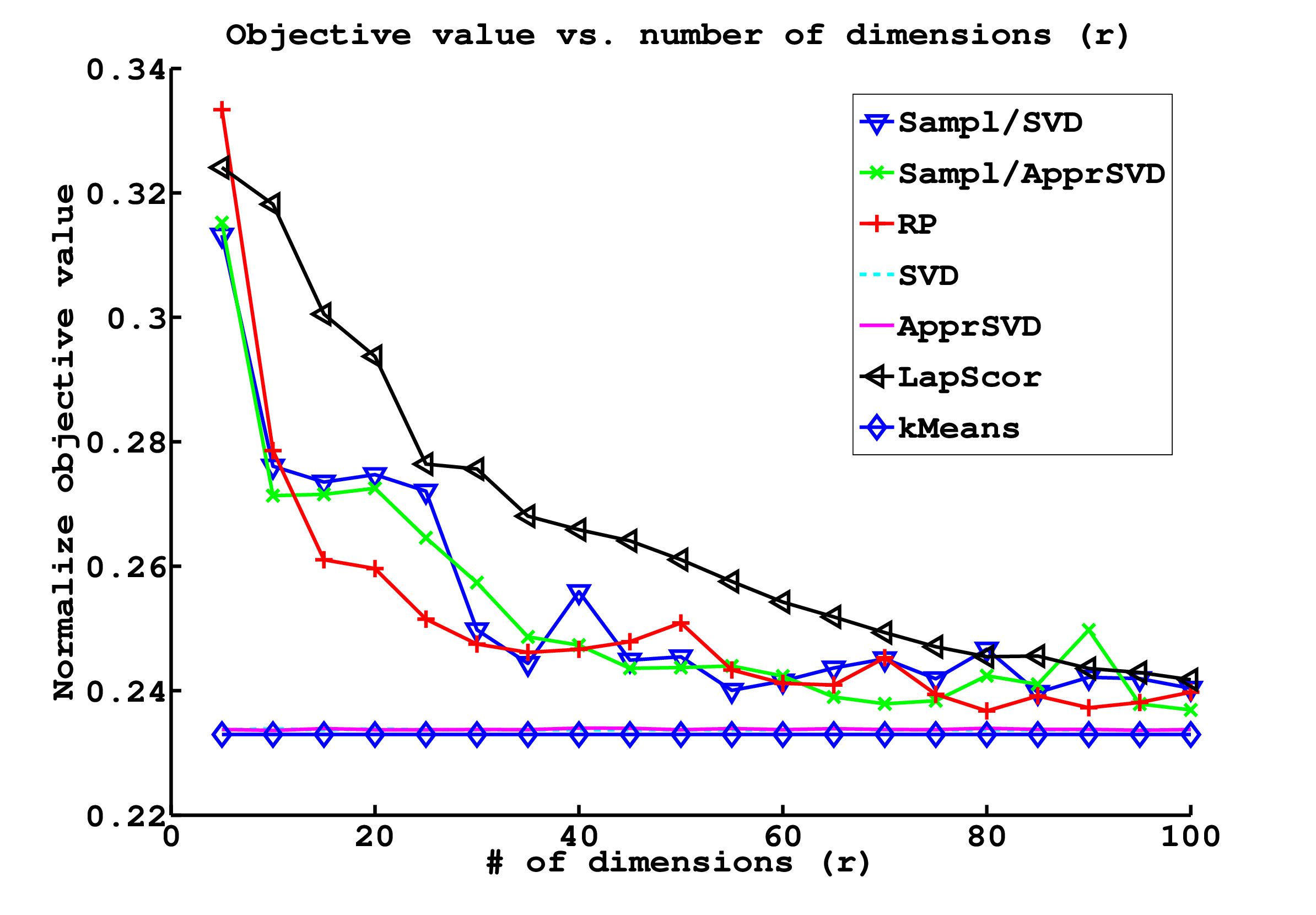}}\\
\subfigure[\texttt{Synth} - Accuracy of clustering]{\includegraphics[width=0.475\textwidth]{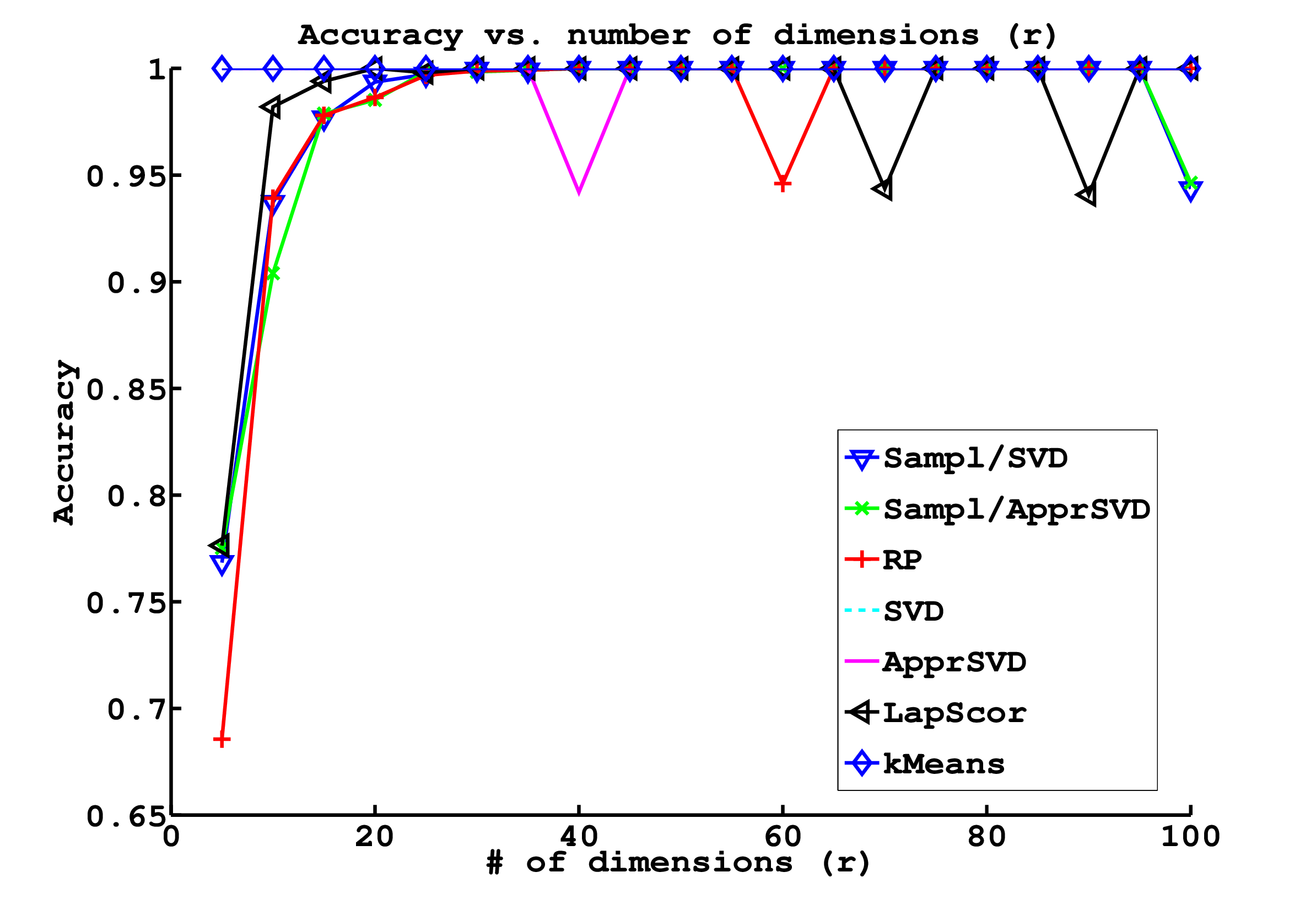}}
\subfigure[\texttt{USPS} - Accuracy of clustering]{\includegraphics[width=0.475\textwidth]{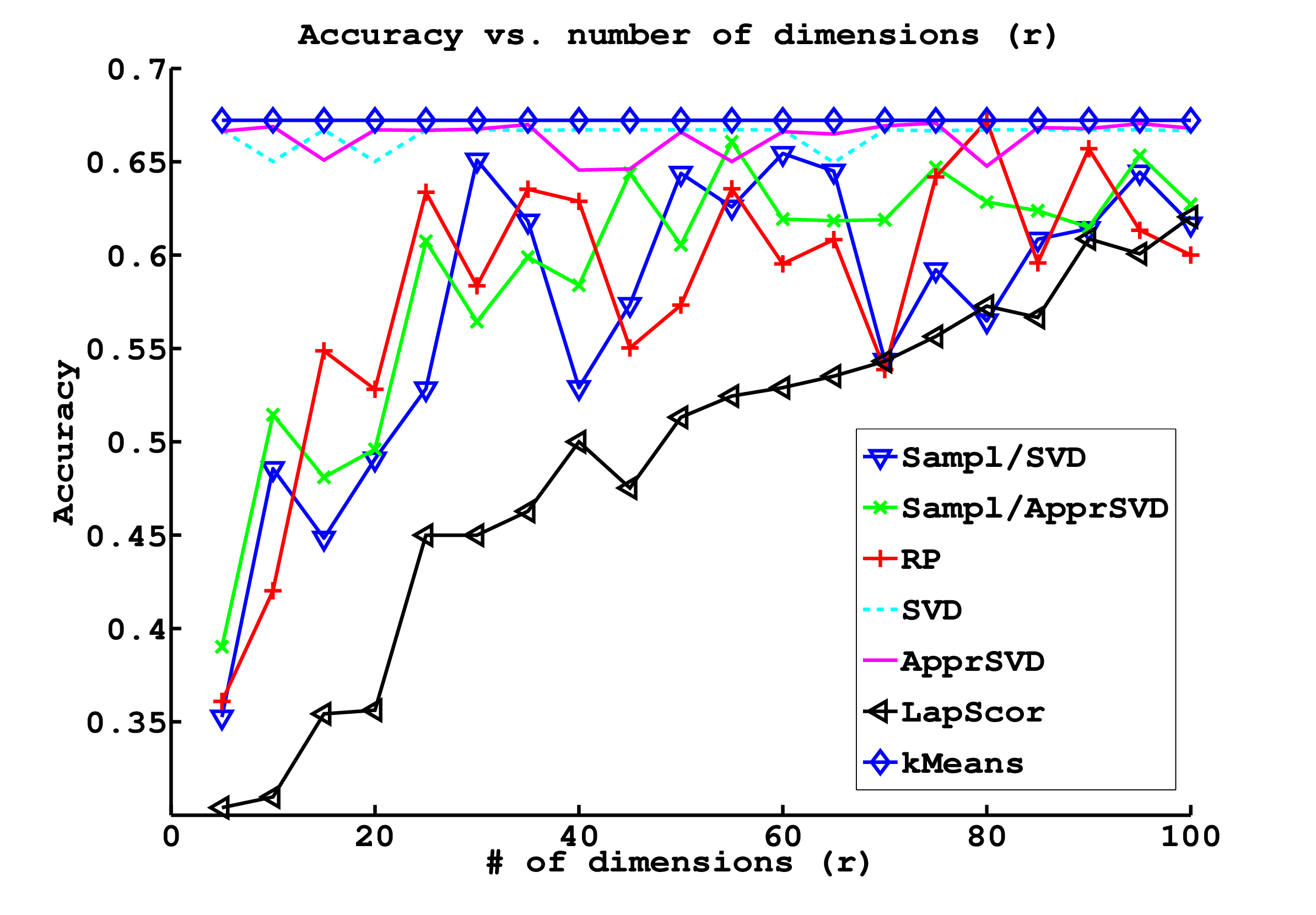}}
\caption{\small Plot of running time (a),(b), objective value (c),(d) and accuracy (e),(f) versus the number of projected dimensions for several dimensionality reduction approaches. Left column corresponds to the \texttt{Synth} dataset, whereas the right column corresponds to the \texttt{USPS} dataset.}
\label{fig:usps}
\end{figure}
\begin{figure}[ht!]
\centering
\subfigure[\texttt{COIL20} - Running time]{\includegraphics[width=0.475\textwidth]{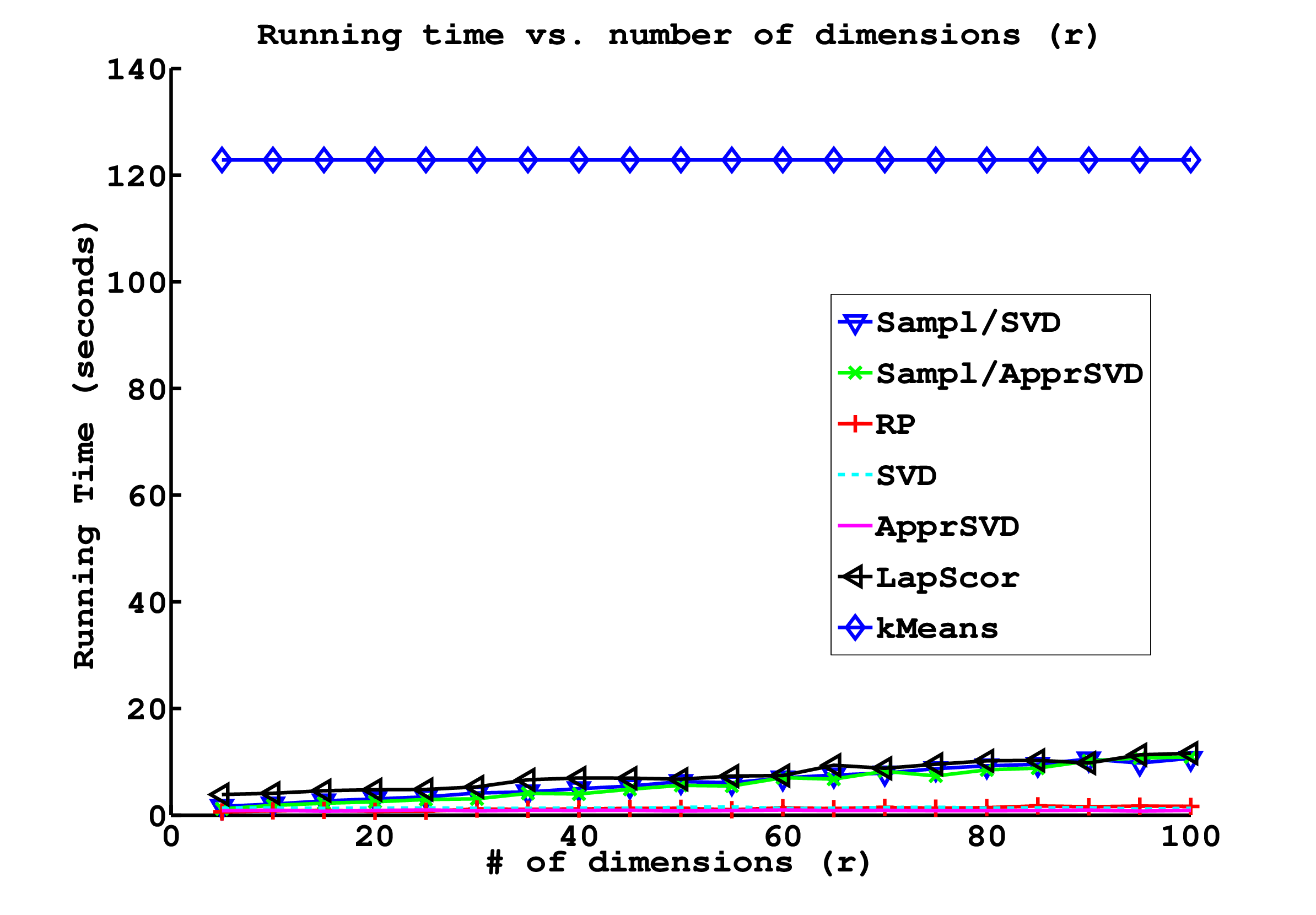}}
\subfigure[\texttt{LIGHT}  - Running time]{\includegraphics[width=0.475\textwidth]{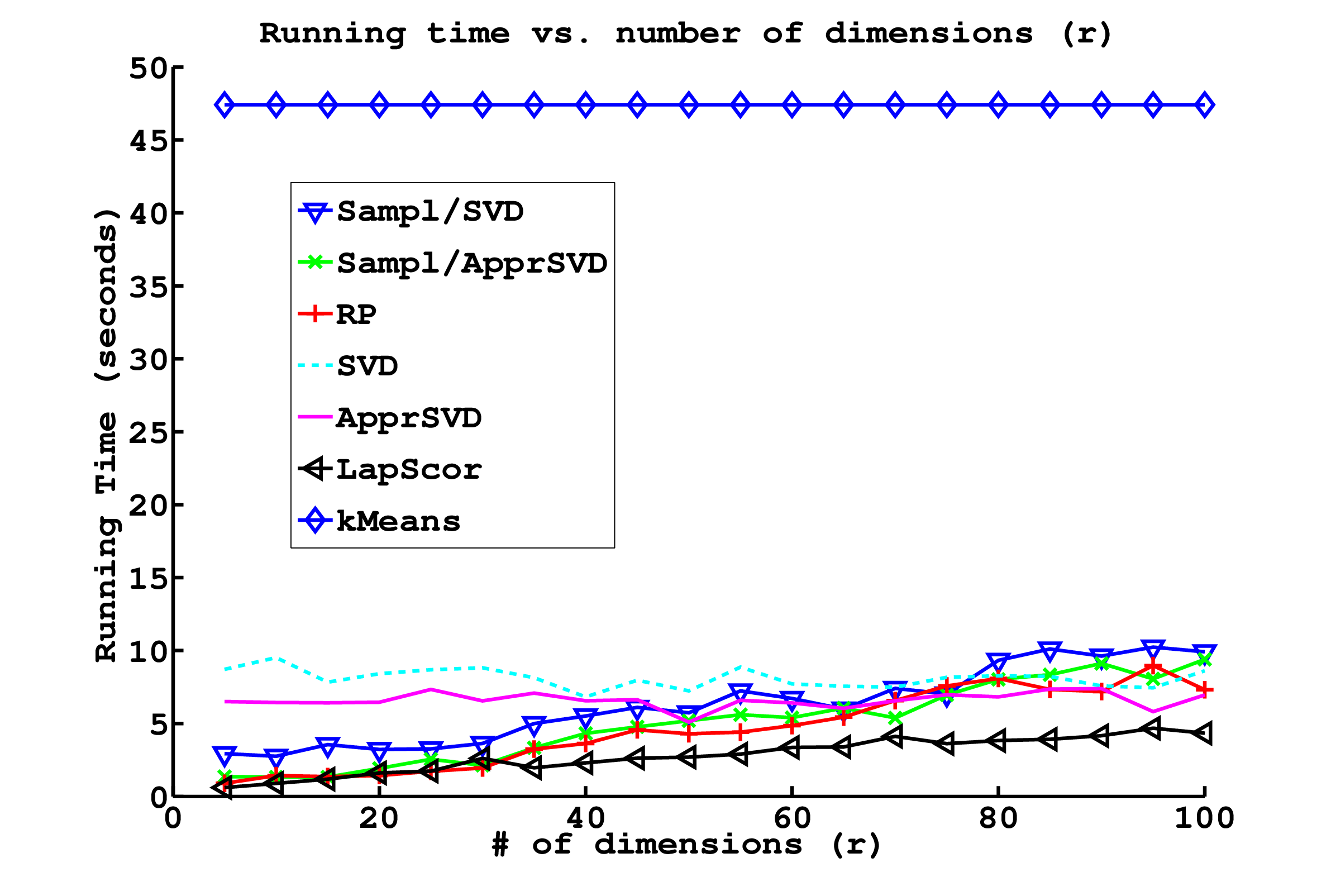}}\\
\subfigure[\texttt{COIL20} - Objective]{\includegraphics[width=0.475\textwidth]{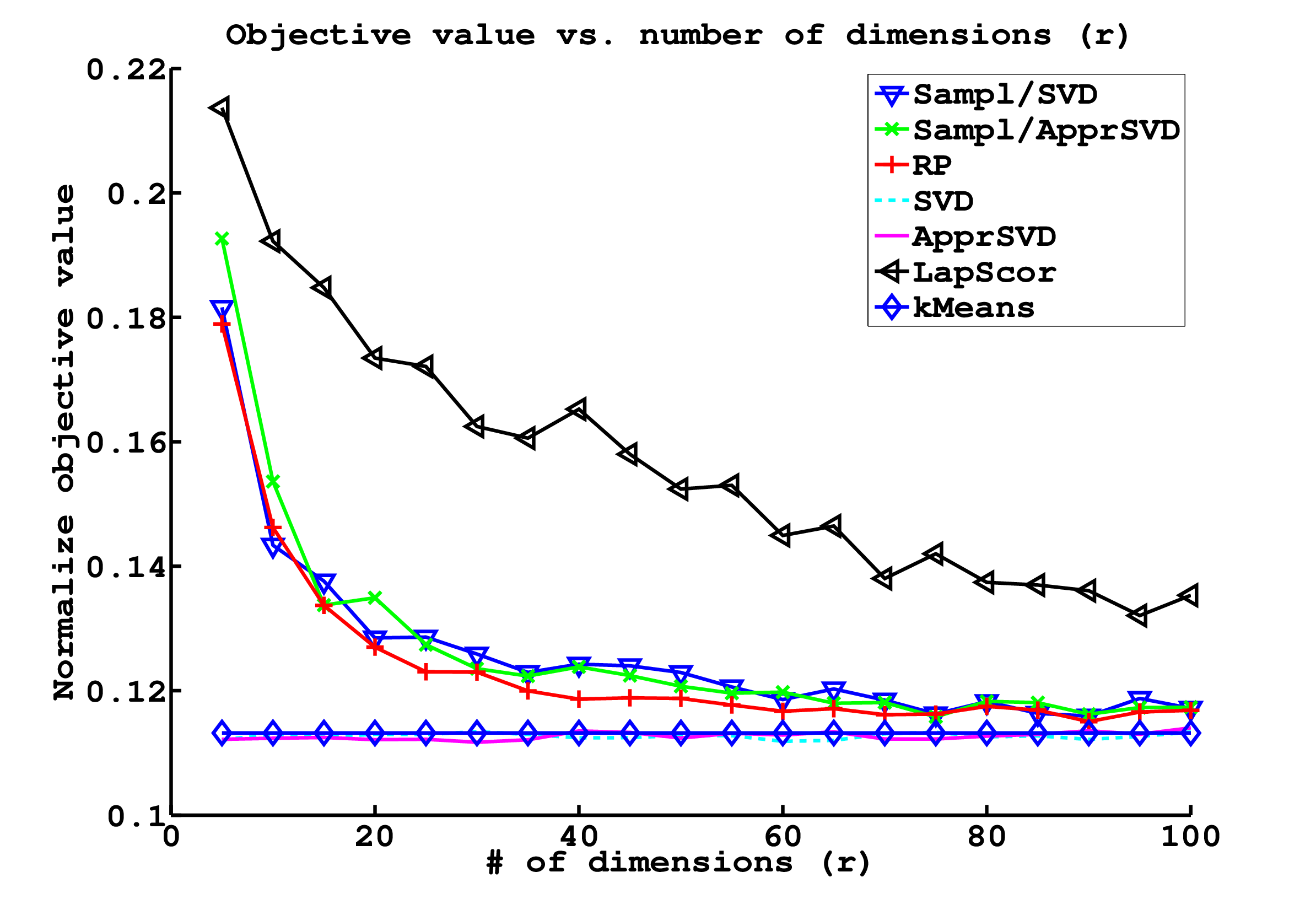}}
\subfigure[\texttt{LIGHT}  - Objective]{\includegraphics[width=0.475\textwidth]{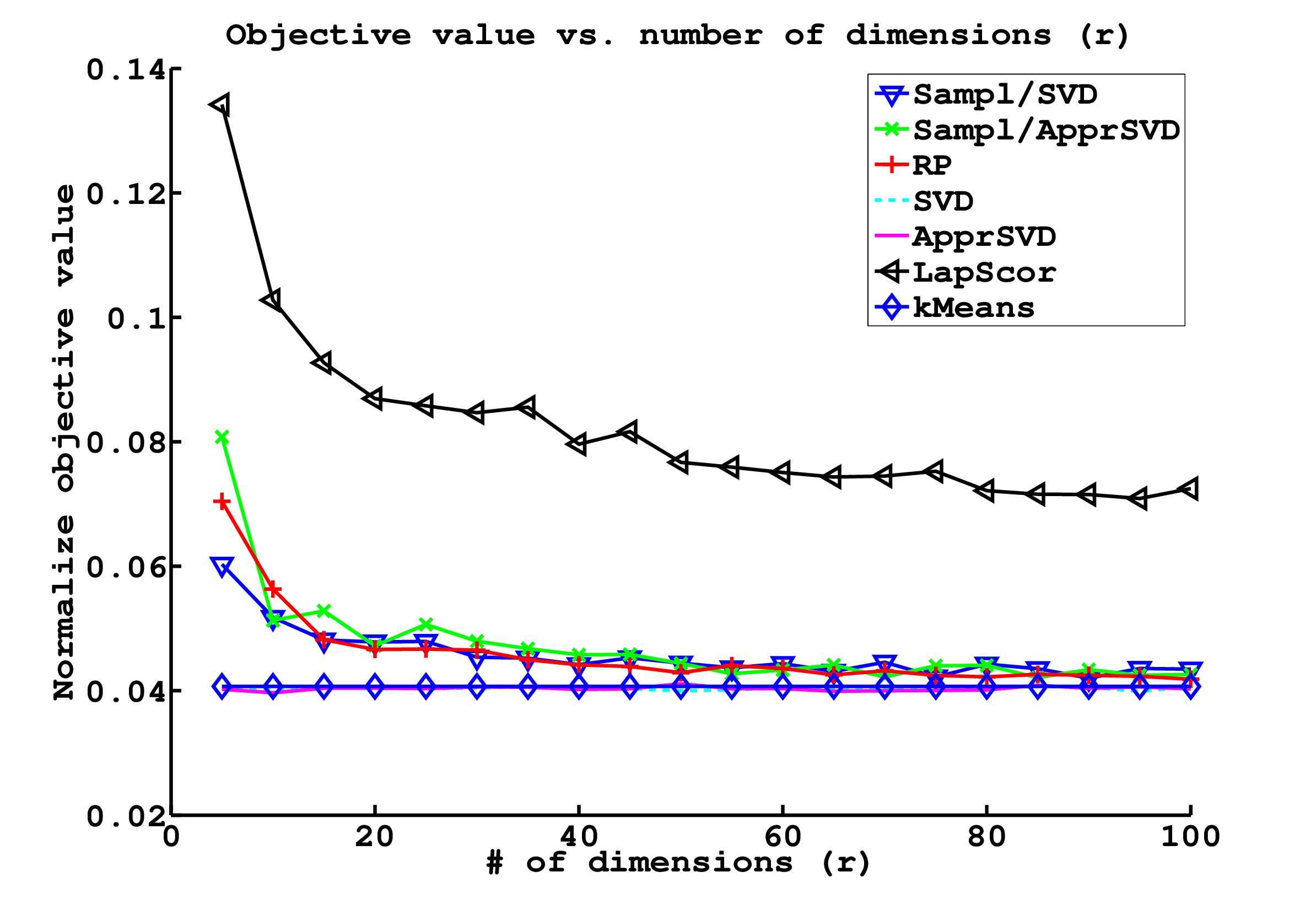}}\\
\subfigure[\texttt{COIL20} - Accuracy of clustering]{\includegraphics[width=0.475\textwidth]{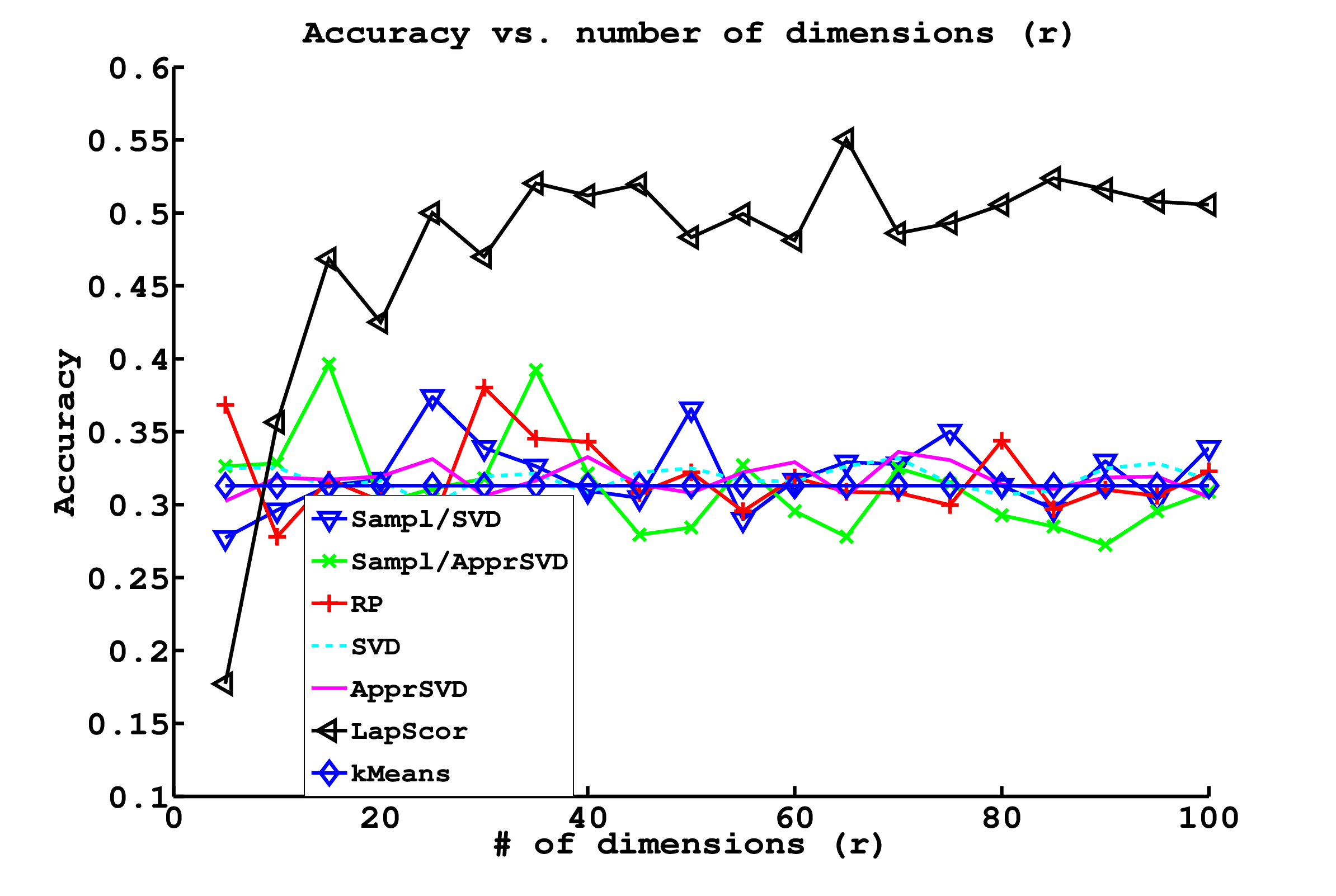}}
\subfigure[\texttt{LIGHT}  - Accuracy of clustering]{\includegraphics[width=0.475\textwidth]{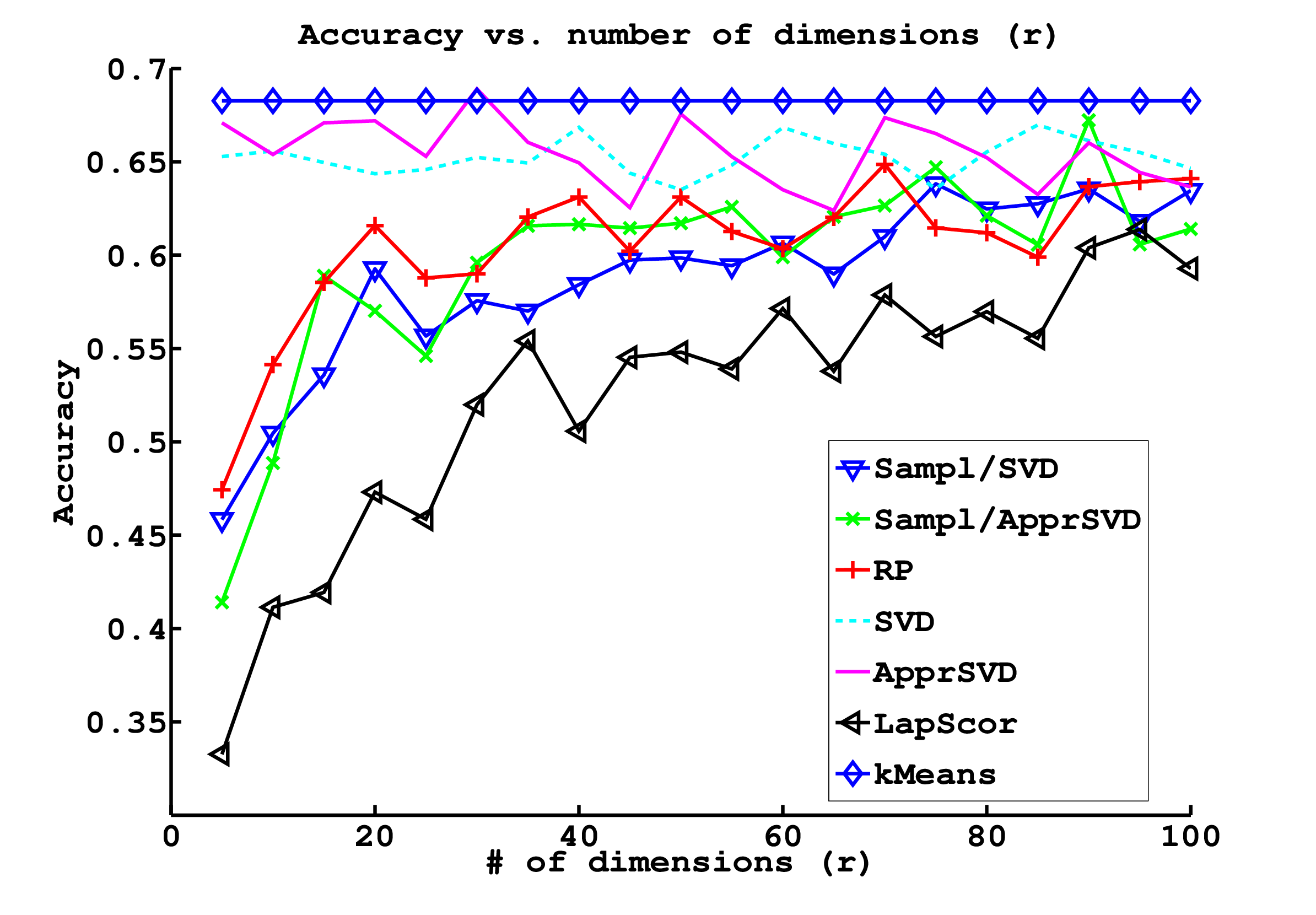}}\\
\caption{\small Plot of running time (a),(b), objective value (c),(d) and accuracy (e),(f) versus the number of projected dimensions for several dimensionality reduction approaches. Left column corresponds to the \texttt{COIL20} dataset, whereas the right column corresponds to the \texttt{LIGHT} dataset.}
\label{fig:light}
\end{figure}
We performed experiments on a few real-world and synthetic datasets. For the synthetic dataset, we generated a dataset of $m=1000$ points in $n=2000$ dimensions as follows. We chose $k=5$ centers uniformly at random from the $n$-dimensional hypercube of side length $2000$ as the ground truth centers. We then generated points from a Gaussian distribution of variance one, centered at each of the real centers. To each of the $5$ centers we generated $200$ points (we did not include the centers in the dataset). Thus, we obtain a number of well separated Gaussians with the real centers providing a good approximation to the optimal clustering. We will refer to this dataset as \texttt{Synth}.

For the real-world datasets we used five datasets that we denote by \texttt{USPS}, \texttt{COIL20}, \texttt{ORL}, \texttt{PIE} and \texttt{LIGHT}. The \texttt{USPS} digit dataset contains grayscale pictures of handwritten digits and can be downloaded from the UCI repository~\cite{UCI}. Each data point of \texttt{USPS} has 256 dimensions and there are 1100 data points per digit. The coefficients of the data points have been normalized between $0$ and $1$. The \texttt{COIL20} dataset contains 1400 images of 20 objects (the images of each objects were taken 5 degrees apart as the object is rotated on a turntable and each object has 72 images) and can be downloaded from~\cite{COIL20}. The size of each image is 32x32 pixels, with 256 grey levels per pixel. Thus, each image is represented by a 1024-dimensional vector. \texttt{ORL} contains ten different images each of 40 distinct subjects and can be located at~\cite{ORL}. For few subjects, the images were taken at different times, varying the lighting, facial expressions and facial details. All the images were taken against a dark homogeneous background with the subjects in an upright, frontal position (with tolerance for some side movement). There are in total 400 different objects having 4096 dimensions.

\texttt{PIE} is a database of 41,368 images of 68 people, each person under 13 different poses, 43 different illumination conditions, and with 4 different expressions~\cite{PIE}. Our dataset contains only five near frontal poses (C05, C07, C09, C27, C29) and all the images under different illuminations and expressions. Namely, there are in total $2856$ data points with $1024$ dimensions. The \texttt{LIGHT} dataset is identical with the dataset that has been used in ~\cite{HCN06}, the data points of \texttt{LIGHT} is $1428$ containing $1014$ features.
For each real-world dataset we fixed $k$ to be equal to the cardinality of their corresponding label set.
\subsection{Evaluation Methodology}
As a measure of quality of all methods we measure and report the objective function $\mathcal{F}$ of the $k$-means clustering problem. In particular, we
report a normalized version of $\mathcal{F}$, i.e. 
\[\mathcal{F} = \mathcal{F} / \| \matA \|_{\mathrm{F}}^2.\]
In addition, we report the mis-classification accuracy of the clustering result based on the labelled information of the input data. We denote this number by $P$ ($0 \leq P \leq 1$), where $P=0.9$, for example, implies that $90 \%$ of the points were assigned to the ``correct cluster''/label after the application of the clustering algorithm. Finally,  we report running times (in seconds). It is important to highlight that we report the running time of both the dimensionality reduction procedure and the $k$-means algorithm applied on the low-dimensional projected space for all proposed algorithms. All the reported quantities correspond to the average values of five independent executions.

\begin{figure}[ht!]
\centering
\subfigure[\texttt{PIE} - Running time]{\includegraphics[width=0.45\textwidth]{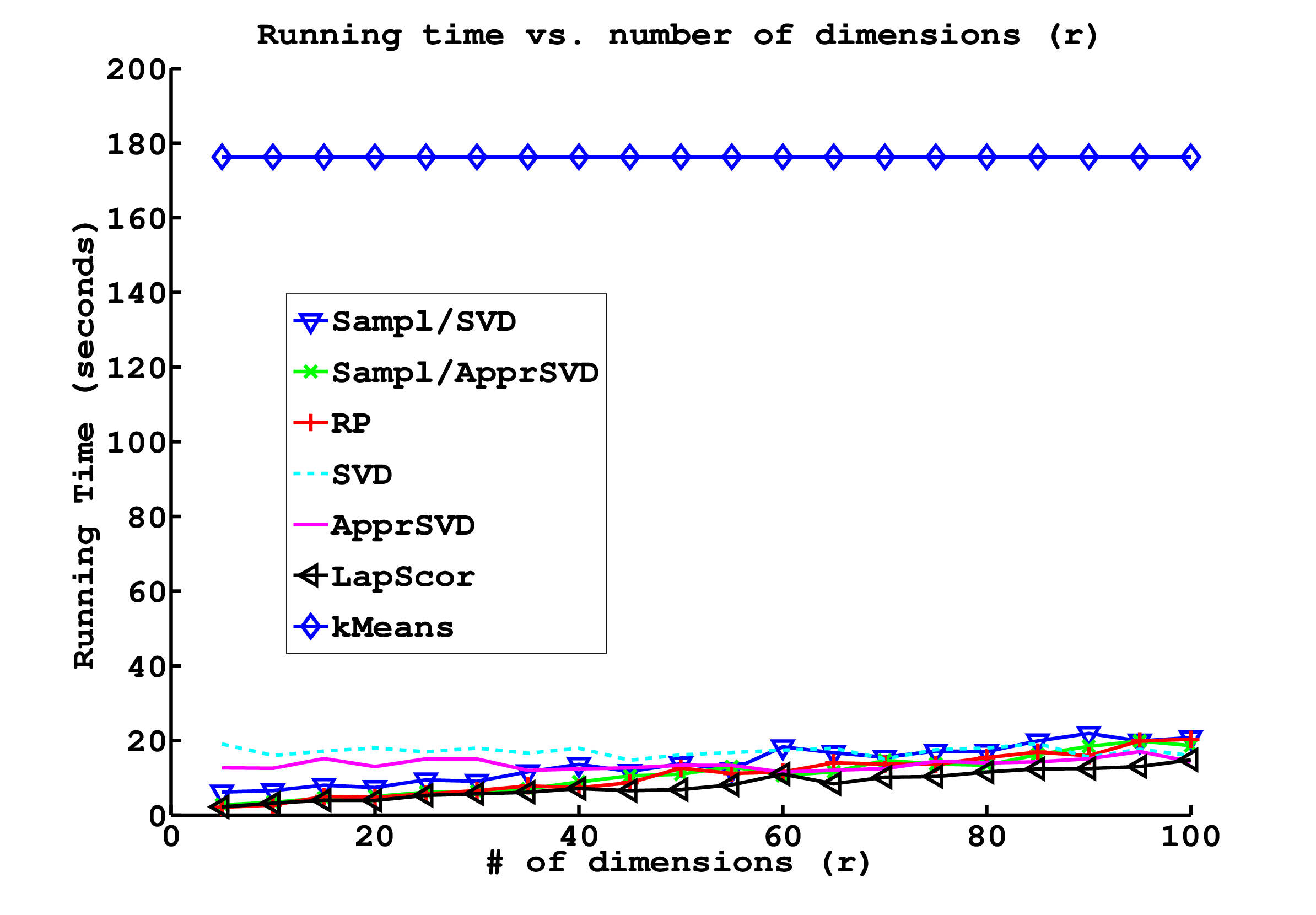}}
\subfigure[\texttt{ORL} - Running time]{\includegraphics[width=0.45\textwidth]{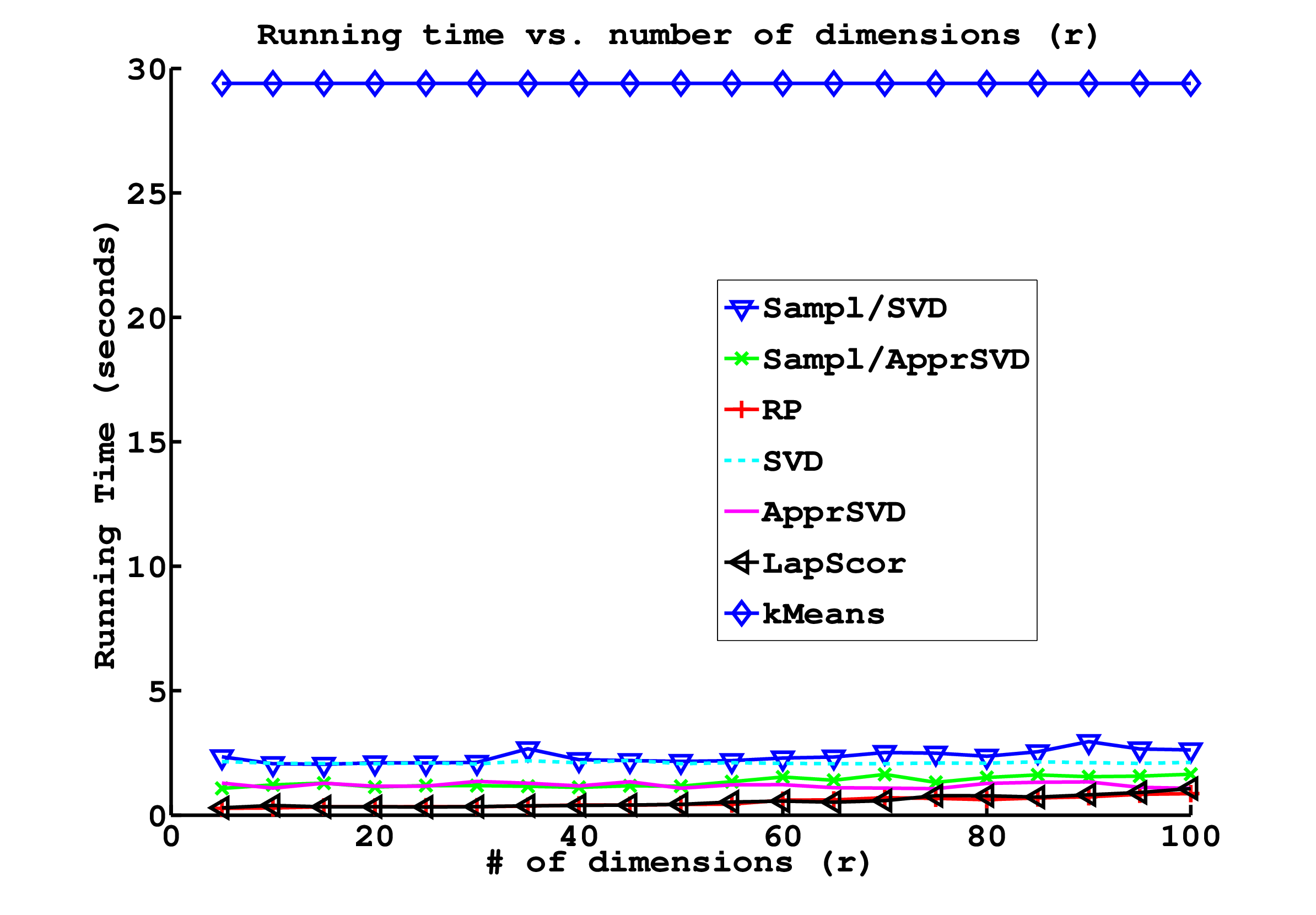}}\\
\subfigure[\texttt{PIE} - Objective]{\includegraphics[width=0.45\textwidth]{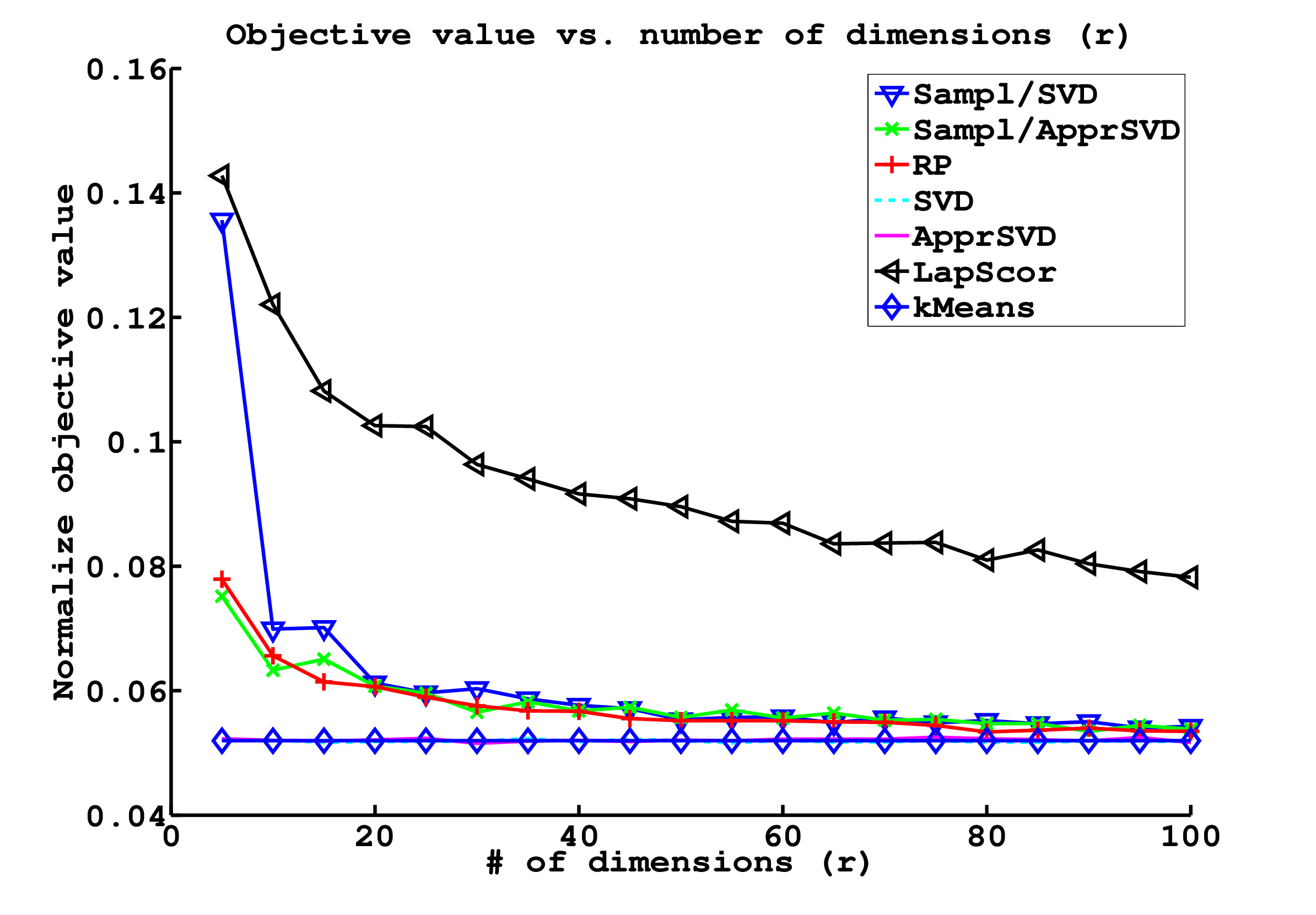}}
\subfigure[\texttt{ORL} - Objective]{\includegraphics[width=0.45\textwidth]{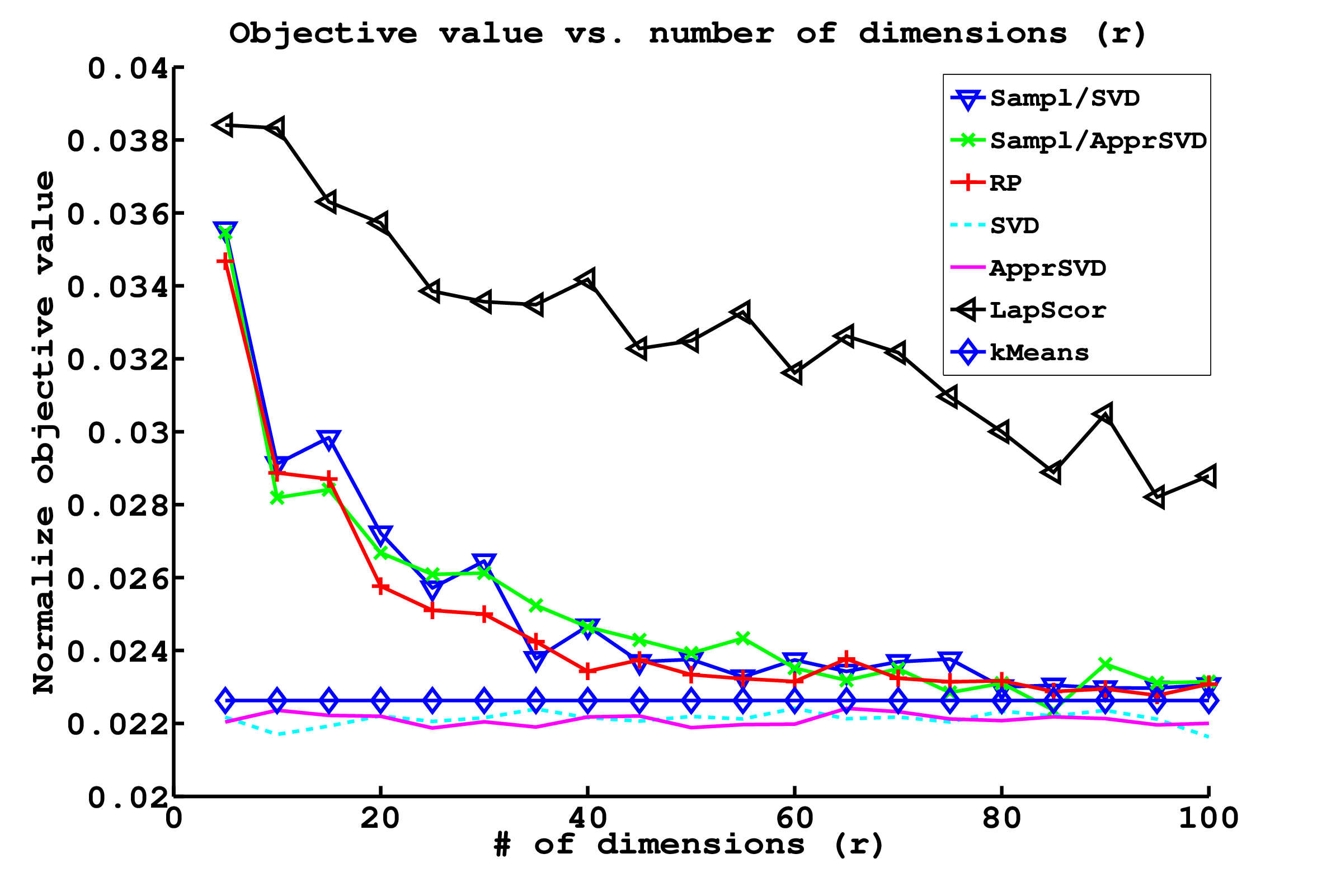}}\\
\subfigure[\texttt{PIE} - Accuracy of clustering]{\includegraphics[width=0.45\textwidth]{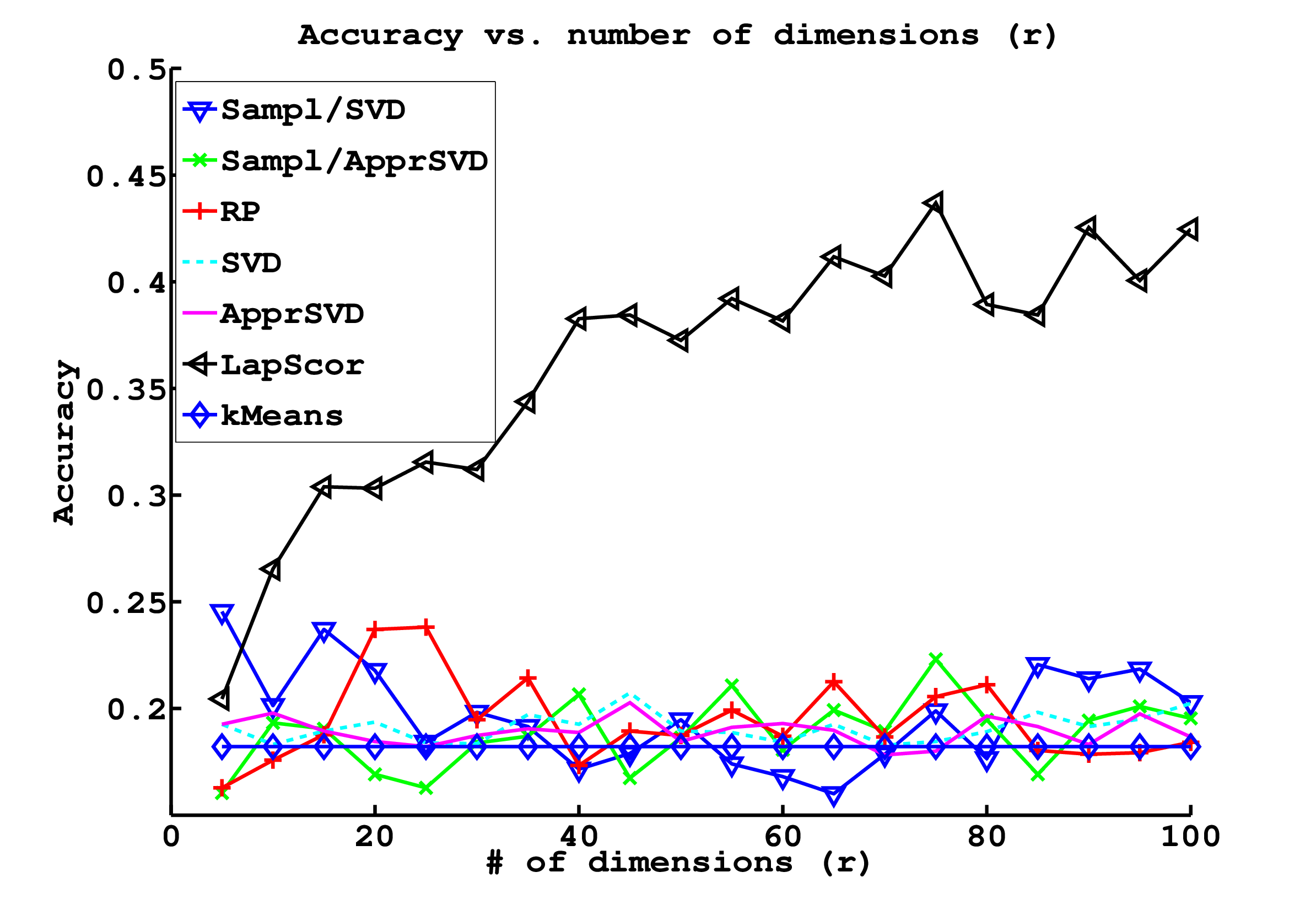}}
\subfigure[\texttt{ORL} - Accuracy of clustering]{\includegraphics[width=0.45\textwidth]{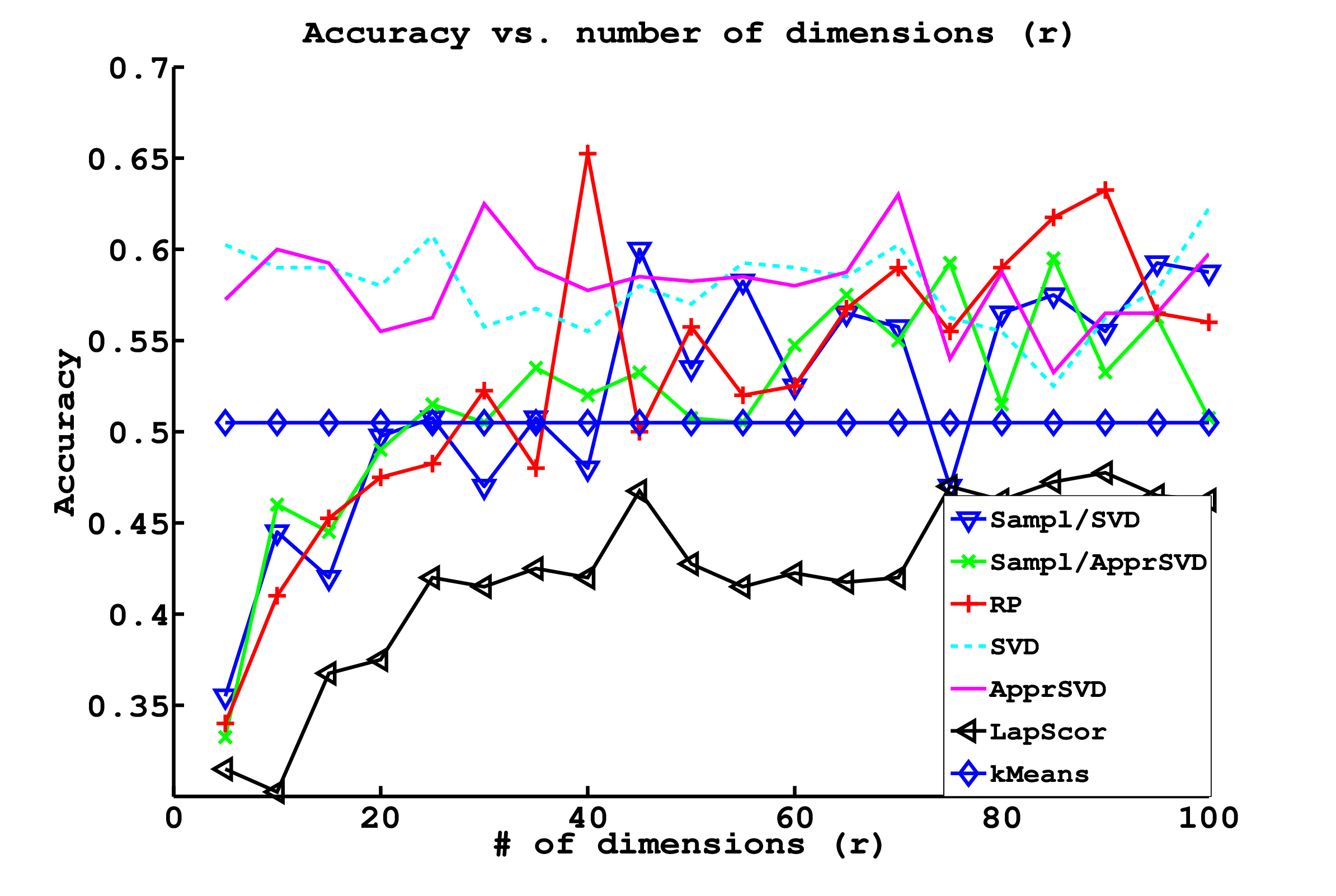}}\\
\caption{\small Plot of running time  (a),(b), objective value (c),(d) and accuracy versus (e),(f) the number of projected dimensions for several dimensionality reduction approaches. Left column corresponds to the \texttt{PIE} dataset, whereas the right column corresponds to the \texttt{ORL} dataset.}
\label{fig:orl}
\end{figure}
\subsection{Results}
We present the results of our experiments in Figures~\ref{fig:usps},\ref{fig:light} and \ref{fig:orl}. We experimented with relative small values for the number of dimensions: 
\[r = 5, 10, 15, \dots, 100.\]
In the synthetic dataset, we observe that all dimensionality reduction methods for $k$-means clustering are clearly more efficient compared to naive $k$-means clustering. More importantly, the accuracy plots of Figure~\ref{fig:usps} demonstrate that the dimensionality reduction approach is also accurate in this case even for relatively (with respect to $k$) small values of $r$, i.e., $\approx 20$. Recall that in this case the clusters of the dataset are well-separated between each other. Hence, these observations suggest that dimensionality reduction for $k$-means clustering is effective when applied to well-separated data points. 

The behavior of the dimensionality reduction methods for $k$-means clustering for the real-world datasets is similar with the synthetic dataset, see Figures~\ref{fig:light} and~\ref{fig:orl}. That is, as the number of projecting dimensions increases, the normalized objective value of the resulting clustering decreases. Moreover, in all cases the normalized objective value of the proposed methods converge to the objective value attained by the naive $k$-means algorithm (as the number of dimensions increases). In all cases but the \texttt{PIE} and \texttt{COIL20} dataset, the proposed dimensionality reduction methods have superior performance compared to Laplacian Scores~\cite{HCN06} both in terms of accuracy and normalized $k$-means objective value. In the \texttt{PIE} and \texttt{COIL20} datasets, the Laplacian Scores approach is superior compared to all other approaches in terms of accuracy. However, notice that in these two datasets the naive $k$-means algorimhm performs poorly in terms of accuracy which indicates that the data might not be well-separated.

Regarding the running times of the algorithms notice that in some cases the running time does not necessarily increased by increasing the number of dimensions. This happens because after the dimensionality reduction step the $k$-means method might take a different number of iterations to converge. We did not investigated this behavior further since this is not the focus of our experimental evaluation. 

Our experiments indicate that running our algorithms with small values of $r$, e.g., $r=20$ or $r=30$, achieves nearly optimal separation of a mixture of Gaussians and does well in several real-world clustering problems. Although a more thorough experimental evaluation of our algorithms would have been far more informative, our preliminary experimental findings are quite encouraging with respect to the performance of our algorithms in \emph{practice}. 

\section{Conclusions}

We studied the problem of dimensionality reduction for $k$-means clustering. Most of the existing results in this topic consist of heuristic approaches, whose excellent empirical performance can not be explained with a rigorous theoretical analysis. In this paper, our focus was on dimensionality reduction methods that work well \emph{in theory}. We presented three such approaches, one feature selection method for $k$-means and two feature extraction methods. The theoretical analysis of the proposed methods is based on the fact that dimensionality reduction for $k$-means has deep connections with low-rank approximations to the data matrix that contains the points one wants to cluster. We explained those connections in the text and employed modern fast algorithms to compute such low rank approximations and designed fast algorithms for dimensionality reduction in $k$-means.

Despite our focus on the theoretical foundations of the proposed algorithms, we tested the proposed methods in practice and concluded that the experimental results are very encouraging: dimensionality reduction for $k$-means using the proposed techniques leads to faster algorithms that are almost as accurate as running $k$-means on the high dimensional data. 

All in all, our work describes the \emph{first} provably efficient feature selection algorithm for $k$-means clustering as well as two novel provably efficient feature extraction algorithms. 
%Our algorithms provide constant factor approximations. 
An interesting path for future research is to design provably efficient $(1+\varepsilon)$-relative error dimensionality reduction methods for $k$-means. 

%\section*{Aknowledgements}
%Christos Boutsidis acknowledges the support from XDATA program of the Defense Advanced Research Projects Agency (DARPA), 
%administered through Air Force Research Laboratory contract FA8750-12-C-0323.

\bibliographystyle{abbrv}
\bibliography{bibfile}

\clearpage

\appendix
\section*{Appendix}
\label{app:theorem}

% Note: in this sample, the section number is hard-coded in. Following
% proper LaTeX conventions, it should properly be coded as a reference:

%In this appendix we prove the following theorem from
%Section~\ref{sec:textree-generalization}:

%%%%%%%%%%%%%%%

\section*{}

%\begin{proof}[Proof of Lemma~\ref{lem:fnorm}]
%\subsection*{Proof of Lemma~\ref{lem:fnorm}}
%
%\begin{proof}
%\noindent Define the random variable
%$Y = \FNormS{ \matY \matOmega \matS } \ge 0$. Assume that the following equation is true:
%$ \Expect{ \FNormS{ \matY \matOmega \matS } } = \FNormS{ \matY }.$
%Applying Markov's inequality 	with failure probability $\delta$ to this equation gives the bound in the lemma.
%All that remains to prove now is the above assumption.
%Let $\matB = \matY \matOmega \matS \in \R^{m \times r},$ and for $t=1, \dots ,r,$ let $\matB^{(t)}$
%denotes the $t$-th column of $\matB = \matY \matOmega \matS$. We manipulate
%the term $\Expect{ \FNormS{ \matY \matOmega \matS } }$ as follows,
%\eqan{
%\Expect{ \FNormS{ \matY \matOmega \matS } } \buildrel{(a)}\over{=}  \Expect{ \sum_{t=1}^{r} \TNormS{\matB^{(t)}} } \buildrel{(b)}\over{=}
%\sum_{t=1}^{r} \Expect{ \TNormS{\matB^{(t)}} } \buildrel{(c)}\over{=}
%\sum_{t=1}^{r} \sum_{j=1}^n p_j \frac{ \TNormS{\matY^{(j)}} }{r p_j} \buildrel{(d)}\over{=}
%\frac{1}{r} \sum_{t=1}^{r} \FNormS{\matY} = \FNormS{\matY}
%}
%\math{(a)} follows by the definition of the Frobenius norm of $\matB$.
%\math{(b)} follows by the linearity of expectation.
%\math{(c)} follows by our construction of $\matOmega, \matS$.
%\math{(d)} follows by the definition of the Frobenius norm of $\matY$. It is worth noting
%that the above manipulations hold for any set of probabilities since they cancel out in Equation \math{(d)}.
%\end{proof}
%
%%%%%%%%%%%%%%%%%%%%%%%%%%%%%%%%%%%%%%%%%%%%%%%%%%%%%%%%%%%%
%%%%%%%%%%%%%%%%%%%%%%%%%%%%%%%%%%%%%%%%%%%%%%%%%%%%%%%%%%%%

\noindent The following technical lemma is useful in the proof of Lemma~\ref{lem:rsall} and the proof of Lemma~\ref{lem:rpall}.
\begin{lemma}\label{lem:pseudoinv_transp}
	Let $\matQ\in\R^{n\times k}$ with $n > k$ and $\matQ\transp \matQ = \matI_k$. Let $\matTheta$ be any $n\times r$ matrix ($r > k$) satisfying $ 1-\varepsilon \leq \sigma^2_i (\matQ\transp \matTheta) \leq 1+\varepsilon$ for every $i=1,\dots , k$ and $0< \varepsilon < 1/3$. Then, $$\TNorm{ (\matQ\transp \matTheta)^\dagger - (\matQ\transp \matTheta)\transp } \leq \frac{  \varepsilon }{\sqrt{1-\varepsilon}}  \le 1.5 \varepsilon.$$
\end{lemma}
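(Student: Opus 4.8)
The plan is to diagonalize $\matQ\transp\matTheta$ via its SVD and reduce the claim to a scalar inequality on singular values. First I would set $\matB = \matQ\transp\matTheta \in \R^{k \times r}$ and write its (thin) SVD $\matB = \matU_\matB \matSig_\matB \matV_\matB\transp$, where $\matU_\matB \in \R^{k\times k}$ is orthogonal, $\matSig_\matB \in \R^{k \times k}$ is diagonal with the singular values $\sigma_i = \sigma_i(\matB)$ on its diagonal, and $\matV_\matB \in \R^{r \times k}$ satisfies $\matV_\matB\transp\matV_\matB = \matI_k$ (legitimate since $r > k$). The hypothesis $1-\epsilon \le \sigma_i^2(\matB) \le 1+\epsilon$ with $\epsilon < 1/3$ guarantees $\sigma_k \ge \sqrt{1-\epsilon} > 0$, so $\matB$ has full row rank $k$ and its pseudo-inverse takes the clean form $\matB^+ = \matV_\matB \matSig_\matB^{-1}\matU_\matB\transp$ (this is exactly the computation already carried out in the footnotes of Lemmas~\ref{lem:rsall} and~\ref{lem:rpall}).

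Next I would subtract: since $\matB\transp = \matV_\matB\matSig_\matB\matU_\matB\transp$, we get $\matB^+ - \matB\transp = \matV_\matB(\matSig_\matB^{-1}-\matSig_\matB)\matU_\matB\transp$. Because $\matU_\matB$ is orthogonal and $\matV_\matB$ has orthonormal columns, left multiplication by $\matV_\matB$ and right multiplication by $\matU_\matB\transp$ leave the spectral norm unchanged, so $\TNorm{\matB^+ - \matB\transp} = \TNorm{\matSig_\matB^{-1}-\matSig_\matB} = \max_{1\le i \le k} \VTNorm{\sigma_i^{-1} - \sigma_i}$, the last equality because $\matSig_\matB^{-1}-\matSig_\matB$ is diagonal. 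Finally I would bound this scalar quantity by writing $\VTNorm{\sigma_i^{-1}-\sigma_i} = \VTNorm{1-\sigma_i^2}/\sigma_i$ and using $\VTNorm{1-\sigma_i^2} \le \epsilon$ together with $\sigma_i \ge \sqrt{1-\epsilon}$ (both immediate from the hypothesis), which yields $\max_i \VTNorm{\sigma_i^{-1}-\sigma_i} \le \epsilon/\sqrt{1-\epsilon}$. For $\epsilon < 1/3$ one has $1/\sqrt{1-\epsilon} < \sqrt{3/2} < 1.5$, hence $\epsilon/\sqrt{1-\epsilon} \le 1.5\epsilon$, which is the claimed bound; note that the sharper intermediate estimate $\epsilon/\sqrt{1-\epsilon}$ is the form actually invoked in the two lemmas that call this result.

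There is no genuine obstacle in this argument; it is entirely routine once the SVD is in place. The only point that requires a moment of care is the spectral-norm reduction in the second step: one must use that $\matV_\matB$ has merely \emph{orthonormal columns} rather than being square/orthogonal (since $r > k$), and observe that this is still exactly enough to preserve the spectral norm under left multiplication, because $\TNorm{\matV_\matB \matD} = \TNorm{\matD}$ for any $\matD$ whenever $\matV_\matB\transp\matV_\matB = \matI_k$.
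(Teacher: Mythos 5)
Your proposal is correct and follows essentially the same route as the paper's proof: write the SVD of $\matQ\transp\matTheta$, reduce $\TNorm{(\matQ\transp\matTheta)^+ - (\matQ\transp\matTheta)\transp}$ to $\TNorm{\matSig^{-1}-\matSig}$, and bound the diagonal entries by $\abs{1-\sigma_i^2}/\sigma_i \le \epsilon/\sqrt{1-\epsilon} \le 1.5\epsilon$. Your explicit remark that left multiplication by a matrix with orthonormal columns preserves the spectral norm is a nice clarification of a step the paper states without comment.
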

%
%\matTheta
\begin{proof}
Let $\matX = \matQ\transp \matTheta \in \R^{k \times r}$
with SVD $\matX = \matU_{\matX} \matSig_{\matX} \matV_\matX\transp $.
Here,  $\matU_{\matX} \in \R^{k \times k}$,
$\matSig_{\matX} \in \R^{k \times k}$, and $\matV_\matX \in \R^{r \times k}$, since $r > k$.
Consider taking the SVD of $(\matQ\transp \matTheta)^\dagger$ and $(\matQ\transp \matTheta)\transp $,
\[ \TNorm{(\matQ\transp \matTheta)^\dagger - (\matQ\transp \matTheta)\transp} =
    \TNorm{\matV_{\matX} \matSig_{\matX}^{-1} \matU_{\matX}\transp - \matV_{\matX} \matSig_{\matX} \matU_{\matX}\transp  } =
    \TNorm{\matV_{\matX}(\matSig_{\matX}^{-1} - \matSig_{\matX}) \matU_{\matX}\transp} = \TNorm{\matSig_{\matX}^{-1} - \matSig_{\matX}},\]
since $\matV_{\matX}$ and $\matU_{\matX}\transp $ can be dropped without changing the spectral norm.
Let $\matY = \matSig_{\matX}^{-1} - \matSig_{\matX} \in \R^{k \times k}$ be
a diagonal matrix. Then, for all
$i=1,\ldots ,k$,
$\matY_{ii}  = \frac{ 1 - \sigma_i^2(\matX)  }{ \sigma_{i}(\matX) }.$
Since $\matY$ is diagonal,
$$
\TNorm{ \matY } =
\max_{1 \leq i \leq k} \abs{\matY_{ii} } =
\max_{1 \leq i \leq k} \abs{\frac{ 1 - \sigma_i^2(\matX)}{ \sigma_{i}(\matX) } } =
\max_{1 \leq i \leq k} \frac{ \abs{1 - \sigma_i^2(\matX)} }{ \sigma_{i}(\matX) } \le
\frac{  \varepsilon }{\sqrt{1-\varepsilon}} \le 1.5 \varepsilon.
$$
The first equality follows since the singular values are positive (from our choice of $\varepsilon$ and the left hand side of the
bound for the singular values). The first inequality follows by the bound for the singular values of $\matX$. The last inequality follows by the
assumption that $0 < \varepsilon < 1/3$.
\end{proof}

\subsection*{Proof of Lemma \ref{lem:rsall}}
\begin{proof}
%%%%%%%%%%%%%%%%%%%%%%%%%%%%%%%%%%%%%%%%%%%%%%%%%%%%%%%%%%%%
%%%%%%%%%%%%%%%%%%%%%%%%%%%%%%%%%%%%%%%%%%%%%%%%%%%%%%%%%%%%
We begin with the analysis of a matrix-multiplication-type term involving the multiplication
of the matrices $\matE$, $\matZ$. The sampling and rescaling matrices $\matOmega, \matS$
indicate the subsampling of the columns and rows of $\matE$, $\matZ$, respectively.
Eqn.~(4) of Lemma 4 of \cite{DKM06a} gives a bound for such $\matOmega, \matS$
constructed with randomized sampling with replacement and \emph{any} set of probabilities
$p_1, p_2, \dots ,p_n$ (over the columns of $\matE$ - rows of $\matZ$),
$$ \Expect{ \FNormS{ \matE \matZ - \matE \matOmega \matS \matS\transp \matOmega\transp \matZ } } \le
\sum_{i=1}^{n} \frac{ \TNormS{\matE^{(i)}} \TNormS{ \matZ_{(i)} } }{r p_i} - \frac{1}{r} \FNormS{\matE \matZ}. $$
Notice that $\matE \matZ = \bm{0}_{m \times k}$, by construction (see Lemma~\ref{tropp2}).
Now, for every $i=1,\dots , n$ replace the values $p_i = \frac{ \TNormS{ \matZ_{(i)}  }}{k}$
(in Definition~\ref{def:sampling}) and rearrange,
\begin{equation}\label{ineq:RS:matrix_mult}
\Expect{ \FNormS{\matE \matOmega \matS \matS\transp \matOmega\transp \matZ } } \le \frac{k}{r} \FNormS{\matE}.
\end{equation}
\noindent Observe that Lemma~\ref{lem:random} and our choice of $r$, implies that w.p. $1 - \delta$,
\begin{equation}\label{ineq:RS:preserve_sigma}
	1 -  \varepsilon  \leq  \sigma_i^2(\matZ\transp \matOmega \matS)   \leq 1 + \varepsilon,\quad \text{for all } i =1,\ldots ,k.
\end{equation}
For what follows, condition on the event of Ineq.~\eqref{ineq:RS:preserve_sigma}. First, $\sigma_k(\matZ\transp \matOmega \matS) > 0$. So, $\rank(\matZ\transp \matOmega \matS) = k$ and $ (\matZ\transp \matOmega \matS)(\matZ\transp \matOmega \matS)^\dagger = \matI_{k}$\footnote[5]{To see this, let $\matB = \matZ\transp \matOmega \matS \in \R^{k \times r}$ with
SVD $\matB = \matU_{\matB} \matSig_{\matB} \matV_\matB\transp $.
Here,  $\matU_{\matB} \in \R^{k \times k}$,
$\matSig_{\matB} \in \R^{k \times k}$, and $\matV_\matB \in \R^{r \times k}$, since $r > k$.
Finally, $(\matZ\transp \matOmega \matS)(\matZ\transp \matOmega \matS)^\dagger
= \matU_{\matB} \matSig_{\matB} \underbrace{\matV_\matB\transp \matV_\matB}_{\matI_k} \matSig^{-1}_{\matB} \matU_{\matB}\transp
= \matU_{\matB} \underbrace{\matSig_{\matB} \matSig^{-1}_{\matB}}_{\matI_k} \matU_{\matB}\transp = \matI_{k}$.}.
%For notational convenience, let
%$\matX = (\matZ\transp \matOmega \matS)^\dagger - (\matZ\transp \matOmega \matS)\transp$.
Now,
$ \matA\matZ \matZ\transp - \matA\matZ \matZ\transp \matOmega \matS (\matZ\transp \matOmega \matS)^\dagger\matZ\transp =
\matA\matZ \matZ\transp - \matA\matZ \matI_{k} \matZ\transp = \bm{0}_{m \times n}.$
Next, we manipulate the term
$\theta = \FNorm{ \matA\matZ \matZ\transp - \matA \matOmega \matS (\matZ\transp \matOmega \matS)^\dagger\matZ\transp}$ as follows
(recall, $\matA = \matA\matZ \matZ\transp + \matE$),
\begin{eqnarray*}
%\FNorm{ \matA\matZ \matZ\transp - \matA \matOmega \matS (\matZ\transp \matOmega \matS)^\dagger\matZ\transp}
\theta =
\FNorm{
\underbrace{ \matA\matZ \matZ\transp - \matA\matZ \matZ\transp \matOmega \matS (\matZ\transp \matOmega \matS)^\dagger\matZ\transp}_{\bm{0}_{m \times n}}
- \matE \matOmega \matS (\matZ\transp \matOmega \matS)^\dagger\matZ\transp }
         = \FNorm{\matE \matOmega \matS (\matZ\transp \matOmega \matS)^\dagger\matZ\transp }.
\end{eqnarray*}
Finally, we manipulate the latter term as follows,
\begin{eqnarray*}
\FNorm{\matE \matOmega \matS (\matZ\transp \matOmega \matS)^\dagger\matZ\transp }
&\le& \FNorm{\matE \matOmega \matS (\matZ\transp \matOmega \matS)^\dagger } \\
&\le& \FNorm{\matE \matOmega \matS (\matZ\transp \matOmega \matS)\transp} + \FNorm{\matE \matOmega \matS}
\TNorm{(\matZ\transp \matOmega \matS)^\dagger - (\matZ\transp \matOmega \matS)\transp} \\
&\le& \sqrt{\frac{k}{\delta r}} \FNorm{\matE} + \frac{1}{\sqrt{\delta}} \FNorm{\matE} \frac{\varepsilon}{\sqrt{1-\varepsilon}}
\le \left( \sqrt{ \frac{k}{\delta r} } + \frac{\varepsilon}{ \sqrt{\delta} \sqrt{1-\varepsilon}} \right) \FNorm{\matE}  \\
&\le& \left( \frac{\varepsilon}{2 \sqrt{\delta}} \frac{1}{\sqrt{\ln(2k/\delta)}} + \frac{\varepsilon}{\sqrt{\delta}\sqrt{1-\varepsilon}} \right)\FNorm{\matE} \\
&\le& \left( \frac{\varepsilon}{2 \ln(4) \sqrt{\delta}}  + \frac{\varepsilon}{\sqrt{\delta}\sqrt{1-\varepsilon}} \right)\FNorm{\matE}
\le \frac{ 1.6 \varepsilon}{\sqrt{\delta}}  \FNorm{\matE}.
\end{eqnarray*}
The first inequality follows by spectral submultiplicativity and the fact that $\TNorm{\matZ\transp}=1$.
The second inequality follows by the triangle inequality for matrix norms.
In the third inequality, the bound for the term $\FNorm{\matE \matOmega \matS (\matZ\transp \matOmega \matS)\transp}$ follows by applying to it Markov's inequality together with Ineq.~\eqref{ineq:RS:matrix_mult}; also, $\FNorm{\matE \matOmega \matS}$ is bounded by $ (1/\sqrt{\delta}) \FNorm{\matE}$ w.p. $1-\delta$ (Lemma~\ref{lem:fnorm}), while we bound $\TNorm{(\matZ\transp \matOmega \matS)^\dagger - (\matZ\transp \matOmega \matS)\transp}$ using Lemma~\ref{lem:pseudoinv_transp} (set $\matQ = \matZ$ and $\matTheta = \matOmega \matS$ %we actually use the bound $\varepsilon/\sqrt{1-\varepsilon}$ which can be found in the proof of the lemma
). So, by the union bound, the failure probability is $3\delta$.
The rest of the argument follows by our choice of $r$, assuming $k \ge 2$, $\varepsilon < 1/3$ and simple algebraic manipulations.
\end{proof}
\subsection*{Proof of Lemma~\ref{lem:RP:FNorm}}
\begin{proof}
First, define the random variable $Y = \FNorm{\matY \matR }^2$. It is easy to see that $\Expect{Y} = \FNorm{\matY}^2$ and moreover an upper bound for the variance of $Y$ is available in Lemma~$8$ of \cite{Sar06}: $\var{Y} \leq 2 \FNorm{\matY}^4 /r$ \footnote[6]{\cite{Sar06} assumes that the matrix $\matR$ has i.i.d rows, each one containing
four-wise independent zero-mean $\{1/\sqrt{r}, -1/\sqrt{r}\}$ entries. The claim in our lemma follows because our rescaled sign matrix $\matR$ satisfies the four-wise independence assumption, by construction.
}. Now, Chebyshev's inequality tells us that,
\begin{eqnarray*}
\Prob ( |Y - \Expect{Y }| \geq \varepsilon \FNorm{\matY}^2 )
\le \frac{\var{Y}}{\varepsilon^2 \FNorm{\matY}^4}
\le \frac{2\FNorm{\matY}^4}{r\varepsilon^2 \FNorm{\matY}^4}
\le \frac{2}{c_0 k}
\le 0.01.
\end{eqnarray*}
The last inequality follows by assuming $c_0\geq 100$ and the fact that $k > 1$.
Finally, taking square root on both sides concludes the proof.
\end{proof}
%
%
%
%%%%%%%%%%%%%%%%%%%%%%%%%%%%%%
%%%%%%%%%%%%%%%%%%%%%%%%%%%%%%
\begin{proof}[Proof of Lemma \ref{lem:rpall}]
%%%%%%%%%%%%%%%%%%%%%%%%%%%%%%
%%%%%%%%%%%%%%%%%%%%%%%%%%%%%%
%
\noindent
We start with the definition of the Johnson-Lindenstrauss transform.
\begin{definition}[Johnson-Lindenstrauss Transform]
\label{lem:jlt}
A random matrix $\matR \in \R^{n\times r}$ forms a \emph{Johnson-Lindenstrauss transform} if, for any (row) vector $\vct{x}\in\R^n$,
\[  \Prob{\left( \left(1-\varepsilon\right) \TNormS{ \vct{x} } \leq \TNormS{ \vct{x}\matR } \leq \left(1+\varepsilon\right)\TNormS{ \vct{x}} \right)} \geq 1 - e^{-C \varepsilon^2 r}\]
where $C>0$ is an absolute constant.
\end{definition}
Notice that in order to achieve failure probability at most $\delta$, it suffices to take $r= O( \log (1/\delta) /\varepsilon^2) $. We continue with Theorem $1.1$ of \cite{Ach03} (properly stated to fit our notation
and after minor algebraic manipulations), which indicates that a (rescaled) sign matrix $\matR$ corresponds to a \emph{Johnson-Lindenstrauss transform} as defined above.
\begin{theorem}[\cite{Ach03}]\footnote[7]{This theorem is proved by first showing that a rescaled random sign matrix is a \emph{Johnson-Lindenstrauss transform}~\cite[Lemma~$5.1$]{Ach03} with constant $C=36$. Then, setting an appropriate value for $r$ and applying the union bound over all pairs of row indices of $\matA$ concludes the proof.}
Let $\matA \in \R^{m \times n}$ and $0 < \varepsilon < 1$. Let $\matR \in \R^{n \times r}$ be a rescaled random sign matrix
with  $r = \frac{36}{\varepsilon^2} \log (m) \log(1/\delta)$.
Then for all $i,j=1,\ldots, m$ and w.p. at least $1 - \delta$,
\[
(1- \varepsilon) \TNormS{ \matA_{(i)} - \matA_{(j)} }
\leq  \TNormS{ \left(\matA_{(i)} - \matA_{(j)} \right)\matR } \leq
(1 + \varepsilon) \TNormS{ \matA_{(i)} - \matA_{(j)}}.
\]
\end{theorem}
In addition, we will use a matrix multiplication bound which follows from Lemma~$6$ of~\cite{Sar06}. The second claim
of this lemma says that for any $\matX \in \R^{m \times n}$ and $\matY \in \R^{n \times p}$,
if $\matR \in \R^{n \times r}$ is a matrix with i.i.d rows, each one containing
four-wise independent zero-mean $\{1/\sqrt{r}, -1/\sqrt{r}\}$ entries, then,
\begin{equation}\label{ineq:random_proj:MM}
	 \Expect{ \FNormS{ \matX \matY - \matX \matR \matR\transp \matY } } \le \frac{2}{r} \FNormS{ \matX } \FNormS{ \matY }.
\end{equation}
Our random matrix $\matR$ uses full independence, hence the above bound holds by dropping the limited independence condition.
%
%
%
%
%
%%%%%%%%%%%%%%%%%%%%%%%%%%%%%%
%%%%%%%%%%%%%%%%%%%%%%%%%%%%%%
\paragraph{Statement $1$.}
%%%%%%%%%%%%%%%%%%%%%%%%%%%%%%
%%%%%%%%%%%%%%%%%%%%%%%%%%%%%%
%
%
The first statement in our lemma has been proved in Corollary $11$ of~\cite{Sar06}, see also~\cite[Theorem~$1.3$]{Clark08} for a restatement. More precisely, repeat the proof of Corollary $11$ of~\cite{Sar06} paying attention to the constants. That is, set $\matC=\matV_k\transp \matR\transp \matR \matV_k -\matI_k$ and $\varepsilon_0=1/2$ in Lemma~$10$ of~\cite{Sar06}, and apply our JL transform with (rescaled) accuracy $\varepsilon/4$ on each vector of the set $T':=\{\matV_k\transp \vct{x}\ |\ \vct{x} \in T \}$ (which is of size at most $\leq e^{k\ln (18)}$, see \cite[Lemma~4]{random06} for this bound). So,
\begin{equation}\label{ineq:sigma_preserved}
	\Prob{\left(  \forall i = 1,\dots ,k :\  1-\varepsilon \leq \sigma_i^2(\matV_k\transp \matR) \leq 1+\varepsilon \right) } \geq 1 - e^{k\ln (18) } e^{-\varepsilon^2 r/(36 \cdot 16) } .
\end{equation}
Setting $r$ such that the failure probability is at
most $0.01$ indicates that $r$ should be at least $ r \ge 576 ( k \ln(18) + \ln(100) )/\varepsilon^2 $. So,
$c_0 = 3330$ is a sufficiently large constant for the lemma.
%
%
%
% (Tassos) For Christos: If you want to verify this you have to solve the inequality  e^{k\ln (18) } e^{-c_{JL} \varepsilon^2 r/16} \leq 0.01  where c_{JL} comes from Achlioptas paper. (This is assuming the correctness of ineq:sigma_preserved) Check that as well.
%
%%%%%%%%%%%%%%%%%%%%%%%%%%%%%%
%%%%%%%%%%%%%%%%%%%%%%%%%%%%%%
\paragraph{Statement $2$.}
%%%%%%%%%%%%%%%%%%%%%%%%%%%%%%
%%%%%%%%%%%%%%%%%%%%%%%%%%%%%%
%
Consider the following three events (w.r.t. the randomness of the random matrix $\matR$): $\mathcal{E}_1:=\{ 1-\varepsilon \leq \sigma_i^2(\matV_k\transp \matR) \leq 1+\varepsilon\}$,
$\mathcal{E}_2:=\{ \FNorm{\matA_{\rho-k}\matR }^2 \leq (1+\varepsilon) \FNorm{\matA_{\rho-k} }^2 \}$ and $\mathcal{E}_3:=\{ \FNorm{\matA_{\rho-k}\matR \matR\transp \matV_{k}}^2 \leq \varepsilon^2 \FNorm{\matA_{\rho-k}}^2 \}$. Ineq.~\eqref{ineq:sigma_preserved} and Lemma~\ref{lem:RP:FNorm} with $\matY = \matA_{\rho-k}$ imply that $\Prob{ (\mathcal{E}_1)} \geq 0.99$, $\Prob{(\mathcal{E}_2)} \geq 0.99$, respectively. A crucial observation for bounding the failure probability of the last event $\mathcal{E}_3$ is that $\matA_{\rho-k}\matV_k=\matU_{\rho-k}\matSig_{\rho-k}\matV_{\rho-k}^{\top}\matV_k=\mathbf{0}_{m \times k}$ by orthogonality of the columns of $\matV_k$ and $\matV_{\rho-k}$. This event can now be bounded by applying Markov's Inequality on Ineq.~\eqref{ineq:random_proj:MM} with $\matX = \matA_{\rho - k}$ and $\matY = \matV_k$ and recalling that $\FNorm{\matV_k}^2 =k$ and $r=c_0k/\varepsilon^2$. Assuming $c_0\geq 200$, it follows that $\Prob{(\mathcal{E}_3)}\geq 0.99$ (hence, setting $c_0=3330$ is a sufficiently large constant for both statements). A union bound implies that these three events happen w.p. $0.97$. For what follows, condition on these three events.

Let $\widetilde{\matE} = \matA_k  - (\matA \matR) (\matV_k\transp  \matR)^\dagger \matV_k\transp  \in \R^{m \times n}$.
By setting $\matA = \matA_k + \matA_{\rho-k}$ and using the triangle inequality,
\[ \FNorm{\widetilde{\matE}}\ \leq\ \FNorm{\matA_k - \matA_k \matR (\matV_k\transp \matR)^\dagger \matV_k\transp }\ +\ \FNorm{ \matA_{\rho-k}\matR (\matV_k\transp \matR)^\dagger \matV_k\transp  }.\]
The event $\mathcal{E}_1$ implies that
$\text{rank}(\matV_k\transp \matR) = k$ thus\footnote[8]{To see this, let $\matB = \matV_k\transp \matR \in \R^{k \times r}$ with
SVD $\matB = \matU_{\matB} \matSig_{\matB} \matV_\matB\transp $.
Here,  $\matU_{\matB} \in \R^{k \times k}$,
$\matSig_{\matB} \in \R^{k \times k}$, and $\matV_\matB \in \R^{r \times k}$, since $r > k$.
Finally, $(\matV_k\transp \matR)(\matV_k\transp \matR)^\dagger
= \matU_{\matB} \matSig_{\matB} \underbrace{\matV_\matB\transp \matV_\matB}_{\matI_k} \matSig^{-1}_{\matB} \matU_{\matB}\transp
= \matU_{\matB} \underbrace{\matSig_{\matB} \matSig^{-1}_{\matB}}_{\matI_k}\matU_{\matB}\transp = \matI_{k}$.},
$$(\matV_k\transp \matR) (\matV_k\transp \matR)^\dagger =
\matI_k.$$
Replacing
$\matA_k = \matU_k \matSig_k \matV_k\transp $ and setting
$(\matV_k\transp \matR) (\matV_k\transp \matR)^\dagger = \matI_k,$
we obtain that
\[
\FNorm{\matA_k - \matA_k \matR (\matV_k\transp \matR)^\dagger \matV_k\transp }
=
\FNorm{\matA_k - \matU_k\matSig_k \underbrace{\matV_k\transp \matR (\matV_k\transp \matR)^\dagger}_{\matI_k} \matV_k\transp } =
\FNorm{\matA_k - \matU_k \matSig_k \matV_k\transp } = 0. \]
To bound the second term above, we drop $\matV_k\transp $, add and
subtract $\matA_{\rho-k}\matR (\matV_k\transp \matR)\transp   $,
and use the triangle inequality and spectral sub-multiplicativity,
\begin{eqnarray*}
\FNorm{ \matA_{\rho-k} \matR (\matV_k\transp \matR)^\dagger \matV_k\transp  } & \leq & \FNorm{ \matA_{\rho-k}\matR (\matV_k\transp \matR)\transp  }\ +\ \FNorm{ \matA_{\rho-k}\matR ( (\matV_k\transp \matR)^\dagger - (\matV_k\transp \matR)\transp )} \\
                                               & \leq & \FNorm{ \matA_{\rho-k}\matR \matR\transp  \matV_k  }\ +\ \FNorm{ \matA_{\rho-k}\matR} \TNorm{ (\matV_k\transp \matR)^\dagger - (\matV_k\transp \matR)\transp                                           }.
\end{eqnarray*}
Now, we will bound each term individually.  We bound the first term using $\mathcal{E}_3$. The second term can be bounded using $\mathcal{E}_1$ and $\mathcal{E}_2$ together with Lemma~\ref{lem:pseudoinv_transp} (set $\matQ =\matV_k$ and $\matTheta = \matR$).  Hence,
\begin{eqnarray*}
\FNorm{ \widetilde{\matE} }  & \leq & \FNorm{ \matA_{\rho-k}\matR \matR\transp  \matV_k  } + \FNorm{ \matA_{\rho-k}\matR}
\TNorm{ (\matV_k\transp \matR)^\dagger - (\matV_k\transp \matR)\transp } \\
                           & \leq &  \varepsilon \FNorm{ \matA_{\rho-k}} + \sqrt{(1+\varepsilon)} \FNorm{\matA_{\rho-k}} \cdot 1.5 \varepsilon \\
                           & \leq &  \varepsilon \FNorm{ \matA_{\rho-k}} +  2 \varepsilon \FNorm{\matA_{\rho-k}} \\
                           &  = &  3 \varepsilon\cdot \FNorm{\matA_{\rho-k}}.
\end{eqnarray*}
The last inequality holds by our choice of $\varepsilon \in (0,1/3)$.
\end{proof}

\subsection*{Proof of Eqn.~(\ref{proof:short})}
\begin{proof}
$\Expect{\FNormS{\matE}} \le (1+\varepsilon) \FNormS{\matA -\matA_k} \rightarrow
\Expect{\FNormS{\matE} - \FNormS{\matA -\matA_k} \le \varepsilon \FNormS{\matA -\matA_k} }.
$ Now, apply Markov's inequality on the random variable $Y = \FNormS{\matE} - \FNormS{\matA -\matA_k} \ge 0$.
($Y \ge 0$ because $\matE = \matA - \matA\matZ\matZ\transp$ and $\rank(\matA\matZ\matZ\transp) = k$). This gives
$\FNormS{\matE} - \FNormS{\matA -\matA_k} \le 100 \varepsilon \FNormS{\matA -\matA_k}$ w.p. $0.99$; so,
$\FNormS{\matE}  \le \FNormS{\matA -\matA_k} + 100 \varepsilon \FNormS{\matA -\matA_k}$.
\end{proof}

\end{document}